\def\RR{{\mathbb R}}
\def\CC{{\mathbb C}}
\def\NN{{\mathbb N}}
\def\ZZ{{\mathbb Z}}
\def\A{{\mathcal A}}
\def\D{{\mathcal D}}
\def\H{{\mathcal H}}
\def\I{{\mathcal I}}
\def\M{{\mathcal M}}
\def\P{{\mathcal P}}
\def\R{{\mathcal R}}
\def\a{\alpha}
\def\b{\beta}
\def\d{\delta}
\def\g{\gamma}
\def\k{\kappa}
\def\l{\lambda}
\def\L{{\mathrm L}}
\def\R{{\mathrm R}}
\def\t{\tau}
\def\up{\upsilon}
\def\x{\xi}
\def\Ad{{\hbox{\rm Ad\,}}}
\def\1{{\mathbbm 1}}
\def\u1net{{\A^{(0)}}}
\def\diff{{\rm Diff}}
\def\diffs1{\diff(S^1)}
\def\mob{{\rm M\ddot{o}b}}
\def\mob2{{\rm M\ddot{o}b}^{(2)}}
\def\supp{{\rm supp\,}}
\def\psl2r{{\rm PSL}(2,\RR)}
\def\sl2r{{\rm SL}(2,\RR)}
\def\su11{{\rm SU}(1,1)}
\def\2dmob{{\overline{\psl2r}\times\overline{\psl2r}}}
\def\<{\langle}
\def\>{\rangle}
\def\re{\mathrm{Re}\,}
\def\im{\mathrm{Im}\,}
\def\poincare{{\P^\uparrow_+}}
\def\dom{{\mathrm{Dom}}}
\def\fct{\widetilde{\phi}}
\def\strip{\RR+i(0,\pi)}
\newcommand{\Om}{\Omega}
\DeclareMathOperator{\lspan}{span}
\newtheorem{theorem}{Theorem}[section]
\newtheorem{proposition}[theorem]{Proposition}
\theoremstyle{remark}
\title{Wedge-local fields in integrable models with bound states II. Diagonal S-matrix}
\date{}
\author{
{\bf Daniela Cadamuro} \\
e-mail: {\tt daniela.cadamuro@mathematik.uni-goettingen.de}\\
Mathematisches Institut, Universit\"at G\"ottingen \\
Bunsenstrasse 3-5, D-37073 G\"ottingen, Germany.  \\
{\bf Yoh Tanimoto}\\
e-mail: {\tt hoyt@ms.u-tokyo.ac.jp}\\
Graduate School of Mathematical Sciences, The University of Tokyo\\
3-8-1 Komaba Meguro-ku Tokyo 153-8914, Japan.\\
JSPS SPD postdoctoral fellow\\
}
\begin{document}
\maketitle
\begin{abstract}
We construct candidates for observables in wedge-shaped regions for a class of
$1+1$-dimensional integrable quantum field theories
with bound states whose S-matrix is diagonal, by extending our previous methods for scalar S-matrices.
Examples include the $Z(N)$-Ising models, the $A_{N-1}$-affine Toda field theories and some S-matrices with CDD factors.

We show that these candidate operators which are associated with elementary particles commute weakly on a dense domain.
For the models with two species of particles, we can take a larger domain of weak commutativity
and give an argument for the Reeh-Schlieder property.
\end{abstract}

\section{Introduction}\label{introduction}
Important developments in the construction of $1+1$ dimensional
quantum field theories with factorizing S-matrices have been obtained in recent years in the operator-algebraic approach.
In the class of scalar S-matrices with no poles in the physical strip, Lechner \cite{Lechner03, Lechner08}
considered a large family of analytic functions which satisfy certain regularity conditions
and constructed quantum field theories which have these functions as the two-particle S-matrix.
The idea is to construct first observables localized in an infinitely extended wedge-shaped region.
Lechner and Sch\"utzenhofer \cite{LS14} generalized this construction of wedge-local observables to theories with
a richer particle spectrum, which include among others the $O(N)$-invariant nonlinear $\sigma$-models.
More recently, Alazzawi and Lechner (see \cite{Alazzawi14}) showed that the existence of strictly local observables
in this class of models should follow if certain representations of the symmetric groups $\mathfrak{S}_n$
can be nicely intertwined.
Candidates for observables in wedges in certain massless models have also been found and their relation
to conformal field theories have been investigated by one of the authors \cite{DT11, Tanimoto12-2, BT13, LST13, Tanimoto14-1, BT15}.

In \cite{CT15-1}, we further generalized Lechner's construction to scalar models with S-matrices which have poles
in the physical strip, which are believed to correspond to the presence of bound states. In these models, the scalar
two-particle S-matrix has only one pair of poles in the physical strip,
and it is interpreted that two bosons of the same species fuse
into another boson of the same species (the Bullough-Dodd model is believed to have such properties).
For this class of S-matrices, we constructed operators $\fct(f) := \phi(f) + \chi(f)$ by adding the
bound-state operator $\chi(f)$ to the field $\phi(f)$ of Lechner, and we showed that $\fct(f)$ weakly commute
with its reflected operator $\fct'(g)$ on a common domain.
Hence this $\fct(f)$ is a good candidate for a wedge-local observable.

In this work, we extend this construction to models with a richer particle spectrum.
The models which can be treated with our methods have diagonal S-matrix and
include the $Z(N)$-Ising model and the $A_{N-1}$-affine Toda field theories
and other S-matrices with a CDD factor.
These models are characterized by $N-1$ species of particles and
an intricate pole structure in the physical strip, which include simple and double poles.
Moreover, the fusion process between the $k$-th species and the $l$-th species results in the $(k+l \mod N)$-th species,
hence the models realize a more realistic binding process (c.f.\! the Bullough-Dodd model we studied previously,
where the bound state of the same two species is again the same species).

They have already been studied before in the \textit{form factor programme} \cite{KS79, BK03, BFK06}.
However, the convergence of the expansion of $n$-point functions in terms of form factors remains an open problem
and Wightman fields for these models are not available today.
Our goal is to attain a realization of these models in the operator-algebraic framework, i.e. the Haag-Kastler axioms.

In this framework, we construct candidates for wedge-local observables by extending the construction carried out
in \cite{CT15-1} to a multi-particle situation. Our candidate operator $\fct(f)$ is given by the multi-particle component
field $\phi(f)$ of Lechner-Sch\"utzenhofer \cite{LS14} by adding a multi-particle extension of the bound state operator 
$\chi(f)$ introduced in \cite{CT15-1}. We also construct the reflected operator $\fct'(g)$ and show that
those components corresponding to ``elementary particles'' commute weakly on a dense domain.
In the $Z(N)$-Ising models, these elementary particles are those which have index $1$ or $N-1$.
They are again polarization-free generators (PFGs, an operator localized in a wedge which generates
a one-particle state from the vacuum) \cite{BBS01} but not temperate.

The question of strong commutativity remains open. Since this time it is impossible to choose $f$ such that $\chi(f)$ is positive in these models,
the mere existence of a self-adjoint extension is more complicated. We think that this domain issue of unbounded operators is
an essential feature of the models with bound states and deserves a separate study (c.f.\! \cite{Tanimoto15-1}).
We will see that even the domain of weak commutativity is subtler in general, but
the models with two species of particles, including the $Z(3)$-Ising model and the $A_2$-affine Toda field theory, behave better.
For these subclass of models, we also argue that the Reeh-Schlieder property holds, once the issue of strong commutativity is settled.

The paper is organized as follows. In Sec.\! \ref{zf}, we introduce our general notation for multi-particle Fock space
recalling the results of Lechner-Sch\"utzenhofer \cite{LS14}. Moreover, we list the general properties of
diagonal S-matrices
with poles in the physical strip, of which examples are the $Z(N)$-Ising models and the $A_{N-1}$-affine Toda field theories
and those with an extra (CDD) factor.
Sec.\! \ref{chi} is dedicated to the construction of the bound state operators $\chi(f)$, $\chi'(g)$.
We specify their domains and show symmetry properties as quadratic forms.
In Sec.\! \ref{wedge-local} we construct candidate operators $\fct(f)$ and $\fct'(g)$ and prove the weak
wedge-commutativity between the components for ``elementary particles''.
Moreover, we argue that the $Z(3)$-Ising model and the $A_2$-affine Toda field theory satisfy
the Reeh-Schlieder property.
We expect that this holds in general.
In Sec.\! \ref{sec:conclusions}, we summarize our results and discuss open problems.

\section{Zamolodchikov-Faddeev algebra}\label{zf}

Our construction of wedge-local observables is an extension of \cite{CT15-1} to a larger class of integrable models with
a richer particle spectrum. For a general overview of the program of constructing Haag-Kastler nets from
given S-matrices, see \cite[Section 2.1]{CT15-1}. We consider quantum field theories in $1+1$ dimensional Minkowski space
with factorizing S-matrices, which are characterized by a matrix-valued two-particle scattering function
fulfilling a number of properties. These properties, without poles in the physical strip, have already been treated
in the operator-algebraic framework by Lechner-Sch\"utzenhofer \cite{LS14} and Alazzawi \cite{Alazzawi14}.
In the following, we recall the mathematical framework and the notation we will use to describe these models,
following \cite{LS14}. 

\subsection{Diagonal S-matrix}\label{S-matrix}

In the model with sharp mass shells, we have the well-defined concept of one-particle states.
By isolating the irreducible representations of the Poincar\'e group, we assign them indices $\a, \b,\dots.$
As our models have factorizing S-matrix, their two-particle scattering process can be specified
by the matrix-valued function $S^{\a\b}_{\g\d}(\theta)$, where $\theta$ is the difference of rapidities of incoming particles.

In this work, we restrict ourselves to models in which two incoming particles of types $\alpha, \beta$
result in two outgoing particles of types $\beta, \alpha$.
Such an S-matrix is called ``diagonal'' and has only non-zero components $S^{\a\b}_{\b\a}(\theta)$.
We list in Section \ref{examples} examples of such diagonal S-matrices and refer to
literature for their Lagrangian description, if any.

As in the scalar case \cite{CT15-1}, our model can also include fusion processes.
Indeed, each pole of the component $S^{\a\b}_{\b\a}$ (of the so-called $s$-channel) in the physical strip corresponds to
such a fusion process, and one needs to incorporate them in order to keep locality of the model.
For simplicity, in this paper we assume that the fusion of two species is just one species.
More complicated processes appear e.g.\! in the sine-Gordon model, which we will study in a separate work \cite{CT-sine}.

\subsubsection*{The scattering data}
Now the S-matrix $S$ is specified by the following data:
\begin{itemize}
 \item \textbf{index set} $\I$, which is a finite set, $|\I| = K$.
 The elements of this set are denoted by Greek letters, such as $\a, \b, \g, \d, \mu, \nu, \up$.
 Each element corresponds to a single species of particle in the model.
 \item \textbf{charge structure}: for each index $\a$, there is the conjugate charge $\bar \a \in \I$, which is another index.
 It holds that $\bar {\bar \a} = \a$.
 \item \textbf{masses} $\{m_\a\}$: for each index $\a$, there is a positive number $m_\a > 0$, which is the mass
 of the particle of the species $\a$. We consider only massive particles.
 \item \textbf{fusion table}: to some pairs of indices $\a, \b$, there corresponds another index $\g, \a\neq\g\neq \b$.
 There is a list of all such correspondences and we call it the fusion table of the model.
 When we write $(\a\b) \to \g$, there is a correspondence from the pair $\a,\b$ to $\g$ and
 this is called a \textbf{fusion (process)}.
 In this paper, we assume that for a pair $\a, \b$, there is only one index $\g$ such that $(\a\b)\to\g$
 \footnote{We are aware that this is not the general case. We plan to investigate the sine-Gordon model where
 two breathers can fuse into different two breathers \cite{Quella99} in a separate paper \cite{CT15-3}}.
 
 We assume further that if $(\alpha\beta) \rightarrow \gamma$ is an entry of the fusion table, so are
 $(\beta \alpha) \rightarrow \gamma$, $(\g\bar \a) \to \b$, $(\g\bar\b)\to\a$ and $(\bar \a \bar \b) \rightarrow \bar\g$.
 The entries of the fusion tables are called an \textbf{$s$-channel} of the fusion (e.g.\! \cite{Quella99, Korff:2000}).
 \item \textbf{fusion angles} $\{\theta_{(\a\b)}\}$: if $(\a\b) \to \g$ is a fusion process, then there is
 a positive number $\theta_{(\a\b)} \in (0,\pi)$ and there holds:
 \begin{equation}\label{prelations}
p_{m_{\alpha}}(\zeta +i\theta_{(\alpha \beta)}) +p_{m_{\beta}}(\zeta -i\theta_{(\beta \alpha)}) = p_{m_{\gamma}}(\zeta),
 \end{equation}
 where $p_m(\zeta) = \left(\begin{array}{c} m \cosh\zeta \\ m \sinh\zeta \end{array}\right)$.
 By putting $\zeta = 0$ and considering only the real part, we can depict this relation as Figure \ref{fig:mass}.
 Eq~\eqref{prelations} resembles energy-momentum conservation, but not quite due to complex arguments. One could interpret this relation as the situation where two ``virtual'' particles fuse into a third ``real'' particle (the bounded particle) whose momentum lies in the mass shell \cite[Section 1.3.7]{Quella99}.

 \item \textbf{S-matrix components} $\{S^{\alpha \beta}_{\beta \alpha}\}$: for each pair $\a,\b$ of indices there is a meromorphic function
 $S^{\alpha \beta}_{\beta \alpha}(\zeta)$ on $\CC$  with the following properties (c.f.\! \cite[Definition 2.1]{LS14}) for $\zeta \in \mathbb{C}$:
\begin{enumerate}
  \renewcommand{\theenumi}{(S\arabic{enumi})}
  \renewcommand{\labelenumi}{\theenumi}
 \item \label{unitarity} {\bf Unitarity.} $S^{\alpha \beta}_{\beta \alpha}(\zeta)^{-1} = \overline{S^{\b\a}_{\a\b}(\bar\zeta)}$.
 \item \label{parity} {\bf Parity symmetry.} $S^{\alpha \beta}_{\beta \alpha}(\zeta) = S^{\beta \alpha}_{\alpha \beta}(\zeta)$.
 \item \label{hermitian} {\bf Hermitian analyticity.} $S^{\alpha \beta}_{\beta \alpha}(\zeta) = S^{\beta \alpha}_{\alpha \beta}(-\zeta)^{-1}$.
 \item \label{crossing} {\bf Crossing symmetry.} $S^{\alpha \beta}_{\beta \alpha}(i\pi -\zeta) = S^{\bar\beta \alpha}_{\alpha \bar\beta}(\zeta)$.
 \item \label{CPTinvariance} {\bf CPT invariance.} $S^{\alpha \beta}_{\beta \alpha}(\zeta) = S^{\bar\alpha \bar\beta}_{\bar\beta \bar\alpha}(\zeta)$.
 \item \label{bootstrap} {\bf Bootstrap equation.} If $(\alpha \beta) \rightarrow \gamma$ is a fusion process, there holds
 \begin{equation}\label{eq:bootstrap}
 S^{\gamma \nu}_{\nu \gamma}(\zeta) = S^{\alpha \nu}_{\nu \alpha}(\zeta +i\theta_{(\alpha \beta)}) S^{\beta \nu}_{\nu \beta}(\zeta -i\theta_{(\beta \alpha)}).
 \end{equation}
 \item \label{polestructure} {\bf Pole structure.} For each fusion $(\alpha \beta) \rightarrow \gamma$, $S^{\alpha \beta}_{\beta \alpha}(\zeta)$ has a simple pole at $\zeta = i\theta_{\alpha \beta}$, $0< \theta_{\alpha \beta} < \pi$, where
 \begin{align*}
 \theta_{\alpha \beta} := \theta_{(\alpha \beta)} + \theta_{(\beta \alpha)}.
 \end{align*}
 Furthermore, $S^{\a\b}_{\b\a}$ has another simple pole at $\zeta = i\theta'_{\beta\bar\alpha} := i\pi -i\theta_{\beta \bar\alpha}$ if
 and only if $(\beta \bar\alpha)$ is also a fusion. This is consistent with crossing symmetry.
 If $(\a\b) \to \g$ is a fusion process, then the pole in $S_{\bar\a\b}^{\b\bar\a}$ at $\zeta = i\theta'_{\alpha\beta}$
 is called the \textbf{$t$-channel pole}. In the physical terminology,
 we insert only fusion processes corresponding to $s$-channel poles in the fusion table.
 
 For a fusion process $(\a\b)\to\g$, we denote
 \begin{equation}\label{def:residues}
 R_{\alpha \beta}^\g := \operatorname{Res}_{\zeta = i\theta_{\alpha \beta}} S^{\alpha \beta}_{\beta \alpha}(\zeta),
 \quad R^{\prime\g}_{\alpha \beta} := \operatorname{Res}_{\zeta =i\theta'_{\alpha \beta}} S_{\bar\a\b}^{\b\bar\a}(\zeta),
 \end{equation}
 In general, we define $R_{\a\b}^\g = R_{\a\b}^{\prime\g} = 0$ if $(\a\b)\to\g$ is not a fusion process.
 
 \item \label{atzero} {\bf Value at zero.} $S^{\a\a}_{\a\a}(0) = -1$.
 \item \label{regularity} \textbf{Regularity.} Each component of $S$ has only finitely many zeros and
 there is $\epsilon > 0$ such that $\|S\|_\epsilon := \sup\{|S^{\a\b}_{\b\a}(\zeta)|: \zeta \in \RR + i(-\epsilon, \epsilon), \a,\b\in\I\} < \infty$.
\end{enumerate}

\item \textbf{Elementary particles}. 
There is a distinguished index $\up$ such that $\up \neq \bar\up$ called the elementary particle
(we will call $\bar \up$ an elementary particle as well, but we fix an index $\up$)
which has the following properties:
\begin{itemize}
\item $S^{\up\b}_{\b\up}$ has only simple poles (no higher poles) or no pole at all for each $\b$.
\item $S^{\up\b}_{\b\up}$ has at most two simple poles in the physical strip $\RR + i(0,\pi)$,
one corresponding to the fusion process $(\up\b) \to \g$ and the other which is a $t$-channel pole
for the process $(\g\bar\up)\to \b$ (this properties is known as \textbf{maximal analyticity}).
\item The fusion angle $\theta_{(\a\up)}$ does not depend on $\a$ and it holds that
$\theta_{(\a\up)} = \theta_{(\a\bar\up)}$.
\item It holds that $\theta_{(\up\a)} + \theta_{(\bar\up\a)} = \pi$
for any $\a$ which is not an elementary particle.
\item If $(\up\up)\to\k$ is a fusion process, we say that $\k$ is a composite particle.
Recursively, if $(\up\b) \to \g$ where $\b$ is composite,
then $\g$ is again a composite particle. We assume that each index in $\I$ is either elementary or composite.
In other words, one can arrive at any index by making successive fusions by an elementary particle. 
\item Let $\k$ be the index such that $(\up\up)\to\k$.
Then $\k$ is the unique index for which $S_{\up\k}^{\k\up}(\zeta)$ has a pole in $\RR + i[0,\theta_{(\k\up)}]$.
 \item \label{positive-residue} {\bf Positive residue.} For each fusion process $(\a\up)\to\g$
 including $\up$,
 it holds that $R_{\a\up}^\g \in i\RR_+$. As we remarked in \cite{CT15-1}, this property is related with
 the unitarity of Hamiltonian (if it exists at all).
 Note that, from other properties, it is automatic that $R_{\a\up}^\g$ is purely imaginary
 but the condition $R_{\a\up}^\g \in  i\RR_+$ does not follow and it is crucial for our main results.
\end{itemize}

\end{itemize}
Note that \ref{regularity} refers to the supremum in a neighborhood of the real line and not
of the whole physical strip, because $S$ now has poles in our cases.
In the cases without poles \cite{LS14, Alazzawi14}, the condition \ref{regularity} has been used only in the (attempt at a) proof of modular nuclearity.
We need it already here, when we apply the Cauchy theorem, as we will see.

Note that the fusion angles $\theta_{(\alpha \beta)}$ depend only on the mass ratios of the particles involved
(see Figure \ref{fig:mass}).
\begin{figure}[ht]
\centering
\begin{tikzpicture}[line cap=round,line join=round,>=triangle 45,x=1.0cm,y=1.0cm]
\clip(-3.37,-0.74) rectangle (4.51,4.76);
\draw [domain=-3.37:4.51] plot(\x,{(--7.88-0*\x)/3.94});
\draw (0.5,-0.74) -- (0.5,4.76);
\draw(0.5,2) circle (1.61cm);
\draw(0.5,2) circle (2.58cm);
\draw [->,line width=1.2pt] (0.5,2) -- (1.18,3.46);
\draw [->,line width=1.2pt] (0.5,2) -- (3.08,2);
\draw(0.5,2) circle (2.4cm);
\draw [->,line width=1.2pt] (0.5,2) -- (2.4,0.54);
\draw [->] (2.4,0.54) -- (3.08,2);
\draw [->] (1.18,3.46) -- (3.08,2);
\draw (0.88,3) node[anchor=north west] {$\scriptstyle m_\alpha$};
\draw (1.1,1.25) node[anchor=north west] {$\scriptstyle m_\beta$};
\draw (1.5,2.44) node[anchor=north west] {$\scriptstyle m_\gamma$};
\draw [shift={(0.5,2)}] plot[domain=0:1.13,variable=\t]({1*0.34*cos(\t r)+0*0.34*sin(\t r)},{0*0.34*cos(\t r)+1*0.34*sin(\t r)});
\draw (0.68,2.53) node[anchor=north west] {$\scriptstyle \theta_{(\alpha\beta)}$};
\draw [shift={(0.5,2)}] plot[domain=-0.66:0,variable=\t]({1*0.47*cos(\t r)+0*0.47*sin(\t r)},{0*0.47*cos(\t r)+1*0.47*sin(\t r)});
\draw (1.04,1.92) node[anchor=north west] {$\scriptstyle \theta_{(\beta\alpha)}$};
\draw [->] (0.5,2) -- (-1.4,3.46);
\draw (-0.70,2.6) node[anchor=north west] {$\scriptstyle m_{\bar\beta}$};
\draw [shift={(0.5,2)}] plot[domain=1.13:2.49,variable=\t]({1*0.44*cos(\t r)+0*0.44*sin(\t r)},{0*0.44*cos(\t r)+1*0.44*sin(\t r)});
\draw (-0.3,3) node[anchor=north west] {$\scriptstyle \theta_{(\bar\beta\gamma)}$};
\draw [->] (-1.4,3.46) -- (1.18,3.46);
\draw (-0.35,3.9) node[anchor=north west] {$\scriptstyle m_\gamma$};
\end{tikzpicture}
\caption{The mass parallelogram for fusion process $(\a\b)\to\g$. The length of a vector is proportional to
the mass of the corresponding particle.}
 \label{fig:mass}
\end{figure}
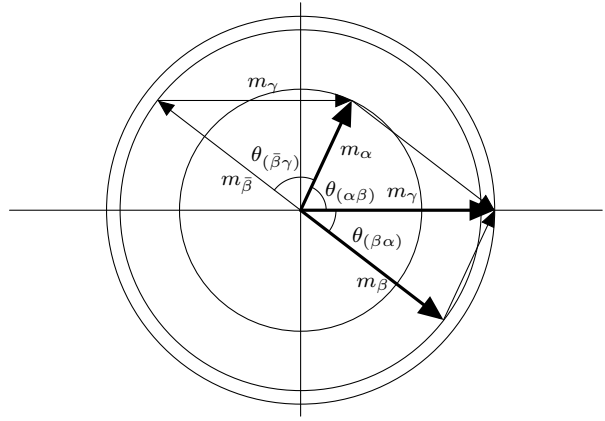

As a consequence of the assumptions above, there are certain relations between these quantities.

\begin{enumerate}
  \renewcommand{\theenumi}{(P\arabic{enumi})}
  \renewcommand{\labelenumi}{\theenumi}

\item {\bf Consequence of \ref{CPTinvariance}.}\label{p:CPT}
If $(\a\b)\to\g$ is a fusion, then there hold $\theta_{(\alpha \beta)} = \theta_{(\bar\alpha \bar\beta)}$, $\theta_{\alpha \beta} = \theta_{\bar\alpha \bar\beta}$ and $R_{\alpha \beta}^\g = R_{\bar\alpha \bar\beta}^{\bar\g}$.  

\item {\bf Consequence of Eq.~\eqref{prelations}.}\label{p:angles}
If $(\a\b)\to\g$ is a fusion, then there hold $\theta_{(\alpha \beta)} = \theta_{(\gamma \bar\beta)}$ and
$\theta_{\gamma \bar\beta}= \pi -\theta_{(\beta \alpha)}$ (see also Figure \ref{fig:mass}).

\item {\bf Consequence of \ref{crossing}.}
 \begin{equation}\label{RpR}
 R_{\alpha\beta}^{\prime\gamma}
 = \operatorname{Res}_{\zeta = i\pi -i\theta_{\alpha \beta}} S^{\bar\beta \alpha}_{\alpha \bar\beta}(\zeta)
 = \operatorname{Res}_{\zeta = i\pi -i\theta_{\alpha \beta}}S^{\alpha \beta}_{\beta \alpha}(i\pi -\zeta)
 = -R_{\alpha \beta}^\g. 
 \end{equation}

\item {\bf Consequence of \ref{bootstrap} and \ref{atzero}.}\label{p:bootstrap} Shifting $\zeta \rightarrow \zeta +i\theta_{(\beta \alpha)}$ in Eq.~\eqref{eq:bootstrap}, we find
 \begin{align*}
 S^{\gamma \nu}_{\nu \gamma}(\zeta +i\theta_{(\beta \alpha)}) = S^{\alpha \nu}_{\nu \alpha}(\zeta +i\theta_{\alpha \beta})S^{\beta \nu}_{\nu \beta}(\zeta).
 \end{align*}
 At $\nu = \beta$, we have
 \begin{align*}
 S^{\gamma \beta}_{\beta \gamma}(\zeta +i\theta_{(\beta \alpha)}) = S^{\alpha \beta}_{\beta \alpha}(\zeta +i\theta_{\alpha \beta})S^{\beta \beta}_{\beta \beta}(\zeta). 
\end{align*}
If $(\a\b)\to\g$ is a fusion, then so is $(\g\bar\b)\to\a$ and $S^{\g\b}_{\b\g}$ has a $t$-channel pole at $i\theta_{\g\bar\b}'$,
which is equal to $i\theta_{(\beta \alpha)}$ as we saw in \ref{p:angles}.
Furthermore, $S^{\alpha \beta}_{\beta \alpha}(\zeta +i\theta_{\alpha  \beta})$ has a pole at $\zeta =0$, while $S^{\b\b}_{\b\b}(\zeta)$
is unitary when $\zeta \in \RR$, therefore,
 \begin{align*}
 \operatorname{Res}_{\zeta= i\theta_{(\g \bar\b)}'} S^{\gamma \beta}_{\beta \gamma}(\zeta) = \operatorname{Res}_{\zeta =0} S^{\gamma \beta}_{\beta \gamma}(\zeta +i\theta_{(\beta \alpha)}) = \operatorname{Res}_{\zeta =i\theta_{\alpha \beta}} S^{\alpha \beta}_{\beta \alpha}(\zeta)S^{\beta \beta}_{\beta \beta}(0).
 \end{align*}
 Moreover, $S^{\beta \beta}_{\beta \beta}(0) = - 1$ by \ref{atzero}.
 Therefore, the equation above yields $R_{\gamma\bar\beta}^{\prime\alpha}= -R_{\alpha \beta}^\g$, which also implies
 by \eqref{RpR} that $R_{\gamma \bar\beta}^\a= R_{\alpha \beta}^\g$. 

 \item {\bf Consequence of simple fusion process.}\label{p:table}
We assumed that, if $(\a\b)\to\g$, then $\a\neq\g\neq\b$.
It follows that $R^\a_{\a\b} = R^\b_{\a\bar\a} = 0$.

\item {\bf Consequence of properties of elementary particles.}\label{p:elempart}
We assumed that any index $\alpha$ could be obtained with successive fusions of an elementary particle $\up$ (we use the notation $\alpha = \up^k$ for $k$ fusions, e.g. $(\up \up) \rightarrow \up^2$ ).  The following relations hold:
\begin{equation}\label{relations:elempart}
\theta_{(\up^k \up)}= \theta_0, \quad \theta_{(\up \up^k)} = k \theta_0, \quad m_{\up^k} =m_{\up} \frac{\sinh (k \theta_0)}{\sinh \theta_0},
\end{equation}
where $\theta_0$ is a constant (depending on the model). The first relation follows since $\theta_{(\alpha \up)}$ depends only on $\up$. The second and third relations are proved by induction on $k$. Specifically, the third relation is proved by taking the real part of Eq.~\eqref{prelations} with $\alpha = \up^k$, $\beta = \up$ and $\gamma = \up^{k+1}$ as below
\begin{equation*}
m_{\up^k} \cosh \zeta \cosh \theta_0 + m_{\up} \cosh \zeta \cosh \theta_{(\up \up^k)} = m_{\up^{k+1}} \cosh \zeta
\end{equation*}
and using the second relation in \eqref{relations:elempart}. The second relation follows by taking the imaginary part of Eq.~\eqref{prelations} with $\alpha = \up^{k+1}$, $\beta = \up$ and $\gamma= \up^{k+2}$, which is given by the following expression
 \begin{equation*}
m_{\up^{k+1}} \sinh \zeta \sinh \theta_0 - m_{\up} \sinh \zeta \sinh \theta_{(\up \up^{k+1})} =0
\end{equation*}
and using the third relation in \eqref{relations:elempart}.
\end{enumerate} 

We note that parity symmetry \ref{parity} implies that the bootstrap equation
holds for the flipped indices:
 \begin{equation*}
  S^{\nu\g}_{\g\nu}(\zeta) = S^{\nu\a}_{\a\nu}(\zeta +i\theta_{(\a\b)}) S^{\nu\b}_{\b\nu}(\zeta -i\theta_{(\b\a)}). 
 \end{equation*}

\subsection{Examples}\label{examples}

\subsubsection{The S-matrix of the \texorpdfstring{$Z(N)$}{Z(N)}-Ising model}\label{SzN}

The $Z(N)$-Ising models are conjectured integrable quantum field theories which should be obtained as the scaling limit of
certain statistical models \cite{KS79}. The form factors of fields in the models have been
investigated in \cite{BFK06}.

\begin{flushleft}
  \textbf{Particle spectrum}
\end{flushleft}
Let us fix $N \in \ZZ, N\ge 3$.
The $Z(N)$-Ising model contains $N-1$ species of particles, labeled by $\a \in \I = \{1,\cdots N-1\}$.
There are fusion processes where two particles of type $\alpha, \beta \in \I$
fuse into another particle of type $\alpha + \beta \mod N$, for all $\alpha + \beta \neq 0 \mod N$.
If $\alpha + \beta = N$, there is no fusion process. Moreover, the conjugate charge of a particle of type $\alpha$ is $\bar\alpha=N-\alpha$. 

The masses of the particles are 
\begin{equation}\label{masses}
m_\alpha = m_1 \frac{\sin \dfrac{\alpha\pi }{N}}{\sin \dfrac{\pi}{N}},
\end{equation}
where $m_1 > 0$ is the mass of the first particle, which is arbitrary.

\begin{flushleft}
  \textbf{Fusion table and angles}
\end{flushleft}
The angles $\theta_{(\alpha \beta)}$, yielding the position of the simple poles, can be obtained using \eqref{prelations} and \eqref{masses}.
One finds explicitly:
\begin{equation}\label{angles}
\theta_{(\alpha \beta)} = \begin{cases}  \frac{\beta\pi}{N},  &  \alpha + \beta < N, \\  \frac{(N-\beta)\pi}{N}, &  \alpha + \beta >  N.  \end{cases}
\end{equation}
Summarizing, the fusion table of the $Z(N)$-Ising model is (c.f\! \cite[Table 3.1]{Quella99},
there seems to be a mistake in the Bindungswinkel (fusion angle) of the second case):

\noindent
\begin{tabular}{|c|c|c|c|}
 \hline
 processes & rapidities of particles & fusion angles & cases\\
 \hline
 $(\a\b) \longrightarrow \a + \b $ & $\theta_{(\a\b)} = \frac{\b\pi}N, \theta_{(\b\a)} = \frac{\a\pi}N$ & $\theta_{\a\b} = \frac{(\a+\b)\pi}N$ & $\a + \b < N$ \\
 \hline 
 no fusion &  &  & $\a + \b = N$ \\
 \hline
 $(\a\b) \longrightarrow 2N - \a - \b$ & $\theta_{(\a\b)} = \frac{(N-\b)\pi}N, \theta_{(\b\a)} = \frac{(N-\a)\pi}N$ & $\theta_{\a\b} = \frac{(2N-\a-\b)\pi}N$ & $\a + \b > N$ \\
 \hline
\end{tabular}

\begin{flushleft}
  \textbf{S-matrix components and their poles}
\end{flushleft}
The component of the S-matrix corresponding to $\a = 1$ particles is given by 
\begin{align}\label{Sone:z}
S_{11}^{11}(\theta) = \dfrac{\sinh \dfrac{1}{2}\left(\theta +\dfrac{2\pi i}{N}\right)}{\sinh \dfrac{1}{2}\left(\theta -\dfrac{2i\pi}{N}\right)},
\end{align}
which has a simple pole at $\theta =i\theta_{11}= \frac{2 i\pi}{N}$ corresponding to the bound state $(11) \to 2$.
The value $\theta_{11}= \frac{2 i\pi}{N}$ can be computed using \eqref{prelations} and \eqref{masses} as well.

All the other S-matrix elements $S^{\beta \alpha}_{\alpha \beta}(\theta)$ can be computed using the Bootstrap equation \ref{bootstrap},
and one obtains the following expression (see \cite[Section 3.2]{Quella99} for a more detailed account of the models and
their bound states, yet one should be warned that the remark below about the product is missing there):
\begin{equation}\label{S}
S_{\beta \alpha}^{\alpha \beta}(\theta) = \prod_{m=-(\alpha -1)}^{ \alpha -1 *} \dfrac{\sinh \frac{1}{2}(\theta +\frac{i\pi}{N}(\beta+m+1)) \sinh \frac{1}{2}(\theta + \frac{i\pi}{N}(\beta+m -1))}{\sinh \frac{1}{2}(\theta -\frac{i\pi}{N}(\beta -m -1))\sinh \frac{1}{2}(\theta -\frac{i\pi}{N}(\beta -m +1))},
\end{equation}
where $*$ means that the index runs in steps of 2, namely $-(\a-1), -(\a-3),\cdots, \a-3, \a-1$.

If $\alpha + \beta < N$, this S-matrix element has two simple poles in the upper strip $\mathbb{R} +i(0,\pi)$
at $\zeta =i\theta_{\alpha \beta}= \frac{i\pi}{N}(\alpha + \beta)$, corresponding to the bound state $(\alpha \beta)\to \gamma = \a+\b$.
If furthermore $\a \neq \b$, there is another simple pole at $\zeta =i\theta'_{\beta\bar\alpha}= \frac{i\pi}{N}|\alpha -\beta|$.
There are double poles at a distance of $\frac{2i\pi}{N}$ between $\frac{i\pi}{N}|\alpha -\beta|$ and $\frac{i\pi}{N}(\alpha + \beta)$.

If $\alpha + \beta > N$, there is a simple pole at $\frac{i\pi}{N}(2N - \alpha - \beta)$,
and if $\a \neq \b$, another simple pole at $\frac{i\pi}{N}|\alpha -\beta|$,
and there are double poles at a distance of $\frac{2\pi i}{N}$ between them.

The properties \ref{unitarity}--\ref{polestructure} and \ref{regularity} are straightforward, and we check them in
\ref{withcdd} in more generality.
\ref{atzero} is also well-known to the experts, yet has not been very often used in computations.
To check it, one only has to note that in the product \eqref{S} with $\a=\b$ there is the cancellation
of a pole and a zero, and $S^{\a\a}_{\a\a}$ has only one simple pole in the physical strip (all the
other poles are double).

The properties of the elementary particles are new and we will check them below.

\begin{flushleft}\label{sec:elempartZ}
  \textbf{Elementary particles}
\end{flushleft}
We declare that the particles corresponding to the indices $1$ and $N-1$ are the elementary particles of the model.
In order to fix the notation, we take $\up = 1$.
It is clear that any index $\a \in I = \{1,\cdots, N-1\}$ can be achieved by $\a$ times fusion processes starting with $1$.

By \eqref{S}, the components 
\[
 S_{\beta 1}^{1 \beta}(\zeta) =
 \frac{\sinh \frac{1}{2}(\zeta +\frac{i\pi}{N}(\beta+1)) \sinh \frac{1}{2}(\zeta + \frac{i\pi}{N}(\beta-1))}{\sinh \frac{1}{2}(\zeta -\frac{i\pi}{N}(\beta-1))\sinh \frac{1}{2}(\zeta -\frac{i\pi}{N}(\beta+1))}
\]
have clearly only two simple poles at
$\zeta = \frac{i\pi}{N}(\beta-1)$ (if $\beta \neq 1)$ and $\zeta = \frac{i\pi}{N}(\beta+1)$ and no other poles.
Hence, by \ref{CPTinvariance}, also the components $S^{N-1\; \beta}_{\beta\; N-1}(\zeta) = S^{1 \bar\beta}_{\bar\beta 1}(\zeta)$
have only simple pole(s) at $\zeta = \frac{i\pi}{N}(N-\beta-1)$ (and at $\zeta = \frac{i\pi}{N}(N-\beta+1)$ if $\b \neq 1$).
The poles of the components $S^{\alpha \;N-1}_{N-1 \;\alpha}(\zeta)= S^{\bar \alpha 1}_{1 \bar \alpha}(\zeta)$ are also known from the above.
From these observation, it is clear that $S^{12}_{21}$ is the only component among $\{S^{1 \beta}_{\beta 1}\}$
which has a pole in $\RR + i[0,\frac{i\pi}N]$ (actually at $\frac{i\pi}N$).

Now, all the other properties of elementary particle are easy except
positivity of the residues. Actually, we can show that that $\operatorname{Res}_{\zeta =\frac{i\pi(\alpha + \beta)}{N}} S^{\alpha \beta}_{\beta \alpha}(\zeta) \in i \mathbb{R}_+$.

Let us first consider the case $\alpha \leq \beta$ and $\alpha + \beta <N$.
The pole at $\zeta = \frac{i\pi(\alpha + \beta)}{N}$ appears in Eq.~\eqref{S} in the sinh factor on the right hand side in the denominator for $m=-(\alpha -1)$. Hence, the residue at this pole is given by
\begin{multline*}
\operatorname{Res}_{\zeta =\frac{i\pi(\alpha + \beta)}{N}} S^{\alpha \beta}_{\beta \alpha}(\zeta) =\\
 2i \dfrac{\sin \frac{\pi}{N}(\beta+1) \sin \frac{\pi}{N}\beta}{\sin \frac{\pi}{N} }\prod_{m=-(\alpha -1)+2}^{ \alpha -1 *} \dfrac{\sin \frac{\pi}{2N}(\alpha +2\beta +m+1) \sin \frac{\pi}{2N}(\alpha +2\beta +m -1)}{\sin \frac{\pi}{2N}(\alpha + m +1)\sin \frac{\pi}{2N}(\alpha +m -1)},
\end{multline*}
Since $-(\alpha -1)+2\leq m \leq \alpha -1$ and $1\leq \alpha, \beta <N-1, \a + \b < N$ (the last one by assumption),
then all the arguments of the sine in the above formula are between $0$ and $\pi$, where the sine is indeed positive.

The case $\alpha > \beta$ follows from parity \ref{parity} and
the case $\alpha + \beta >N$ is automatic using CPT \ref{CPTinvariance}.

\subsubsection{S-matrix with CDD factor}\label{withcdd}

We treat more examples of S-matrices with the same mass spectrum and the fusion table as those of the $Z(N)$-Ising model.
In this class of diagonal $S$-matrices, given an $S^{11}_{11}$ with suitable properties,
one can construct the full $S$-matrix by formula \eqref{generalS} below
and show that it fulfills properties of Section \ref{S-matrix}.

\paragraph{General properties}
As the model has the same mass spectrum and fusion processes as the $Z(N)$-Ising model, $S$ must have the same pole structure as
the S-matrix ${S_{Z(N)}}$ of that model. Hence, we can write
$S^{11}_{11}(\zeta) = {S_{Z(N)}}^{11}_{11}(\zeta){S_{\textrm{CDD}}}^{11}_{11}(\zeta)$, where ${S_{\textrm{CDD}}}^{11}_{11}(\zeta)$ has no pole in the physical strip. 
${S_{\textrm{CDD}}}^{11}_{11}$ is called a CDD factor \cite{CDD56}.
We assume that $S^{11}_{11}$ satisfies
\begin{itemize}
 \item Unitarity \ref{unitarity},
 \item Hermitian analyticity \ref{hermitian},
 \item Periodicity $S_{11}^{11}(\zeta) = S_{11}^{11}(\zeta + 2\pi i)$,
 \item Bootstrap consistency $\prod_{j=0}^{N-1}S^{11}_{11}\left( \zeta + \frac{2\pi i}{N}j \right) = 1$.
 \item $S^{11}_{11}(0) = -1$,
 \item $\lim_{\epsilon \searrow 0} \; S^{11}_{11}\left(\epsilon + \frac{2k i \pi}{N}\right) S^{11}_{11}\left( \epsilon - \frac{2 k \pi i}{N}\right) = \begin{cases}  -1,  & \text{ if }  k=1, \\  +1, & \text{ if }  k>1,  \end{cases}$
 \item  $S^{11}_{11}$ has only finitely many zeros in the physical strip and they are configured as follows:
\begin{enumerate}
         \item Fix an integer $1\le k \le N-1$. For $B \in \CC$ such that $k-1 < \re B < k$, the multiplicities of zeros of
         $S_{11}^{11}$ at $\frac{i\pi B}N, \frac{i\pi \overline{B}}N, \frac{i\pi (2k-B)}N, \frac{i\pi (2k-\overline{B})}N$
         are the same (the multiplicities can be $0$).
         \item Fix an integer $1\le k \le N-1$. For $B \in \CC$ such that $\re B = k$, the multiplicities of zeros of
         $S_{11}^{11}$ at $\frac{i\pi B}N, \frac{i\pi \overline{B}}N$
         are the same.
         \item Fix an odd integer $1 \le k \le N-1$. For $B \in \RR$ such that $k-1 < \re B \le k$, the multiplicities of zeros of
         $S_{11}^{11}$ at $\frac{i\pi B}N, \frac{i\pi (2k-B)}N$
         are the same. If $\re B = k$, it should have a zero of even degree.
         \item $S^{11}_{11}$ does not have zeros at $\frac{2\pi k i}N$, $k\in\ZZ$.
       \end{enumerate}
 \item $ {S_{\textrm{CDD}}}^{11}_{11}( \zeta ) \in \RR_+$ at the points $\zeta = \frac{2\pi pi}{N}, p \in \ZZ$.
\item $S^{11}_{11}$ is bounded both below and above in a neighborhood of the point of infinity.
\end{itemize}
There are meromorphic functions which have zeros exactly at those points specified above.
They are called Blaschke products and can be explicitly written, correspondingly to the three cases above, as
\begin{align*}
 {S_B}_{11}^{11}(\zeta)_1 &:= \frac{\sinh \frac{1}{2}\left( \zeta -\frac{i\pi B}{N}  \right)\sinh \frac{1}{2}\left( \zeta -\frac{i\pi \overline{B}}{N}  \right)\sinh \frac{1}{2} \left( \zeta - \frac{i\pi(2k-B)}{N} \right)\sinh \frac{1}{2} \left( \zeta - \frac{i\pi(2k-\overline{B})}{N}\right)}{\sinh \frac{1}{2}\left( \zeta +\frac{i\pi B}{N}  \right)\sinh \frac{1}{2}\left( \zeta +\frac{i\pi \overline{B}}{N}  \right)\sinh \frac{1}{2} \left( \zeta + \frac{i\pi(2k-B)}{N} \right)\sinh \frac{1}{2} \left( \zeta + \frac{i\pi(2k-\overline{B})}{N} \right)} \\
 {S_B}_{11}^{11}(\zeta)_2 &:= \frac{\sinh \frac{1}{2}\left( \zeta -\frac{i\pi B}{N}  \right)\sinh \frac{1}{2}\left( \zeta -\frac{i\pi \overline{B}}{N}\right)}{\sinh \frac{1}{2}\left( \zeta +\frac{i\pi B}{N}  \right)\sinh \frac{1}{2}\left( \zeta +\frac{i\pi \overline{B}}{N}\right)} \\
 {S_B}_{11}^{11}(\zeta)_3 &:= \frac{\sinh \frac{1}{2}\left( \zeta -\frac{i\pi B}{N}  \right)\sinh \frac{1}{2}\left( \zeta -\frac{i\pi(2k-B)}{N} \right)}{\sinh \frac{1}{2}\left( \zeta +\frac{i\pi B}{N}  \right)\sinh \frac{1}{2} \left( \zeta + \frac{i\pi(2k-B)}{N} \right)}
\end{align*}
We will later on refer to these Blaschke products as simply ${S_B}_{11}^{11}(\zeta)$.

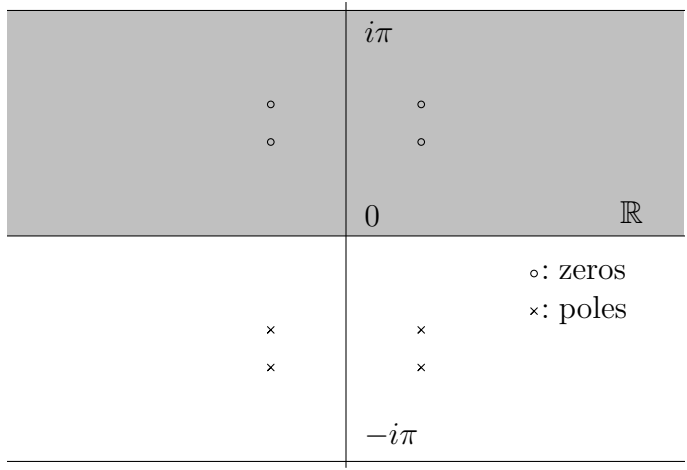
\begin{figure}[ht]
\centering
\definecolor{qqqqff}{rgb}{0,0,0}
\begin{tikzpicture}[line cap=round,line join=round,>=triangle 45,x=1.0cm,y=1.0cm, scale=0.5]
\clip(-9,-6.2) rectangle (9,6.2);
\fill[line width=0pt,fill=black,fill opacity=0.25] (-9,6) -- (-9,0) -- (9,0) -- (9,6) -- cycle;
\draw [domain=-9.5:11.04] plot(\x,{(-0-0*\x)/2});
\draw (0,-6.16) -- (0,6.22);
\draw [domain=-9.5:11.04] plot(\x,{(--48-0*\x)/8});
\draw [domain=-9.5:11.04] plot(\x,{(-48-0*\x)/8});
\draw (7,1.22) node[anchor=north west] {$\mathbb{R}$};
\draw (0.2,6) node[anchor=north west] {$i\pi$};
\draw (0.2,1.1) node[anchor=north west] {$0$};
\draw (0.2,-4.7) node[anchor=north west] {$-i\pi$};
\draw (4.9,-0.5) node[anchor=north west] {$: \textrm{zeros}$};
\draw (4.9,-1.3) node[anchor=north west] {$: \textrm{poles}$};
\begin{scriptsize}
\draw [color=qqqqff] (-2,2.5) circle (2.5pt);
\draw [color=qqqqff] (2,2.5) circle (2.5pt);
\draw [color=qqqqff] (2,3.5) circle (2.5pt);
\draw [color=qqqqff] (-2,3.5) circle (2.5pt);
\draw [color=qqqqff] (2,-2.5)-- ++(-2.5pt,-2.5pt) -- ++(5.0pt,5.0pt) ++(-5.0pt,0) -- ++(5.0pt,-5.0pt);
\draw [color=qqqqff] (2,-3.5)-- ++(-2.5pt,-2.5pt) -- ++(5.0pt,5.0pt) ++(-5.0pt,0) -- ++(5.0pt,-5.0pt);
\draw [color=qqqqff] (-2,-2.5)-- ++(-2.5pt,-2.5pt) -- ++(5.0pt,5.0pt) ++(-5.0pt,0) -- ++(5.0pt,-5.0pt);
\draw [color=qqqqff] (-2,-3.5)-- ++(-2.5pt,-2.5pt) -- ++(5.0pt,5.0pt) ++(-5.0pt,0) -- ++(5.0pt,-5.0pt);
\draw [color=qqqqff] (5,-1) circle (2.5pt);
\draw [color=qqqqff] (5,-2)-- ++(-2.5pt,-2.5pt) -- ++(5.0pt,5.0pt) ++(-5.0pt,0) -- ++(5.0pt,-5.0pt);
\end{scriptsize}
\end{tikzpicture}
\caption{Zeros and poles of the Blaschke factor ${S_B}_{11}^{11}$ with $2 < \re B < 3, \im B \neq 0$ and $N = 6$.
The shaded area is the physical strip.}
 \label{fig:blaschke}
\end{figure}

Bootstrap consistency is an equality between meromorphic functions, hence we can translate the variable $\zeta$
and obtain
\begin{equation}\label{identity}
\prod_{j=l+1}^{2N+l-1 *}S^{11}_{11}\left( \zeta + \frac{i\pi}{N}j \right) = 1
\end{equation}
for arbitrary $l$, by rewriting the product in steps of two. Note that this identity implies that all indices can be read \emph{mod N},  i.e.
\begin{multline*}
S^{k \ell}_{\ell k}(\zeta) =  \prod_{m = -(k -1)}^{k -1 *} \prod_{n = -(\ell-1)}^{\ell-1 *} S^{11}_{11}\Big( \zeta  + \frac{i\pi (m+n)}{N} \Big) \\
=  \prod_{m = -(k -1)}^{k -1 *} \prod_{n = -(N+\ell-1)}^{N+\ell-1 *} S^{11}_{11}\Big( \zeta  + \frac{i\pi (m+n)}{N} \Big) = S^{k,  \ell +N}_{\ell +N, k}(\zeta).
\end{multline*}

\paragraph{Examples.}
Functions $S^{11}_{11}$ of the following form satisfy the properties specified above:
\begin{equation}\label{eq:CDD}
S^{11}_{11}(\zeta) = {S_{Z(N)}}^{11}_{11}(\zeta) \prod_j {S_{B_j}}_{11}^{11}(\zeta),
\end{equation}
where ${S_{Z(N)}}^{11}_{11}$ is the matrix component of the $Z(N)$-Ising model we studied in Section \ref{SzN},
${S_{B_j}}^{11}_{11}$ are the Blaschke products explained above and the product is finite, so that $S^{11}_{11}$
has only finitely many zeros and is bounded in a neighborhood of the point of infinity.
For this particular example, the CDD factor is the product part:
${S_{\textrm{CDD}}}^{11}_{11}( \zeta ) = \prod_j {S_{B_j}}_{11}^{11}(\zeta)$.

It is straightforward to check the above mentioned properties of this $S^{11}_{11}$.
Indeed, the first five are easy by noting that ${S_{B}}^{11}_{11}(0) = 1$.
The limit
\[
 \lim_{\epsilon \searrow 0} \; S^{11}_{11}\left(\epsilon + \frac{2k i \pi}{N}\right) S^{11}_{11}\left( \epsilon - \frac{2 k \pi i}{N}\right)
\]
can be also separately checked: for ${S_{Z(N)}}^{11}_{11}$ it is straightforward to check that the limit is $-1$
and for ${S_{B}}^{11}_{11}$ the function is continuous and the limit is $1$ by unitarity and hermitian analyticity.
The properties of zeros are exactly encoded in the Blaschke products.
For the last property, we note that ${S_{B}}^{11}_{11}(t) \in \RR$ for $t\in i\RR$ by hermitian analyticity
and it is continuous. Furthermore, as we saw, ${S_{B}}^{11}_{11}(0) = 1$.
Finally, in order to check the sign at $\frac{2\pi pi}{N}$, $p\in \ZZ$, we only have to
note that ${S_{B}}^{11}_{11}(t)$ has always even number of poles (including multiplicity)
between each interval $(\frac{i\pi(k-1)}N,\frac{i\pi(k+1)}N)$, where $k$ is odd. Then the value at
$\frac{2\pi pi}{N}$ must be positive.

\paragraph{Full S-matrix}

For such a given $S_{11}^{11}$, we construct the full S-matrix components as
\begin{equation}\label{generalS}
S^{k \ell}_{\ell k}(\zeta) = \prod_{m = -(\ell -1)}^{\ell -1 *} \prod_{n = -(k-1)}^{k-1 *} S^{11}_{11}\left( \zeta + \frac{i\pi (m+n)}{N} \right),
\end{equation}
where $*$ means that the index runs in steps of $2$. It is now obvious that the $S$-matrix is uniquely fixed by $S^{11}_{11}$.

Let us now proceed to verify properties \ref{unitarity}--\ref{atzero} for the full S-matrix components
from these properties.

\paragraph{\ref{unitarity}}  We compute
\begin{equation*}
\overline{S^{\ell k}_{k \ell}(-\bar\zeta)} 
=\prod_{m = -(k -1)}^{k -1 *} \prod_{n = -(\ell-1)}^{\ell-1 *} \overline{S^{11}_{11}\left( -\left( \overline{\zeta  + \frac{i\pi (m+n)}{N}}\right) \right)}= S^{k \ell}_{\ell k}(\zeta),
\end{equation*}
where the first equality follows from renaming the indices $m$ into $-m$ and $n$ into $-n$ and by reordering the product, and in the last equality we assumed that $S^{11}_{11}$ fulfills \ref{unitarity}.

\paragraph{\ref{parity}} This follows from the fact that the indices $k,\ell,m,n$ appear symmetrically in the following expression
\begin{align*}
S^{\ell k}_{k \ell}(\zeta) =  \prod_{m = -(k -1)}^{k -1 *} \prod_{n = -(\ell-1)}^{\ell-1 *} S^{11}_{11}\left( \zeta  + \frac{i\pi (m+n)}{N} \right) = S^{k \ell}_{\ell k}(\zeta).
\end{align*}

\paragraph{\ref{hermitian}} This follows from the following computation:
\begin{equation*}
S^{\ell k}_{k \ell}(-\zeta)^{-1} 
                                                        =  \prod_{m = -(k -1)}^{k -1 *} \prod_{n = -(\ell-1)}^{\ell-1 *} S^{11}_{11}\left( \zeta - \frac{i\pi (m+n)}{N} \right)
= S^{k \ell}_{\ell k}(\zeta),
\end{equation*}
where in the first equality we used that $S^{11}_{11}$ fulfills property \ref{hermitian}
and in the last equality we renamed the indices $m$ into $-m$ and $n$ into $-n$.

\paragraph{\ref{crossing}} We compute
\begin{align*}
S^{\bar k \ell}_{\ell \bar k}(\zeta) 
                                                                 &= \prod_{m = -(\ell -1)}^{\ell -1 *} \prod_{n = N-k+1}^{N+k-1 *} S^{11}_{11}\left( \zeta  + \frac{i\pi (m+n)}{N} \right)^{-1}\\
                                                                 &= \prod_{m = -(\ell -1)}^{\ell-1 *} \prod_{n' = -(k-1)}^{k-1 *} S^{11}_{11}\left( -i\pi -\zeta + \frac{i\pi (m+n')}{N} \right)  
= S^{\ell k}_{k \ell}(i\pi -\zeta),
\end{align*}
where in the first equality we used \eqref{identity},
in the second equality we renamed $n-N$ into $n'$
and used the fact that $S^{11}_{11}$  fulfills property \ref{hermitian},
and the last equality follows from the $2\pi i$-periodicity of $S^{11}_{11}$.

\paragraph{\ref{CPTinvariance}} We compute
\begin{align*}
S^{\bar k \bar \ell}_{\bar \ell \bar k}(\zeta) 
                                                                                   &=\prod_{m = -(N-\ell -1)}^{N-\ell -1 *} \prod_{n =N-k+1}^{N+k-1 *} S^{11}_{11}\left( \zeta + \frac{i\pi (m+n)}{N} \right)^{-1}\\
                                                                                   &=\prod_{m = N-\ell +1}^{N+\ell -1 *} \prod_{n = N-k+1}^{N+k-1 *} S^{11}_{11}\left( \zeta + \frac{i\pi (m+n)}{N} \right)\\
                                                                                   &=\prod_{m' = -(\ell -1)}^{\ell -1 *} \prod_{n' = -(k-1)}^{k-1 *} S^{11}_{11}\left( \zeta + \frac{i\pi (n'+m')}{N} \right)= S^{k \ell}_{\ell k}(\zeta),  
\end{align*}
where in the first and second equalities we used \eqref{identity},
in the third equality we renamed the indices $n-N$ into $n'$ and $m-N$ into $m'$, 
and used the $2\pi i$-periodicity of $S^{11}_{11}$.

\paragraph{\ref{bootstrap}}
In order to check the bootstrap equation,
we compute the right-hand side of the first Equation in \eqref{eq:bootstrap} in the case where $\alpha + \beta < N$:
\begin{align*}
&S^{\alpha \nu}_{\nu \alpha}(\zeta +i\theta_{(\alpha \beta)}) S^{\beta \nu}_{\nu \beta}(\zeta -i\theta_{(\beta \alpha)}) \\
&\quad =\prod_{m = -(\nu -1)}^{\nu -1 *}   \left\lbrack  \prod_{n = -(\alpha -1)}^{\alpha -1 *} S^{11}_{11}\left( \zeta + \frac{i\pi (\beta + m +n)}{N} \right)\prod_{n' = -(\beta-1)}^{\beta-1 *} S^{11}_{11}\left( \zeta + \frac{i\pi (m+n' -\alpha)}{N} \right)\right\rbrack \\
&\quad =\prod_{m = -(\nu -1)}^{\nu -1 *}   \left\lbrack  \prod_{\tilde n = \beta -\alpha +1}^{\beta +\alpha -1 *} S^{11}_{11}\left( \zeta + \frac{i\pi (m + \tilde n)}{N} \right)\prod_{\tilde n'= -(\beta + \alpha -1)}^{\beta -\alpha-1 *} S^{11}_{11}\left( \zeta + \frac{i\pi (m+\tilde n')}{N} \right)\right\rbrack
= S^{\gamma \nu}_{\nu \gamma}(\zeta),
\end{align*} 
where in the second equality we used the renaming $\tilde n = n + \beta$ and $\tilde n' = n' -\alpha$ and in the last equality we combined the two products in the square brackets and called $\gamma = \alpha + \beta$. In the case where $\alpha + \beta > N$, we can use the CPT property on both sides of  Eq.~\eqref{eq:bootstrap} to reduce this case to the previous case with $\bar \alpha + \bar \beta < N$.

\paragraph{\ref{polestructure}}
We have to prove that $S^{kl}_{lk}$ does not have any pole in the physical strip other than those of ${S_{Z(N)}}^{kl}_{lk}$.

Now, using Eq.~\eqref{generalS}, we deduce the poles and zeros for the more general matrix element $S^{k \ell}_{\ell k}$
starting from $S^{11}_{11}$. To this end, we start by considering the matrix element $S^{1 \ell}_{\ell 1}$ with $\ell >1$,
\begin{equation}\label{Sell}
S^{1 \ell}_{\ell 1}(\zeta) = \prod_{n = -(\ell -1)}^{\ell -1 *} S^{11}_{11}\left( \zeta + \frac{i\pi n}{N} \right).
\end{equation}
Assume that $S_{11}^{11}$ has a zero at $\frac{i\pi B}N$,
hence $S_{11}^{11}/{S_B}_{11}^{11}$ is still analytic in the physical strip.
Now, because of unitarity and hermitian analyticity of $S_{11}^{11}$, it must have poles outside the
physical strip corresponding to the zeros of ${S_B}_{11}^{11}$ and they
might enter in the physical strip in \eqref{Sell}. What we have to do is to exclude this possibility.

The factor responsible for this concern is
\begin{equation}\label{SBell}
{S_B}^{1 \ell}_{\ell 1}(\zeta) = \prod_{n = -(\ell -1)}^{\ell -1 *} {S_B}^{11}_{11}\left( \zeta + \frac{i\pi n}{N} \right).
\end{equation}
Assume that $k-1 < \re B < k$. Then the factors ${S_B}^{11}_{11}\left( \zeta + \frac{i\pi n}{N} \right)$ have
poles in the physical strip for $-(\ell-1)\le n\le -k$ or $\ell-1 \ge n \ge N-k$ (the latter comes from periodicity).
For such $n$, there is exactly $m:=2k+n\le \ell-1$ or $-(\ell-1)\le 2(k-N)-n=:m$ (respectively to the cases above,
and note that $2k$ is even hence these $m$ are in the steps of the product)
such that the poles of ${S_B}^{11}_{11}\left( \zeta + \frac{i\pi n}{N}\right)$
are canceled by zeros of ${S_B}^{11}_{11}\left( \zeta + \frac{i\pi m}{N}\right)$.

The cases with $\re B = 0$ and $\im B = 0$ (namely, the cases 1. and 2. in Sec.~\ref{withcdd}, respectively) can be argued analogously.
Therefore, each Blaschke product ${S_B}_{11}^{11}$ contained in $S_{11}^{11}$ does not bring
new poles into $S^{1 \ell}_{\ell 1}$ in the physical strip and
$S^{1\ell}_{\ell 1}(\zeta)$ has single poles at $\frac{(\ell +1)i\pi }{N},  \frac{(\ell -1)i\pi}{N}$

Finally, for more general components, we have
\begin{equation*}
S^{k \ell}_{\ell k}(\zeta) = \prod_{m = -(k -1)}^{k -1 *} S^{1 \ell}_{\ell 1}\left( \zeta + \frac{i\pi m}{N} \right)
\end{equation*}
and we can again extract the zeros of $S^{1 \ell}_{\ell 1}$, which are of the form of a Blaschke product,
and repeat the same argument to conclude that $S^{k \ell}_{\ell k}$ does not have any extra pole in the physical strip.

\paragraph{\ref{atzero}} We consider the matrix element
\begin{equation}\label{zero1}
 S^{k k}_{k k}(\zeta) = \prod_{m = -(k -1)}^{k -1 *}    \prod_{n = -(k -1)}^{k -1 *} S^{11}_{11}\left( \zeta + \frac{i\pi ( m +n)}{N} \right).
\end{equation}
By renaming of the indices $m,n$ into $m+n = a$ and $m-n =b$,
so that $ -2(k-1)\leq a \leq 2(k-1)$ and $-\frac{2(k-1) -\lvert a \rvert}{2}\leq b \leq \frac{2(k-1) - \lvert a \rvert}{2}$
(note that $a$ is always even irrespective of $k$), we find 
\begin{equation}\label{zero3}
\eqref{zero1} =  \prod_{a = -2(k -1)}^{2(k -1) *}    \prod_{b = -(k -1) + \frac{\lvert a \rvert}{2}}^{(k -1) - \frac{\lvert a \rvert}{2} *} S^{11}_{11}\left( \zeta + \frac{i\pi a}{N} \right)
=  \prod_{a = -2(k -1)}^{2(k -1) *}   S^{11}_{11}\left( \zeta + \frac{i\pi a}{N} \right)^{c(a)},
\end{equation}
where $c(a) = k-1 -\frac{\lvert a \rvert}{2}$ or $k- \frac{\lvert a \rvert}{2}$ depending on
whether $k-1 -\frac{\lvert a \rvert}{2}$ is odd or even.
We evaluate this product at $\zeta = \epsilon$ with $\epsilon \searrow 0$. Using the assumptions
on $S_{11}^{11}$, we find from \eqref{zero3},
\begin{align*}
\lim_{\epsilon \searrow 0} S^{kk}_{kk}(\epsilon)
&= \lim_{\epsilon \searrow 0 } \; (S(0)^{11}_{11})^{c(0)}
\prod_{l=1}^{k-1} \left\lbrack  S^{11}_{11}\left( \epsilon -\frac{2\pi li }{N} \right) S^{11}_{11}\left( \epsilon + \frac{2\pi li}{N} \right)  \right\rbrack^{c(2l)-1} \\
&=(-1)^{c(0)}(-1)^{c(2)} = (-1)^{2c(0)-1} = -1.
\end{align*}

\paragraph{\ref{regularity} }  Eq.~\eqref{generalS} represents $S^{kl}_{lk}$ as the finite product of
$S^{11}_{11}$, therefore, it has only finitely many zeros in the physical strip. We checked \ref{unitarity}, hence $S^{kl}_{lk}$ is bounded in a neighborhood of an arbitrary finite interval in $\RR$. Again by Eq.~\eqref{generalS} and
the assumption that $S^{11}_{11}$ is bounded in a neighborhood of the point of infinity, $S^{kl}_{lk}$
is bounded in a neighborhood of the whole real line.

\paragraph{Elementary particle.}
As in $Z(N)$-Ising model, the elementary particle is $\up = 1$.
All the properties are easy or have been checked except positivity of the residues.

Let us consider the case with $k \leq \ell$ and $k+\ell <N$. Then $S^{k \ell}_{\ell k}(\zeta)$ has a simple pole at $\zeta = \frac{(k + \ell)\pi i}{N}$ with residue given by
\begin{equation*}
\operatorname{Res}_{\zeta = \frac{(k + \ell)\pi i}{N}}S^{k \ell}_{\ell k}(\zeta) =
\operatorname{Res}_{\zeta = \frac{(k + \ell)\pi i}{N}}{S_{Z(N)}}^{k \ell}_{\ell k}(\zeta)
\cdot  {S_{\textrm{CDD}}}^{k \ell}_{\ell k}\left( \frac{i(k + \ell)\pi }{N}\right),
\end{equation*}
where  ${S_{\textrm{CDD}}}^{k \ell}_{\ell k}(\zeta) :=S^{k \ell}_{\ell k}(\zeta) /{S_{Z(N)}}^{k \ell}_{\ell k}(\zeta)$
is analytic at the poles of ${S_{Z(N)}}^{k \ell}_{\ell k}(\zeta)$. 
(By crossing symmetry there is another simple pole at $\zeta = \frac{(\ell -k)\pi i}{N}$ with residue with opposite sign.)

By \eqref{generalS} we can write
\begin{equation}\label{Sredprod}
{S_{\textrm{CDD}}}^{k \ell}_{\ell k}(\zeta) = \prod_{m = -(\ell -1)}^{\ell -1 *} \prod_{n = -(k -1)}^{k -1 *}
{S_{\textrm{CDD}}}^{11}_{11}\left( \zeta + \frac{i\pi (m +n)}{N} \right),
\end{equation}
where ${S_{\textrm{CDD}}}^{11}_{11}(\zeta) = S^{11}_{11}(\zeta) / {S_{Z(N)}}^{11}_{11}(\zeta)$.
In Sec.~\ref{sec:elempartZ} we showed that
$ \operatorname{Res}_{\zeta = \frac{(k + \ell)\pi i}{N}}{S_{Z(N)}}^{k \ell}_{\ell k}(\zeta)  \in i\mathbb{R}_+$.
Hence, it only remains to prove that under certain assumptions on $S^{11}_{11}$,
the product \eqref{Sredprod} evaluated at $\zeta = \frac{(k + \ell)\pi i}{N}$ is real and non-negative.
We note that at this value of $\zeta$ the argument of $S^{11}_{11}$ in Eq.~\eqref{Sredprod} is
$i \lambda := \frac{i\pi (k + \ell + m +n)}{N}$ with $\frac{2\pi}{N}\leq  \lambda\leq \frac{2\pi (k + \ell -1)}{N}$,
since $-(\ell -1)\leq m\leq \ell -1$ and $-(k-1)\leq n\leq k-1$. We can write $i\lambda = \frac{2 i \pi p}{N}$ with
$p \in \mathbb{N}$,  $1\leq p \leq N-1$. Indeed, if $k$ is even/odd then $n$ is odd/even respectively,
and if $\ell$ is even/odd then $m$ is odd/even respectively, so that the sum $k + \ell + m+n$ is always an even integer.
By assumption, ${S_{\textrm{CDD}}}^{k \ell}_{\ell k}(\zeta)$ is also real and positive.
The case $k >\ell$ follows by \ref{parity} and the case $k + \ell >N$ by using \ref{CPTinvariance}.

\subsubsection{Affine Toda field theory}\label{toda}
Certain concrete models with Lagrangians, called $A_{N-1}$-affine Toda field theories \cite{BCKKS92},
are believed to be associated with some of the examples
of functions $S^{11}_{11}$ which fulfill the properties of Section \ref{withcdd}:
These theories are conjectured to possess the same particle content
and fusion processes of the $Z(N)$-Ising model and the scattering of two particles of type $1$ is given by
\begin{equation}\label{Stoda}
{S_{A_{N-1}}}^{11}_{11}(\zeta) = {S_{Z(N)}}^{11}_{11}(\zeta) \frac{\sinh \frac{1}{2}\left( \zeta -\frac{i\pi}{N}B  \right)\sinh \frac{1}{2} \left( \zeta - \frac{i\pi}{N}(2-B) \right)}{\sinh \frac{1}{2}\left( \zeta + \frac{i\pi}{N}B \right)\sinh \frac{1}{2}\left( \zeta + \frac{i\pi}{N}(2-B) \right)}, 
\end{equation}
where $ 0\leq B \leq 1$ is a number related to the coupling constant in the Lagrangian \cite{BK03, Korff:2000},
and ${S_{Z(N)}}^{11}_{11}$ is the matrix component of the $Z(N)$-Ising model we studied in Section \ref{SzN}.
It is just one example of functions pointed out in Section \ref{withcdd}
with a single CDD factor of the third case where $B$ is real.

Affine Toda field theories should have quantum-group symmetry and it is also conjectured
that their scaling limit gives rise to a well-known family of conformal field theory \cite{Korff:2000}.
Such a connection would be interesting in the operator-algebraic approach, c.f.\! \cite{BT13}.
We hope to come back to the whole family of affine Toda field theories and study these aspects
in future publications.

\subsubsection{General comments}

The properties in Section \ref{S-matrix} are not completely general.
The tensor product of models is obviously possible, but we excluded this
by assuming that there is only one pair of elementary particles.
One can also twist a tensor product by an analytic function as \cite{Tanimoto14-1}.
Yet, we do not know whether there is any (candidate for) diagonal S-matrix which has more than one pair
of elementary particles and does not fall in these classes.

Furthermore, the requirement that a pair of particles fuses into only one species
is also special. In the sine-Gordon model, more complicated fusion processes occur,
yet breather-breather S-matrices are diagonal. We will come back to this point
in a future publication \cite{CT-sine}.

\subsection{Single-particle space and S-symmetric Fock space}\label{Fock}

Following partially \cite[Section 2]{LS14}, we generalize the Hilbert space construction of \cite{Lechner08} to these multi-particle S-matrix models.
As we introduced in Section \ref{S-matrix}, the particle species (charged or uncharged) are labeled by an index $\alpha \in \I$ and have masses $m_{\alpha}>0$.
Therefore, the single particle Hilbert space $\mathcal{H}_1$ is the direct sum of all species $\alpha$:
\begin{align*}
\mathcal{H}_1 = \displaystyle{\bigoplus_{\alpha \in \I}} \mathcal{H}_{1,\alpha}, \quad \mathcal{H}_{1,\alpha}= L^2(\mathbb{R},d\theta).
\end{align*}
An element $\Psi_1 \in \mathcal{H}_1$ can be identified as a $K$-component vector valued function $\theta \mapsto \Psi_1^{\alpha}(\theta)$,
where $K = |\I|$.

On $\mathcal{H}_1$, there is a unitary representation of the proper orthochronous Poincar\'e group $\poincare$:
\begin{align*}
  U_1 := \bigoplus_{\alpha \in \I} U_{1,m_{\alpha}},
  \quad (U_{1,m_\a}(a,\l)\Psi^\a_1)(\theta) := \exp \left(i p_{m_\a}(\theta) \cdot a\right) \Psi^{\a}_1(\theta-\l),
\end{align*}
where the momentum of the particle $p_{m_{\alpha}}(\theta)$ is defined as in Section \ref{S-matrix}.

As already explained, the particles may carry a charge, and we denote the conjugate charge of $\alpha$ by $\bar\alpha$.
The corresponding CPT operator acts on $\mathcal{H}_{1,\alpha}$ by the antiunitary representation $U_1(j):= J_1$
(where $j:\; \pmb{x}\rightarrow -\pmb{x}$ is an element of the proper Poincar\'e group),
\begin{equation}\label{J1}
(J_1 \Psi_1)^{\alpha}(\theta) := \overline{\Psi^{\bar\alpha}_1(\theta)}.
\end{equation}

Now that the S-matrix and its properties have been introduced, the Hilbert space of the theory is constructed in analogy with \cite{LS14}.
We have by now considered $\{S^{\a\b}_{\g\d}\}$ for $\a = \d$ and $\b = \g$. We extend it to a matrix-valued function
by $S^{\alpha \beta}_{\gamma \delta}(\theta) = S^{\alpha \beta}_{\beta \alpha}(\theta) \delta_{\delta}^{\alpha} \delta_{\gamma}^{\beta}$.
For (almost every) fixed $\theta$ , this is a $K^2\times K^2$-matrix and can be identified with an operator on $\CC^{K^2}$.
By this construction, the S-matrix $S$ is diagonal in the following sense (c.f.\! \cite{BT15, Tanimoto14-1}). Introducing 
the R-matrix by $R^{\a\b}_{\g\d}(\theta) := S^{\b\a}_{\g\d}(\theta)$, it is
a diagonal matrix: $R_{\g\d}^{\a\b}(\theta) = 0$ unless $\a = \g$ and $\b = \d$.

For $n \in \NN$, we consider the tensor products $\mathcal{H}_1^{\otimes n}$, and define a representation $D_n$ of the symmetric group $\mathfrak{S}_n$
on it, acting as
\begin{equation*}
  (D_n(\tau_k)\Psi_n)^{\pmb{\a}}(\pmb{\theta}) = S_{\a_{k+1}\a_k}^{\a_k\a_{k+1}}(\theta_{k+1}-\theta_k)\Psi^{\a_1\cdots\a_{k+1}\a_k\cdots\a_n}_n(\theta_1,\cdots, \theta_{k+1},\theta_k,\cdots,\theta_n),
\end{equation*}
$\pmb{\theta}:=(\theta_1,\ldots,\theta_n), \pmb{\a} := (\a_1\cdots \a_n)$ and
$\tau_k \in \mathfrak{S}_n$ is the transposition $(k,k+1) \mapsto (k+1,k)$.
Since our S-matrix is diagonal, hence the Yang-Baxter equation is trivial, and since it satisfies the requirements of \cite[Definition 2.1(i, ii, iii)]{LS14}, then $D_n$ extends to a unitary representation of the symmetric group $\mathfrak{S}_n$. (Note that the triviality of the Yang-Baxter equation can be more easily seen with the R-matrix: if $R_{\g\d}^{\a\b}(\theta)$ is diagonal,
then $R$'s acting on different components commute (see \cite[Proposition 3.7]{BT15})
\footnote{R-matrices should not be confused with the residues $R_{\a\b}$ of $S^{\b\a}_{\a\b}$.
We do not use R-matrices in the rest of the paper.}.

We are in particular interested in the space of \textbf{$S$-symmetric functions} in $\mathcal{H}_1^{\otimes n}$,
namely functions which are invariant under this action of $\mathfrak{S}_n$:
those vector for which it holds for any $k$ that
\begin{align}\label{psisym}
  \Psi_n^{\pmb{\a}}(\pmb{\theta}) = S_{\a_{k+1}\a_k}^{\a_k\a_{k+1}}(\theta_{k+1}-\theta_k)\Psi^{\a_1\cdots\a_{k+1}\a_k\cdots\a_n}_n(\theta_1,\cdots, \theta_{k+1},\theta_k,\cdots,\theta_n).
\end{align}
With these functions we can define the Hilbert space $\mathcal{H}$ of the theory, first by defining
the $n$-particle Hilbert space $\mathcal{H}_n$ as the subspace of
$S$-symmetric functions in $\H_1^{\otimes n}$, and then by considering $\mathcal{H} := \oplus_{n=0}^\infty \mathcal{H}_n$ with
$\mathcal{H}_0 =\mathbb{C}\Omega$.
We introduce the orthogonal projection $P_n := \frac{1}{n!}\sum_{\sigma \in \mathfrak{S}_n} D_n(\sigma)$
thus we can write $\mathcal{H}_n = P_n \mathcal{H}_1 ^{\otimes n}$, and we denote with $\mathcal{D}$
the dense subspace of $\mathcal{H}$ of vectors with finite particle number.
The elements of $\mathcal{H}$ are sequences $\Psi = (\Psi_0, \Psi_1, \ldots)$, where $\Psi_n \in  P_n \mathcal{H}_1 ^{\otimes n}$
such that $\| \Psi \|^2 := \sum_{n=0}^\infty \| \Psi_n \|^2 < \infty$.

Now we discuss Poincar\'e symmetries on $\mathcal{H}$.
The representation $U_1$ of the Poincar\'e group $\poincare$ can be promoted to $\H$
by the second quantization, which acts as
\begin{align*}
(U(a,\lambda)\Psi)^{\pmb{\alpha}}_n(\pmb{\theta}) := \exp \left(i \sum_{l=1}^n p_{\alpha_l}(\theta_l) \cdot a\right) \Psi^{\pmb{\alpha}}_n(\pmb{\theta}-\pmb{\lambda}),
\end{align*}
where $\pmb{\lambda}=(\lambda, \ldots, \lambda)$ and $p_{\alpha_l}(\theta)= (m_{\alpha_l}\cosh \theta, m_{\alpha_l}\sinh \theta)$.
Additionally, the CPT operator extends to $\mathcal{H}$ by
\begin{align*}
(J \Psi)_n^{\pmb{\alpha}}(\pmb{\theta}) := \overline{\Psi_n^{\bar\alpha_n \ldots \bar\alpha_1} (\theta_n, \ldots, \theta_1)}.
\end{align*}
On $\mathcal{H}_1$, $J$ indeed reduces to \eqref{J1}.
It was shown in \cite[Lemma 2.3]{LS14} that $U(a,\lambda)$ and $J$ leave the space of $S$-symmetric functions $P_n\H_n$ invariant:
the proof does not use the analyticity of $S$, hence carries over to our cases.

We will deal with one-particle wave functions $g \in \bigoplus_{\alpha \in \I} \mathscr{S}(\mathbb{R}^2)$ with several components $g_{\alpha} \in \mathscr{S}(\mathbb{R}^2)$ and we will adopt the following convention \cite{LS14}:
\begin{equation*}
g^{\pm}_{\alpha}(\theta) := \frac{1}{2\pi} \int d^2 x\, g_\alpha(x) e^{\pm ip_\alpha(\theta)\cdot x}.
\end{equation*}
If $g_\a$ is supported in $W_\R$, then $g^+_\a(\theta)$ has a bounded analytic continuation in $\RR +i(-\pi, 0)$ and
$|g^+_\a(\theta +i\lambda)|$ decays rapidly as $\theta \rightarrow \pm \infty$ in the strip for $\l \in (-\pi,0)$.
Furthermore, the CPT operator $j$ acts naturally on multi-components test functions as $(g_j)_\alpha (x) := \overline{g_{\bar \alpha}(-x)}$,
and it holds that $((g_j)^\pm)_\alpha(\zeta) = \overline{g^\pm_{\bar\alpha}(\bar \zeta)}$.
Moreover, we recall $(g^*)_\alpha (x) :=\overline{g_{\bar\alpha}(x)}$ and say that $g$ is \textbf{real} if $g=g^*$
(as \cite[Proposition 3.1]{LS14}, which is a generalization of real-valuedness in the scalar case \cite{CT15-1}).
If $g$ is real, then $\overline{g^\pm_{\bar\alpha}(\overline \zeta)} = g^\mp_\alpha(\zeta)$.

There is a natural action of the proper Poincar\'e group on $\RR^2$ and on the space of test functions,
denoted by $g_{(a,\lambda)}$ (while the space-time reflection acts by $g \mapsto g_j$), and it is compatible with the action on the one-particle space:
\[
 (g_{(a, \lambda)})^\pm_\alpha = U_1(a,\lambda)g^\pm_\alpha,\;\;\;
 (g_j)^\pm_\alpha = J_1 g^\pm_\alpha.
\]

\subsubsection*{Creation and annihilation operators}

Similarly to \cite{LS14}, creators and annihilators $z^\dagger_\alpha (\theta), z_\alpha (\theta)$ are introduced
in the $S$-symmetric Fock space $\mathcal{H}$. Their actions on vectors $\Psi = (\Psi_n) \in \D$ are given by,
for $\varphi \in \H_1$,
\begin{align}
(z(\varphi)\Psi)^{\pmb{\a}}_n (\pmb{\theta}) &= \sqrt{n+1}\sum_\nu\int d\theta' \overline{\varphi^\nu(\theta')}\Psi^{\nu \pmb{\a}}_{n+1}(\theta',\pmb{\theta}),\\
z^\dagger(\varphi) &= (z(\varphi))^*
\end{align}
(see \cite[Proposition 2.4]{LS14})
and they formally fulfill the following Zamolodchikov-Faddeev algebraic relations:
\begin{align*}
z^\dagger_{\alpha}(\theta)z^\dagger_{\beta}(\theta') &= S^{\beta \alpha}_{\alpha \beta}(\theta -\theta')z^\dagger_\beta(\theta')z^\dagger_\alpha(\theta),\\
z_\alpha(\theta)z_\beta(\theta') &= S^{\beta \alpha}_{\alpha \beta}(\theta -\theta')z_\beta(\theta')z_\alpha(\theta),\\
z_\alpha(\theta)z^\dagger_\beta(\theta') &= S^{\alpha \beta}_{\beta \alpha}(\theta' -\theta)z^\dagger_\beta(\theta')z_\alpha(\theta)+\delta^{\alpha \beta}\delta(\theta -\theta')\pmb{1}_{\mathcal{H}}.
\end{align*}
Recall that they are defined on $\D$ and bounded on each $n$-particle space $\H_n$.

They can alternatively be defined in terms of the corresponding unsymmetrized creators and
annihilators $a(f), a^\dagger(f)$, $f \in \mathcal{H}_1$, by setting $z^{\#}(f):=Pa^{\#}(f)P$,
where $P:= \bigoplus_{n=0}^\infty P_n$ is the orthogonal projection from the unsymmetrized Fock
space to the $S$-symmetric Fock space $\mathcal{H}$ \cite[(2.23--26)]{LS14} and $\#$ stands for either $\dagger$ or nothing
(either creator or annihilator).

\subsubsection*{Wedge-local fields for analytic S-matrices}

For the class of two-particle S-matrices $S(\zeta)$ with components which are
\emph{analytic in the physical strip} $\zeta \in \mathbb{R} +i(0,\pi)$,
local observables associated with wedge-regions, say with the standard left wedge $W_{\mathrm L}$,
can be constructed \cite{LS14}, following an argument due to Schroer \cite{Schroer97}.
Specifically, let $f \in \bigoplus_{\a\in \I}\mathscr{S}(\mathbb{R}^2)$,
Lechner and Sch\"utzenhofer defined a multi-component quantum field $\phi$ by
\begin{align*}
\phi(f)&:= z^\dagger(f^+)+ z(J_1f^-) \\
       &\left(= \sum_\alpha \int d\theta\, \left( f^+_\alpha(\theta) z^\dagger_\alpha(\theta)+ (J_1f^-)_\alpha (\theta)z_\alpha(\theta) \right)\right).
\end{align*}
We note that this reduces to the free field if $S^{\alpha \beta}_{\beta \alpha}(\theta)=1$
(which however violates \ref{atzero} and therefore it is not in the class of scattering functions considered here).
The field $\phi$ shares many properties with the free field as shown by Lechner and Sch\"utzenhofer in
\cite[Proposition 3.1]{LS14}
In particular, it is defined on the subspace $\D$ of $\mathcal{H}$ of vectors with finite particle number and it is essentially
self-adjoint on $\D$ for test functions $f$ with the property that $f = f^*$ (we denote its closure by the same symbol $\phi(f)$).
It satisfies the Reeh-Schlieder property and transforms covariantly under the representation
$U(x,\lambda)$ of the proper orthochronous Poincar\'e group.
The only exception is the property of locality. The field $\phi(x)$ is not localized
at the space-time point $x$ in the usual sense, but rather in an infinitely extended wedge
with tip at $x$, $W_{\mathrm L} +x$. To make this more precise, one introduces the reflected creators and annihilators \cite{LS14}, as
\begin{align*}
z'_\alpha(\theta):= J z_{\bar\alpha}(\theta)J, \quad z'^\dagger_\alpha(\theta):= J z_{\bar\alpha}^\dagger(\theta)J
\end{align*}
and defines a new field $\phi'$ as, $f \in \mathscr{S}(\mathbb{R}^2)$,
\begin{equation*}
\phi'(f):= J \phi(f_j)J = z'^\dagger (f^+) + z'(J_1 f^-).
\end{equation*}
It has been shown in \cite[Theorem 3.2]{LS14} that the two fields $\phi,\phi'$
are relatively \emph{wedge-local}, in the sense  that the commutator $[e^{i\phi(f)}, e^{i\phi'(g)}]$
is zero for any test functions $f,g$ with the property that $f=f^*$ and $g=g^*$, and with $\supp f \subset W_\L$, $\supp g \subset W_\R$.
Hence, we can interpret $\phi, \phi'$ as observables measurable in the wedges $W_\L, W_\R$, respectively.
This result can be obtained by computing the commutators of $z^\#$ with $z'^\#$
as shown in \cite[Theorem 3.2]{LS14} and by shifting a certain integral contour
which critically uses the analyticity of the two-particle S-matrix $S(\theta)$
in the physical strip $\theta \in \mathbb{R} +i(0,\pi)$.

It should be remarked \cite[Theorem 3.2]{LS14} that also the properties of
the test functions $f,g$ play an important role in the proof of wedge-locality.
More specifically, the proof uses the fact that if $\supp f_\a \subset W_\L$
then its Fourier transform $f_\a^+$
is analytic, bounded in $\RR + i(0,\pi)$ and $|f_\a^+(\theta +i\lambda)|$ decays rapidly as
$\theta \rightarrow \pm \infty$ for $\lambda \in (0,\pi)$. Moreover, $f^+_\alpha(\theta +i\pi)= f^-_\alpha(\theta)$
holds always, and if $f$ satisfies the reality condition $f = f^*$,
we have in addition $\overline{f^+_{\bar\a}(\theta)} = f^-_\alpha(\theta) = f^+_\alpha(\theta +i\pi)$
(similar remarks apply to $g$ as well).

If the S-matrix is scalar without poles and satisfies a certain regularity condition,
then Lechner proved \cite{Lechner08} that
one can construct Haag-Kastler nets, in which the local algebras are nontrivial at least for
double cones larger than a certain minimal size \cite{Alazzawi14},
by showing the so-called modular nuclearity condition \cite{BL04}.
Some progress has been made for models with matrix-valued S-matrices without poles in the physical strip, and
a proof of modular nuclearity appears to be available for diagonal subcases \cite{Alazzawi14}.
Our goal is also to construct Haag-Kastler nets for S-matrices with poles, yet
in the present paper we will not investigate this strict locality and concentrate on wedge-local aspects.

For the class of two-particle S-matrices $S(\theta)$ with components which are
\emph{not analytic in the physical strip} $\theta \in \mathbb{R} +i(0,\pi)$,
the fields $\phi(f), \phi'(g)$ fail to be wedge-local (as seen already in the scalar case \cite{CT15-1}),
therefore, some modifications are necessary, as we will see in Section \ref{chi}.

\section{The bound state operator}\label{chi}
We are going to define a generalization of the operator $\chi(f)$ of \cite{CT15-1}.
Its mathematical structure is parallel to that in the case with scalar S-matrix, yet
this time the name ``bound state operator'' fits better, as it clearly corresponds to the fusion table of the model.

\subsection{Construction}

Let $f$ be a multi-component test function whose components are supported in $W_\L$.
Hardy spaces on strip appear as a very important ingredient (see also \cite{Tanimoto15-1}):
\[
H^2(\l_1,\l_2) = \{\xi: \xi \mbox{ is analytic on } \RR + i(\l_1,\l_2), \|\xi(\cdot + i\l)\|_{L^2(\RR,d\theta)} \le C_\xi, \l \in (\l_1,\l_2)\}.
\]
An element $\xi \in H^2(\l_1,\l_2)$ has an $L^2$-boundary value at $\l = \l_1,\l_2$ and we can consider
$H^2(\l_1,\l_2)$ as a subspace of $L^2(\RR)$ by choosing one of the boundaries.

Now we introduce an unbounded operator $\chi(f)$ on the $S$-symmetric Fock space $\H$.
Let us denote its component on $\H_n$ by $\chi_n(f)$.
Firstly, $\chi_0(f)$ annihilates the vacuum $\Om$.

Let us first assume that $f$ has only one non-zero component $f_\a$ with index $\a$, 
and moreover it has support in the left wedge $W_\L$.
The action $\chi_{1,\a}(f)$ on $\H_1$ is then given as follows:
\begin{align}
  \dom(\chi_{1,\a}(f)) &:= \bigoplus_\b \left\{\begin{array}{ll}
                                          H^2(-\theta_{(\b\a)},0) &\mbox{if } (\a\b) \mbox{ fuse into some } \g \\
                                          L^2(\RR) &\mbox{otherwise}
                                         \end{array}\right. \label{domain:chi1} \\
 (\chi_{1,\a}(f)\xi)^\gamma(\theta) &:= \left\{\begin{array}{ll}
                                           -i\eta^\gamma_{\alpha \beta} f^+_\alpha (\theta + i\theta_{(\alpha \beta)} ) \xi^\beta (\theta - i\theta_{(\beta \alpha)}) & \mbox{if }(\a\b) \mbox{ fuse into } \g\\
                                           0 &\mbox{otherwise}
                                          \end{array}\right. \label{eq:chi1}
\end{align}
where $\eta^\gamma_{\alpha \beta} := i\sqrt{2\pi |R_{\alpha \beta}^\g|}$,
$R_{\alpha \beta}^\g$ is given in \ref{polestructure}.
If $\a = \up$ is an elementary particle, then it follows from Section \ref{positive-residue}
that $|R^\gamma_{\alpha\beta}| = -iR^\gamma_{\alpha\beta}$ (actually this holds for
any index $\a$ in examples of Section \ref{examples}, yet we will use only those associated with
elementary particles).
Note that $R_{\a\b}^\g = 0$ by definition if $(\a\b)\to\g$ is not a fusion process.
Furthermore, since $f$ has support in $W_\L$,
$f^+_\a(\theta + i\theta_{(\alpha \beta)})$ is bounded (actually rapidly decreasing), and therefore $\chi_{1,\a}(f)\xi$ is $L^2$.
If $f$ has more than one non-zero components, we extend it by linearity:
$\chi_1(f) = \sum_\a\chi_{1,\a}(f_\a)$ on the natural domain (the intersection over $\a$).

If $(\a\b)\to\g$ is not a fusion process, the expression $\xi^\beta (\theta - i\theta_{(\beta \alpha)})$
does not make sense as $\xi^\b$ does not necessarily have an analytic continuation,
yet in such a case $\eta^\g_{\a\b} = 0$ by definition, hence
\begin{equation}\label{eq:formalchi}
  (\chi_1(f)\xi)_\gamma(\theta) = -\sum_{\a\b}i\eta^\gamma_{\alpha \beta} f^+_\alpha (\theta + i\theta_{(\alpha \beta)} ) \xi_\beta (\theta - i\theta_{(\beta \alpha)})
\end{equation}
holds in a formal sense. We make use of this formula when the domain question does not pose a serious problem.

We can interpret this one-particle action \eqref{eq:chi1} as the situation where
the state of one elementary particle $\xi^\beta$ is fused with $f^+_\alpha$ into a particle of type $\gamma$.
For this reason (which is clearer than the scalar case), we will call this operator the ``\textbf{bound state operator}''.

Then, we define:
\begin{align}\label{eq:defchi}
 \chi_n(f) &:= nP_n(\chi_1(f)\otimes\1\otimes\cdots\otimes\1)P_n, \\
 \chi(f) &= \bigoplus_{n=0}^\infty \chi_n(f). \nonumber
\end{align}
As $\chi(f)$ is the direct sum of $\chi_n(f)$, we define its domain to be the algebraic direct sum:
$\dom(\chi(f)) := \bigoplus_{n=0}^{\text{(finite)}}\dom(\chi_n(f))$, which is a subspace of $\D$.
As in \cite{CT15-1}, when a product $AB$ of possibly unbounded operators $A, B$ appears,
its domain is naturally understood as $\{\xi \in \dom(B): B\xi \in \dom(A)\}$.

Now, let $g$ be a test function supported in the right wedge $W_\R$,
we introduce the reflected bound state operator $\chi'(g)$:
again, for $g$ with only one non-zero component $g_\a$ we define
\begin{align*}
  \dom(\chi'_{1,\a}(g)) &:=\bigoplus_\b \left\{\begin{array}{ll}
                                          H^2(0,\theta_{(\beta \alpha)}) &\mbox{if } (\a\b) \mbox{ fuse into some } \g \\
                                          L^2(\RR) &\mbox{otherwise}
                                         \end{array}\right. \\
 (\chi'_{1,\a}(g)\xi)^\gamma(\theta) &:= \left\{\begin{array}{ll}
                                           -i\eta^\gamma_{\alpha \beta} g^+_\alpha\left(\theta - i\theta_{(\alpha \beta)}\right) \xi^\beta \left(\theta + i\theta_{(\beta \alpha)}\right) & \mbox{if }(\a\b) \mbox{ fuse into } \g\\
                                           0 &\mbox{otherwise}
                                          \end{array}\right.
\end{align*}
and $\chi'_1(g) = \sum_\a \chi'_{1,\a}(g_\a)$ for a general vector-valued test function $g = (g_\a)$.
Its component on $\H_n$ is given by
\begin{align*}
\chi'_n(g) := nP_n(\1\otimes\cdots\otimes\1\otimes\chi'_1(g))P_n
\end{align*}
and we define $\chi'(g) = \bigoplus_n \chi'_n(g)$.

This operator is related to $\chi$ by the CPT operator $J$:
\begin{equation}\label{chipfull}
 \chi'(g) = J\chi(g_j)J.
\end{equation}
Indeed, let us consider the one-particle components of this expression above.
We know that $J_1$ acts as complex conjugation composed with charge conjugation, and as such, it 
takes an analytic function in the lower strip to an analytic function in the upper strip with
an exchange in components $\a \leftrightarrow \bar \a$.
Because of the expression $J\chi(g_j)J$ in terms of $J$ and since $g_j$ has also charge-conjugated components,
we then have that the domains of $\chi'(g)$
and of $J\chi(g_j)J$ coincide.
One can also show that the operators do coincide by computing their actions on vectors
and noting that $(\a\b)\to\g$ is a fusion process if and only if $(\bar\a\bar\b)\to\bar\g$ is also a fusion
by assumption (see Section \ref{S-matrix}, fusion table).

\subsection{An alternative expression for \texorpdfstring{\eqref{eq:defchi}}{chi}}\label{alternative}
We have alternative expressions for $\chi(f)$ and $\chi'(g)$ corresponding to the scalar case \cite[Section 3.2]{CT15-1}.

Let $\tau_j \in \mathfrak{S}_n$, $1 \le j \le n-1$, be the transposition which exchanges $j$ and $j+1$, and let  $\rho_k =\tau_{k-1}\cdots\tau_1$ be the cyclic permutation
\[
 \rho_k: (1,2,\cdots, n) \mapsto (k,1,2, \cdots, k-1, k+1,\cdots, n).
\]
$\rho_1$ is the unit element of $\mathfrak{S}_n$ by convention. With this notation, since any permutation $\sigma \in \mathfrak{S}_n$ is a bijection of $\{1,\cdots, n\}$ onto itself, it can be written as the product $\rho_{\sigma(1)}\underline\sigma$
with a permutation $\underline{\sigma}$ of $n-1$ numbers $(2,3,\cdots, n)$.

By the definition of $\underline\sigma$, the operators $\chi_1(f)\otimes\1\otimes\cdots\otimes\1$ and $D_n(\underline{\sigma})$ commutes.
Moreover, one has that $D_n(\sigma)P_n = P_n$ since $P_n = \frac1{n!}\sum_{\sigma\in\mathfrak{S}_n} D_n(\sigma)$.
As in \cite{CT15-1}, we can rewrite the formula for $\chi_n(f)$ as follows:
\begin{align*}
 \chi_n(f) &= nP_n(\chi_1(f)\otimes\1\otimes\cdots\otimes\1)P_n \\
 &= \frac1{(n-1)!}\sum_{\sigma\in\mathfrak{S}_n} D_n(\rho_{\sigma(1)})D_n(\underline{\sigma})
 (\chi_1(f)\otimes\1\otimes\cdots\otimes\1)P_n \\
 &= \frac1{(n-1)!}\sum_{\sigma\in\mathfrak{S}_n} D_n(\rho_{\sigma(1)})
 (\chi_1(f)\otimes\1\otimes\cdots\otimes\1)P_n \\
 &= \sum_{1\le k \le n} D_n(\rho_k)(\chi_1(f)\otimes\1\otimes\cdots\otimes\1)P_n.
\end{align*}

We consider a vector $\Psi_n$ that is $S$-symmetric, i.e. $\Psi_n = P_n \Psi_n$,
and in the domain of $\chi_1(f)\otimes\1\otimes\cdots\otimes\1$.
We assume that $f$ has non-zero component $f_\a$.
Then, the components of $\Psi_n$ have a meromorphic continuation (due to the presence of $S$ factors) in variable $\theta_\ell$
for which $(\a\b_l)$ is a fusion. If $(\a\b_1)$ is a fusion, then it holds that
\begin{multline}
 \Psi_n^{\pmb{\b}}\left(\theta_1 - i\theta_{(\b_1\a)}, \theta_2, \cdots, \theta_n\right)
 \\
 = \prod_{2\le j \le k} S^{\b_1 \b_j}_{\b_j \b_1}\left(\theta_j-\theta_1 + i\theta_{(\b_1\a)}\right)
\Psi_n^{\b_2 \ldots \b_k \b_1 \b_{k+1}\ldots \b_n}\left(\theta_2,\cdots,\theta_k,\theta_1 - i\theta_{(\b_1\a)},\theta_{k+1},\cdots,\theta_n\right).
\end{multline}
Using this fact, and by recalling that $\eta^\g_{\a\b} \neq 0$ only if $(\a\b)\to\g$ is a fusion process, we can compute each term $D_n(\rho_k)(\chi_1(f)\otimes\1\otimes\cdots\otimes\1)P_n$ in the expression of $\chi_n(f)$ above, for $k \geq 2$ (the case $k=1$ is trivial by definition of $\rho_k$), as follows :
\begin{align*}
 &(D_n(\rho_k)(\chi_1(f)\otimes\1\otimes\cdots\otimes\1)\Psi_n)^{\gamma_1 \ldots \gamma_n}(\theta_1\cdots\theta_n) \\
 =\;& \prod_{1\le j \le k-1} S^{\gamma_j \gamma_k}_{\gamma_k \gamma_j}(\theta_k-\theta_j) \\
 &\quad \quad \quad \times ((\chi_1(f)\otimes\1\otimes\cdots\otimes\1)\Psi_n)^{\gamma_k \gamma_1 \ldots \gamma_{k-1}\gamma_{k+1} \ldots \gamma_n}(\theta_k,\theta_1,\cdots,\theta_{k-1},\theta_{k+1},\cdots,\theta_n) \\
 =\;& -\sum_{\beta_k \in \I}i\eta^{\gamma_k}_{\alpha \beta_k}\prod_{1\le j \le k-1} S^{\gamma_j \gamma_k}_{\gamma_k \gamma_j}(\theta_k-\theta_j)  f^+_\alpha\left(\theta_k + i\theta_{(\alpha \beta_k)}\right) \\
 &\quad \quad \quad \times \Psi_n^{\beta_k \gamma_1 \ldots \gamma_{k-1} \gamma_{k+1} \ldots \gamma_n}\left(\theta_k - i\theta_{(\beta_k \alpha)},\theta_1,\cdots,\theta_{k-1},\theta_{k+1},\cdots\theta_n\right) \\
 =\;& -\sum_{\beta_k \in \I}i\eta^{\gamma_k}_{\alpha \beta_k} \prod_{1\le j \le k-1} S^{\gamma_j \gamma_k}_{\gamma_k \gamma_j}(\theta_k-\theta_j)  S^{\beta_k \gamma_j}_{\gamma_j \beta_k}\left(\theta_j-\theta_k + i\theta_{(\beta_k \alpha)}\right)
 f^+_\alpha\left(\theta_k + i\theta_{(\alpha \beta_k)}\right) \\
 &\quad \quad \quad \times \Psi_n^{\gamma_1 \ldots \gamma_{k-1} \beta_k \gamma_{k+1} \ldots \gamma_n}\left(\theta_1,\cdots,\theta_{k-1},\theta_k -i\theta_{(\beta_k \alpha)},\theta_{k+1},\cdots\theta_n\right) \\
 =\;& -\sum_{\beta_k \in \I} i\eta^{\gamma_k}_{\alpha \beta_k} \prod_{1\le j \le k-1} S^{\gamma_j \alpha}_{\alpha \gamma_j}\left(\theta_k-\theta_j + i\theta_{(\alpha \beta_k)}\right)
 f^+_{\alpha}\left(\theta_k + i\theta_{(\alpha \beta_k)}\right)\\
 & \quad \quad \quad \times \Psi_n^{\gamma_1 \ldots \gamma_{k-1} \beta_k \gamma_{k+1} \ldots \gamma_n}\left(\theta_1,\cdots,\theta_{k-1},\theta_k - i\theta_{(\beta_k \alpha)}, \theta_{k+1},\cdots\theta_n\right),
\end{align*}
where in the second equality we used the fact that $\eta^\g_{\a\b}=0$ unless $(\a\b)\to\g$
and we simplified expression \eqref{eq:formalchi} (namely, those terms with $\b_k$ for which
$(\a\b_k)$ is not a fusion should be simply ignored),
in the third equality we reordered the variables,
and in the last equality we used the bootstrap equation, parity symmetry and hermitian analyticity.

Note that the $S$-factors appearing in the above expression have poles,
therefore $\Psi_n$ must have zeros at the location of these poles, so that the whole expression remains $L^2$. This is the meaning of $\Psi_n$ being in the domain of $\chi_n(f)$ (for details of this domain see Proposition~\ref{pr:chi:symmetry}.)

Similarly to $\chi(f)$, also $\chi'(g)$ can be written in the form:
\begin{align*}
 \chi'_n(g) &= \sum_{1\le k \le n} D_n(\rho'_k)(\1\otimes\cdots\otimes\1\otimes\chi'_1(g))P_n,
\end{align*}
where $\rho'_k = \tau_{n-k+1}\tau_{n-k+2}\cdots\tau_{n-1}$ are the cyclic permutations
\[
 \rho'_k:(1,\cdots, n-1,n) \longmapsto (1,\cdots n-k, n-k+2,\cdots, n-1, n, n-k+1)  
\]
and, for $k \ge 2$,
\begin{align}\label{chip}
 &(D_n(\rho'_k)(\1\otimes\cdots\otimes\1\otimes\chi'_1(g))\Psi_n)^{\gamma_1 \ldots \gamma_n}(\theta_1\cdots\theta_n)  \nonumber\\
 =&\; -\sum_{\beta_{n-k+1}\in\I}i\eta_{\alpha \beta_{n-k+1}}^{\gamma_{n-k+1}}\prod_{n-k+2 \le j \le n} S^{\alpha \gamma_j}_{\gamma_j \alpha}\left(\theta_j- \theta_{n-k+1} + i\theta_{(\alpha \beta_{n-k+1})}\right)
 g^+_\alpha\left(\theta_{n-k+1} - i\theta_{(\alpha \beta_{n-k+1})}\right) \nonumber \\
 &\quad\times (\Psi_n)^{\gamma_1 \ldots \gamma_{n-k}\beta_{n-k+1}\gamma_{n-k+2} \ldots \gamma_n}\left(\theta_1,\cdots,\theta_{n-k},\theta_{n-k+1} + i\theta_{(\beta_{n-k+1}\;\alpha)},\theta_{n-k+2},\cdots\theta_n\right).
\end{align}
This is valid for $k \ge 1$ (the case $k=1$ is trivial by definition of $\rho'_k$.)

\subsection{Some properties}
We show here some properties of $\chi(f)$ which are naturally expected and necessary for the further developments.
Note that since $\chi'(g)$ has a construction similar to $\chi(f)$, they share these properties,
so that it is enough to show them for $\chi(f)$.

\begin{proposition}\label{pr:chi:symmetry}
 For a vector-valued test function $f$ supported in $W_\L$ and with the property that $f^*=f$,
 the operator $\chi(f)$ is densely defined and symmetric.
\end{proposition}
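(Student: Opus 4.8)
The plan is to establish the two assertions in turn, treating density as the routine part and the symmetry of the associated quadratic form as the substantive one. Since $\dom(\chi(f))$ is by construction the algebraic direct sum of the $\dom(\chi_n(f))$, it suffices to exhibit, inside each $\dom(\chi_n(f))$, a subspace dense in $\H_n$. On $\H_1$ the domain \eqref{domain:chi1} is a direct sum of Hardy spaces $H^2(-\theta_{(\b\a)},0)$ on the fusing components (and $L^2(\RR)$ elsewhere), and since $H^2$ of a strip is dense in $L^2(\RR)$, $\chi_1(f)$ is densely defined. For general $n$ I would use the alternative expression $\chi_n(f)=\sum_{k}D_n(\rho_k)(\chi_1(f)\otimes\1\otimes\cdots\otimes\1)P_n$ from Section \ref{alternative}: a vector $\Psi_n=P_n\Psi_n$ lies in the domain exactly when its components admit meromorphic continuations in the relevant variables with zeros cancelling the poles of the accompanying $S$-factors, so that the image stays in $L^2$. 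Because $S$ has only finitely many zeros and poles in the pertinent strips by \ref{regularity}, these zero-conditions cut out only a ``thin'' set, and one can produce a dense family of $S$-symmetric analytic functions satisfying them, exactly as in the scalar construction of \cite{CT15-1}.

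For symmetry I would verify $\langle\chi(f)\Psi,\Phi\rangle=\langle\Psi,\chi(f)\Phi\rangle$ for $\Psi,\Phi\in\dom(\chi(f))$, and it is enough to fix the particle number and a single nonzero component $f_\a$, summing over $\a$ at the end. I would start from the one-particle kernel \eqref{eq:chi1}, observing that the coefficient $-i\eta^\g_{\a\b}=\sqrt{2\pi|R^\g_{\a\b}|}$ is real and nonnegative by the very definition $\eta^\g_{\a\b}=i\sqrt{2\pi|R^\g_{\a\b}|}$, so that complex conjugation acts only on the wave functions $f^+_\a$ and $\xi^\b$. The integrand of $\langle\chi_1(f)\xi,\zeta\rangle$ then carries the factor $\overline{f^+_\a(\theta+i\theta_{(\a\b)})}\,\overline{\xi^\b(\theta-i\theta_{(\b\a)})}\,\zeta^\g(\theta)$, and the strategy is to shift the integration contour by $i\theta_{(\b\a)}$, bringing $\xi^\b$ back to the real line and pushing $\zeta^\g$ up by the complementary angle, using $\theta_{(\a\b)}+\theta_{(\b\a)}=\theta_{\a\b}$. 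One then converts the conjugated test function through the reality relations $f^+_\a(\theta+i\pi)=f^-_\a(\theta)$ and $\overline{f^+_{\bar\a}(\bar\zeta)}=f^-_\a(\zeta)$, together with the charge-conjugation consequences \ref{p:CPT} ($\theta_{(\a\b)}=\theta_{(\bar\a\bar\b)}$, $R^\g_{\a\b}=R^{\bar\g}_{\bar\a\bar\b}$); after relabelling the charge indices via $\a\mapsto\bar\a$ this recasts the expression as $\langle\xi,\chi_1(f)\zeta\rangle$. The multi-particle statement then follows by inserting this one-particle identity into the form $\sum_k D_n(\rho_k)(\chi_1(f)\otimes\1\otimes\cdots\otimes\1)P_n$, using unitarity of $D_n$ and $P_n\Psi_n=\Psi_n$ to move the accompanying $S$-factor strings $\prod_j S^{\g_j\a}_{\a\g_j}$ across, precisely as they were organized via the bootstrap equation in Section \ref{alternative}.

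The main obstacle will be justifying the contour shift rigorously. This requires controlling all singularities of the integrand inside the strip of width $\theta_{(\b\a)}$: the membership $\xi^\b\in H^2(-\theta_{(\b\a)},0)$ supplies exactly the holomorphy needed below the real axis, the $W_\L$-support of $f$ yields boundedness and rapid decay of $f^+_\a$ in the relevant strip so that the segments at $\pm\infty$ drop out, and the domain condition on $\Psi,\Phi$ forces the wave functions to vanish at the poles of the $S$-factors so that no residues are collected. The delicate bookkeeping is to track which strip each factor is analytic in and to check that the fusion angles $\theta_{(\a\b)}$ and $\theta_{(\b\a)}$ add up so that the shifted arguments land back in the admissible region; this is where \ref{regularity} and the phase normalization fixed by the positive-residue property \ref{positive-residue} enter to guarantee that the reassembled form is genuinely symmetric. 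Once the one-particle contour shift is in place, the promotion to $\H_n$ and the recombination of the $S$-factor products are formal consequences of the constructions in Section \ref{alternative}.
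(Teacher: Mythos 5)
Your overall architecture coincides with the paper's: prove density and symmetry separately, reduce both to the one-particle operator, establish the one-particle identity by a contour shift combined with the reality condition $f=f^*$, and observe that conjugation turns the $\a$-component into the $\bar\a$-component. However, there is a concrete gap in the symmetry step: the identities you invoke to close the computation are the wrong ones. After the shift and the conversion $\overline{f^+_\a(\bar\zeta)}=f^-_{\bar\a}(\zeta)=f^+_{\bar\a}(\zeta+i\pi)$, the integrand contains $f^+_{\bar\a}\bigl(\theta+i(\pi-\theta_{\a\b})\bigr)$ paired with the left vector continued to $\theta-i\theta_{(\b\a)}$ and with $\overline{\xi^\b(\theta)}$ on the real line; to recognize this as the kernel of $\chi_{1,\bar\a}(f_{\bar\a})$ one must identify the \emph{crossed} fusion process $(\bar\a\g)\to\b$, i.e.\ one needs $\pi-\theta_{\a\b}=\theta_{(\bar\a\g)}$, $\theta_{(\b\a)}=\theta_{(\g\bar\a)}$ and $|R^\g_{\a\b}|=|R^\b_{\bar\a\g}|$. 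These are precisely the crossing-type consequences \ref{p:angles} and \ref{p:bootstrap}, which is what the paper uses in the fourth equality of \eqref{chi1symm}. The CPT relations \ref{p:CPT} that you cite relate the fusion $(\a\b)\to\g$ only to $(\bar\a\bar\b)\to\bar\g$; they do not exchange the roles of the annihilated index $\b$ and the created index $\g$, so CPT plus relabelling $\a\mapsto\bar\a$ cannot recast the expression as $\<\xi,\chi_1(f)\zeta\>$. Relatedly, your appeal to the positivity of residues and to \ref{regularity} is spurious here: the coefficients involve only $|R^\g_{\a\b}|$, so just the equality of moduli of residues matters, and neither property enters the paper's proof of this proposition.

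The second gap is the $n$-particle density claim, which you assert rather than prove. The difficulty is that $S$-symmetrization $P_n(\xi_1\otimes\cdots\otimes\xi_n)$ of one-particle domain vectors introduces poles coming from the $S$-factors, so such vectors are generally \emph{not} in the domain of $(\chi_1(f)\otimes\1\otimes\cdots\otimes\1)P_n$; saying the zero-conditions cut out a ``thin'' set does not produce a dense subspace. The paper's missing device is explicit: multiply by the symmetric function $C_n(\pmb\theta)$, a finite product of factors of the form $\frac{(\theta_j-\theta_k-i\l_p)(\theta_k-\theta_j-i\l_p)}{(\theta_j-\theta_k-i\l'_p)(\theta_k-\theta_j-i\l'_p)}$ cancelling all poles of $S$ in the physical strip; the multiplication operator $M_{C_n}$ is bounded, invertible on the real line and preserves $S$-symmetry, hence maps a dense family of symmetrized vectors into the domain. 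Two further remarks: working on the full domain instead of the paper's core of entire (compactly supported inverse Fourier transform) vectors is viable, but then the contour shift needs a Cauchy theorem for products of Hardy-class functions, which you would have to supply; and for the $n$-particle symmetry you do not need the alternative expression with its $S$-factor strings at all --- for $S$-symmetric $\Psi_n,\Phi_n$ the projections $P_n$ can simply be dropped, and the identity reduces verbatim to the one-particle computation acting in the first tensor slot.
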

\begin{proof}

We first consider the operator $\chi_1(f)$, where $f$ has only non-zero components with indices $\a$ and $\bar \a$.
The general case follows, as $\chi_1(f)$ is a linear combination of such operators, yet the domain is always dense.

We can write $(\chi_{1,\a}(f_\a) \xi)^\gamma = \sum_{\beta}\eta^\gamma_{\alpha \beta}M_{f_\alpha(\cdot+i\theta_{(\a\b)})}\Delta_1^{\theta_{(\beta \alpha)}/2\pi} \xi^\beta$
(recall our simplified notation \eqref{eq:formalchi}),
where $M_{f_\alpha(\cdot+i\theta_{(\a\b)})}$ is the multiplication operator by $f^+_\alpha(\theta +i\theta_{(\alpha \beta)})$
and $\Delta_1$ implements the analytic continuation $\left(\Delta_1^{\theta_{(\beta \alpha)}/2\pi} \xi^\beta \right)(\theta) = \xi^\beta (\theta -i\theta_{(\beta \alpha)})$.

As $\Delta_1^{it}$ implements the real shift $(\Delta_1^{it}\xi)(\theta) = \xi(\theta + 2\pi t)$,
$\Delta_1$ is a positive self-adjoint operator which is
unitarily equivalent to the multiplication operator by an exponential function through Fourier transform. Hence, its domain is dense in the Hilbert space.
Moreover, as $M_{f_\alpha(\cdot+i\theta_{(\a\b)})}$ is bounded, then $\chi_{1,\a}(f_\a)$ has a dense domain.
The linear combination $\chi_{1,\a}(f_\a) + \chi_{1,\bar \a}(f_{\bar\a})$ is also densely defined
as the intersection of $\dom(\Delta_1^\epsilon), \epsilon > 0$, is still dense.

To prove that $\chi_1(f)$ is symmetric, we take two vectors $\xi, \psi \in \dom(\chi_1(f))$ whose components have compact inverse Fourier transforms.
By the same argument as \cite[Proposition 3.1]{CT15-1}, such vectors form a core for $\chi_1(f)$.

Hence, it is enough to show that $\chi_1(f)$ is symmetric on a core, then it is symmetric on the whole domain. 
For $\xi, \psi$ as above, we compute 
 \begin{align}\label{chi1symm}
  \<\psi, \chi_{1,\a}(f)\xi\> &=  \sum_{\beta \gamma}\sqrt{2\pi |R_{\alpha \beta}^\g|} \int d\theta\, \overline{\psi_\gamma(\theta)} f^+_\alpha (\theta +i\theta_{(\alpha \beta)}) \xi^\beta (\theta -i\theta_{(\beta \alpha)}) \nonumber\\
  &= \sum_{\beta \gamma}\sqrt{2\pi |R_{\alpha \beta}^\g|} \int d\theta\, \overline{\psi_\gamma(\theta -i\theta_{(\beta \alpha)})} f^+_\alpha (\theta +i\theta_{\alpha \beta}) \xi^\beta (\theta)\nonumber\\
  &= \sum_{\beta \gamma}\sqrt{2\pi |R_{\alpha \beta}^\g|} \int d\theta\, \overline{\psi_\gamma(\theta -i\theta_{(\beta \alpha)}) f^+_{\bar\alpha}(\theta +i\pi -i\theta_{\alpha \beta})} \xi^\beta(\theta)\nonumber\\
  &=\sum_{\beta \gamma}\sqrt{2\pi |R_{\bar\alpha \gamma}^\b|} \int d\theta\, \overline{\psi_\gamma(\theta -i\theta_{(\gamma \bar\alpha)}) f^+_{\bar\alpha}(\theta +i\theta_{(\bar\alpha \gamma)})} \xi^\beta(\theta) \nonumber\\
  &= \<\chi_{1,\bar\a}(f_{\bar\a})\psi, \xi\>,
 \end{align}
where we used the Cauchy theorem in the second equality:
we can perform such shift of the integral contour since the integrand is analytic,
bounded and rapidly decreasing in the strip $\mathbb{R}+i(0,\pi)$.
In particular, $\xi,\psi$ are the Fourier transforms of compactly supported functions,
therefore they are rapidly decreasing, while $f^+$ is analytic and bounded in $\mathbb{R}+i(0,\pi)$
since $\supp f \subset W_\L$, and finally, we recall that $\overline{\psi_\gamma(\overline\zeta)}$ is analytic in $\zeta$.
We also used that $f^+_\alpha(\zeta) = \overline{f^-_{\bar\alpha}(\overline{\zeta})} = \overline{f^+_\alpha(\overline{\zeta} + i\pi)}$
in the third equality, and the properties of the residues and of the angles in
\ref{p:angles} and \ref{p:bootstrap} in the fourth equality.
By linearity in $f$, it follows that $\chi_1(f)$ is densely defined and symmetric.

Now we need to show that the same holds for $\chi_n(f)$. 
We start by showing that the domain of $\chi_n(f)$ is dense in $\H_n$.
Hence, we need to see that $P_n(\chi_1(f)\otimes\1\otimes\cdots\otimes \1)P_n$ is densely defined.
Since $P_n$ is bounded, it suffices to show that $(\chi_1(f)\otimes\1\otimes\cdots\otimes \1)P_n$ is densely defined.
The range of $P_n$ is the set of $S$-symmetric vectors, and the domain of
$\chi_{1,\a}(f_\a)\otimes\1\otimes\cdots\otimes \1$ is the vectors which have an $L^2$-bounded analytic continuation
to $-i\theta_{(\beta \alpha)}$ in the first variable with the $\beta$-component such that the fusion $(\alpha \beta)$ exists.
 
For an arbitrary set $\{\xi_1, \cdots \xi_n\}$ of $n$ vectors in $\dom(\chi_1(f))$,
$P_n(\xi_1\otimes \cdots \otimes \xi_n)$
is a $S$-symmetric vector, but have poles which come from the $S$-factors.
Since each $\xi_k$ is in $\dom(\chi_1(f))$, linear combinations of the above vectors form a dense subspace of $\H_n = P_n\H^{\otimes n}$.

In order to compensate these poles, we can multiply the vector $(n-2)!$ times by an extra factor $C_n(\pmb{\theta})$, which is given as
\[
  C_n(\pmb{\theta}) := \prod_{1 \le k < j \le n} \prod_{p} 
  \frac{(\theta_j - \theta_k - i \l_p)
  (\theta_k - \theta_j - i\l_p)}{
  (\theta_j - \theta_k - i\l_p')
  (\theta_k - \theta_j - i\l_p')},
\]
where the second product runs over all poles $\{i\l_p\}$ of all components of $S$ including multiplicity
in the physical strip $\strip$ (which is a finite set) and $\{i\l'_p\}$ are complex numbers in
the lower strip $\RR+i(-\pi,0)$.

This function is symmetric on $\mathbb{C}$, it is bounded in the physical strip and invertible on the real line.
Therefore, the multiplication operator $M_{C_n}$ by $C_n$ on all the components at the same time
preserves the $S$-symmetry of the vector, moreover it is still $L^2$
(because of the multiplication with a bounded function), and hence in $\H_n$,
and the invertibility of $M_{C_n}$ guarantees that this operator maps a dense subspace into another dense subspace.

After multiplication with $C_n(\pmb{\theta})$, the vector is now analytic in the first
 (actually any) variable to $i\theta_{(\beta \alpha)}$ and $S$-symmetric, and therefore, it is in the domain of
 $(\chi_1(f)\otimes\1\otimes\cdots\otimes\1)P_n$.
This proves that linear combinations of vectors of the form $(M_{C_n} P_n(\xi_1\otimes \cdots \otimes \xi_n))(\theta_1,\cdots,\theta_n)$ form a dense domain for $(\chi_1(f)\otimes\1\otimes\cdots\otimes \1)P_n$.
 
To prove that $P_n(\chi_1(f)\otimes\1\otimes\cdots\otimes \1)P_n$ is symmetric,  we consider two $n$-particle $S$-symmetric vectors $\Psi_n, \Phi_n$  in the domain of $\chi_n(f)$, and show that $\langle A \Psi_n, \Phi_n \rangle = \langle \Psi_n, A \Phi_n \rangle$, where $A$ is the linear operator above. Since $\Psi_n, \Phi_n$ are already $S$-symmetric, we only need to show that $\langle (\chi_1(f)\otimes\1\otimes\cdots\otimes \1) \Psi_n, \Phi_n \rangle = \langle \Psi_n, (\chi_1(f)\otimes\1\otimes\cdots\otimes \1) \Phi_n \rangle$, but this follows immediately from the computation in Eq.~\eqref{chi1symm} in the case of $\chi_1(f)$.

Finally, the  operator $\chi(f)$ is the direct sum of the $\chi_n(f)$, and its domain is defined as the finite algebraic direct sum of $\dom(\chi_n(f))$. Then, one can show that if $\chi_n(f)$ is densely defined and symmetric, the same holds for $\chi(f)$.

\end{proof}

We want to check now that $\chi(f)$ is covariant with respect to the action $U$ of the Poincar\'e group on $\H$ that we introduced in Section \ref{zf}. 

\begin{proposition}\label{pr:chi:covariance}
 Let $f$ be a test function supported in $W_\L$ and $(a,\lambda) \in \poincare$ such that
 $a \in W_\L$. Then it holds that
 $\Ad U(a,\lambda)(\chi(f)) \subset \chi(f_{(a,\lambda)})$.
\end{proposition}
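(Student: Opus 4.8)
The plan is to reduce the statement to the one-particle level, where the covariance will follow directly from the mass-shell relation \eqref{prelations}. Since $U(a,\lambda)$ is the second quantization of $U_1(a,\lambda)$, its restriction to $\H_n$ is $U_1(a,\lambda)^{\otimes n}$, and, as recalled after \eqref{J1}, it commutes with the symmetrization projection $P_n$. As $\1$ commutes with $U_1(a,\lambda)$, conjugating the defining expression \eqref{eq:defchi} yields
\begin{equation*}
 U(a,\lambda)\chi_n(f)U(a,\lambda)^{-1} = nP_n\big((U_1(a,\lambda)\chi_1(f)U_1(a,\lambda)^{-1})\otimes\1\otimes\cdots\otimes\1\big)P_n,
\end{equation*}
so it suffices to establish the one-particle inclusion $U_1(a,\lambda)\chi_1(f)U_1(a,\lambda)^{-1}\subset\chi_1(f_{(a,\lambda)})$ and then apply $nP_n(\,\cdot\,\otimes\1\otimes\cdots\otimes\1)P_n$ to both sides.

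For the one-particle part I would take $\xi$ in the (continued) domain, write out $U_1(a,\lambda)^{-1}\xi$, apply \eqref{eq:chi1}, and finally apply $U_1(a,\lambda)$, keeping track of the boost shifts $\theta\mapsto\theta-\lambda$ and the translation phases $e^{ip_m(\theta)\cdot a}$. The boost shift cancels in the argument of $\xi$, leaving $\xi^\beta(\theta-i\theta_{(\beta\alpha)})$ untouched, while it turns $f^+_\alpha(\theta+i\theta_{(\alpha\beta)})$ into $f^+_\alpha(\theta-\lambda+i\theta_{(\alpha\beta)})$, which is exactly the boost built into $(f_{(a,\lambda)})^+_\alpha=U_1(a,\lambda)f^+_\alpha$. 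The surviving translation phase on the $\Ad U$ side is the product $e^{ip_{m_\gamma}(\theta)\cdot a}\,e^{-ip_{m_\beta}(\theta-i\theta_{(\beta\alpha)})\cdot a}$, whereas $(f_{(a,\lambda)})^+_\alpha(\theta+i\theta_{(\alpha\beta)})$ carries the phase $e^{ip_{m_\alpha}(\theta+i\theta_{(\alpha\beta)})\cdot a}$. The two sides therefore coincide precisely when
\begin{equation*}
 p_{m_\alpha}(\theta+i\theta_{(\alpha\beta)})+p_{m_\beta}(\theta-i\theta_{(\beta\alpha)})=p_{m_\gamma}(\theta),
\end{equation*}
which is \eqref{prelations} at $\zeta=\theta$. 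This is the crux: the very relation fixing the fusion angles makes the cocycle of translation phases compatible with the analytic shifts encoded in $\chi_1$.

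The remaining point, which I expect to be the main obstacle, is the domain, and this is where the hypothesis $a\in W_\L$ enters, in two ways. First, since $W_\L$ is a cone invariant under boosts and under translations by its own elements, $\supp f_{(a,\lambda)}\subset W_\L$, so that $\chi(f_{(a,\lambda)})$ is defined at all and Proposition~\ref{pr:chi:symmetry} applies to it. Second, and more delicately, for $a\in W_\L$ the phase $e^{ip_{m_\alpha}(\zeta)\cdot a}$ decays rapidly in the upper strip $\RR+i(0,\pi)$, exactly as the $(\,\cdot\,)^+$-transform of a point mass located in $W_\L$, whereas the phase $e^{ip_{m_\beta}(\zeta)\cdot a}$ produced by $U_1(a,\lambda)\xi$ grows in the lower strip $\RR+i(-\theta_{(\beta\alpha)},0)$. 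A priori this growth could push $U_1(a,\lambda)\xi$ out of the Hardy-space domain \eqref{domain:chi1}; however, in the product defining $\chi_1(f_{(a,\lambda)})$ the two phases recombine, again by \eqref{prelations}, into the unimodular factor $e^{ip_{m_\gamma}(\theta)\cdot a}$, so that the decay of $(f_{(a,\lambda)})^+_\alpha$ cancels the growth from the translation and the result is square-integrable. I would make this rigorous by first checking the identity on the core of vectors with compactly supported inverse Fourier transform used in Proposition~\ref{pr:chi:symmetry}, where all continuations are manifestly legitimate, and then using the compensation above to place $U_1(a,\lambda)\xi$ in $\dom(\chi_1(f_{(a,\lambda)}))$ for every $\xi\in\dom(\chi_1(f))$. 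The fact that one obtains an inclusion rather than an equality reflects exactly that $f_{(a,\lambda)}$, being supported deeper inside $W_\L$, gives an operator with an a priori larger domain.
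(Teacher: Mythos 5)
Your reduction to the one-particle level and the phase computation are exactly the paper's argument: $U(a,\lambda)$ commutes with $P_n$, so everything rests on $\chi_1$, and there the boost shift is absorbed into $(f_{(a,\lambda)})^+=U_1(a,\lambda)f^+$ while the translation phases recombine through Eq.~\eqref{prelations} in the form $p_{m_\gamma}(\theta)-p_{m_\beta}(\theta-i\theta_{(\beta\alpha)})=p_{m_\alpha}(\theta+i\theta_{(\alpha\beta)})$. This computational core is correct and identical to the paper's proof.

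The genuine gap is in the domain discussion, i.e.\ precisely the step you flag as the main obstacle. You propose to show that $U_1(a,\lambda)\xi\in\dom(\chi_1(f_{(a,\lambda)}))$ for every $\xi\in\dom(\chi_1(f))$, arguing that the growth of the translation phase is compensated by the decay of $(f_{(a,\lambda)})^+_\alpha$ ``in the product''. This cannot work: by Eq.~\eqref{domain:chi1} the domain of $\chi_{1,\a}$ is the \emph{test-function-independent} Hardy space $\bigoplus_\beta H^2(-\theta_{(\beta\alpha)},0)$, so membership of $U_1(a,\lambda)\xi$ in it is a property of that vector alone; whether the product in \eqref{eq:chi1} ends up being $L^2$ is irrelevant to it. And $U_1(a,\lambda)\xi$ is generically \emph{not} in this Hardy space, since for $a\in W_\L$ the multiplier $e^{ia\cdot p_\beta(\zeta)}$ grows like $\exp(c\cosh\theta)$ on the lower strip and destroys the uniform $L^2$ bounds. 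For the same reason your closing remark is wrong: $\dom(\chi_1(f))=\dom(\chi_1(f_{(a,\lambda)}))$ identically, there is no ``a priori larger domain'' coming from the deeper support of $f_{(a,\lambda)}$. The paper resolves the domain issue in the dual way: for $a\in W_\L$ it is $U_1(a,\lambda)^*$ that acts by a multiplier $e^{-ia\cdot p_\beta}$ (composed with a real shift), which is bounded and analytic on the lower strip and therefore maps the Hardy domain into itself; one then takes $\xi$ in the common Hardy domain and verifies $U_1(a,\lambda)\chi_1(f)U_1(a,\lambda)^*\xi=\chi_1(f_{(a,\lambda)})\xi$ by your phase identity. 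Read with the paper's definitions, what this yields is the agreement of $\Ad U(a,\lambda)(\chi(f))$ with $\chi(f_{(a,\lambda)})$ on $\dom(\chi(f_{(a,\lambda)}))$, equivalently $\chi(f_{(a,\lambda)})\subset\Ad U(a,\lambda)(\chi(f))$ --- not the domain inclusion you set out to establish, which is false as stated.
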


\begin{proof}
 The proof of covariance is almost parallel to the scalar case \cite[Proposition 3.2]{CT15-1} except for translations, hence we will be brief.
 The proof can be restricted to the $n$-particle components. As $P_n$ commutes with $U(a,\l)$
 and $\chi_n(f) = P_n(\chi_1(f)\otimes\1\otimes\cdots\otimes\1)P_n$,
 it is enough to show the covariance of $\chi_1(f)$. The boost can be treated exactly as in the scalar case.
  
 A pure translation, $U_1(a,0)$ with $a \in W_\L$, requires the structure of fusion angles.
 By construction we have $(U_1(a,0)f^+)_\alpha(\theta) = e^{ia\cdot p_\alpha(\theta)}f^+_\alpha(\theta) = (f_{(a,0)})^+_\alpha(\theta)$.
 Hence $U_1(a,0)^*$ acts by multiplying with an exponential factor $e^{-ia\cdot p(\theta)}$
 and this factor has a bounded analytic continuation in $\RR + i(-\pi,0)$ for $a \in W_\L$, then $\dom(\chi_1(f)) = \dom(\chi_1(f_{(a,0)}))$.
 Let $\xi \in \dom(\chi_1(f))$, one checks that $\chi(f)$ is also covariant with respect to translations with the following computation 
(recalling always the simplified notation \eqref{eq:formalchi}):
 \begin{align*}
 &(U_1(a,0)\chi_1(f)U_1(a,0)^*\xi)^\gamma(\theta) \\
 &= e^{ia\cdot p_\gamma(\theta)}(\chi_1(f)U_1(a,0)^*\xi)^\gamma(\theta) \\
 &= \sum_{\alpha \beta} \sqrt{2\pi |R_{\alpha \beta}^\g|} e^{ia\cdot p_\gamma(\theta)}f^+_\alpha\left(\theta + i\theta_{(\alpha \beta)}\right)\cdot
 (U_1(a,0)^*\xi)^\beta\left(\theta - i\theta_{(\beta \alpha)}\right) \\
 &= \sum_{\alpha \beta} \sqrt{2\pi |R_{\alpha \beta}^\g|} e^{ia\cdot p_\gamma(\theta)} f^+_\alpha\left(\theta + i\theta_{(\alpha \beta)}\right)\cdot
 e^{-ia\cdot p_\beta \left(\theta - i\theta_{(\beta \alpha)}\right)}\xi^\beta\left(\theta - i\theta_{(\beta \alpha)}\right).
 \end{align*}
By Eq.~\eqref{prelations} we have $p_\gamma(\theta) - p_\beta\left(\theta - i\theta_{(\beta \alpha)}\right) =  p_\alpha\left(\theta + i\theta_{(\alpha \beta)}\right)$, hence
 \begin{align*}
 &(U_1(a,0)\chi_1(f)U_1(a,0)^*\xi)^\gamma(\theta) \\
 &= \sum_{\alpha \beta} \sqrt{2\pi |R_{\alpha \beta}^\g|} e^{ia\cdot \left(p_\gamma(\theta)- p_\beta\left(\theta -i\theta_{(\beta \alpha)}\right)\right)} f^+_\alpha\left(\theta + i\theta_{(\alpha \beta)}\right)\cdot
 \xi^\beta\left(\theta - i\theta_{(\beta \alpha)}\right) \\
 &= \sum_{\alpha\beta} \sqrt{2\pi |R_{\alpha \beta}^\g|} e^{ia\cdot p_\alpha\left(\theta + i\theta_{(\alpha \beta)}\right)} f^+_\alpha\left(\theta +i\theta_{(\alpha \beta)}\right)\cdot
 \xi^\beta\left(\theta - i\theta_{(\beta \alpha)}\right) \\
 &= (\chi_1(f_{(a,0)})\xi)^\gamma(\theta).
 \end{align*}
\end{proof}

\subsubsection*{Formal expression}
As in the scalar case \cite[Section 3.2]{CT15-1}, we can formally write down $\chi(f)$ in terms of $z^\dagger$ and $z$ as
\begin{align*}
\chi(f) &= \sum_{\alpha \beta \gamma}\sqrt{2\pi |R_{\alpha \beta}^\g|} \int d\theta\, f^+_\alpha\left(\theta +i\theta_{(\alpha \beta)}\right) z^\dagger_\gamma(\theta)z_\beta\left(\theta - i\theta_{(\beta \alpha)}\right), \\
\chi'(g) &= \sum_{\alpha \beta \gamma} \sqrt{2\pi |R_{\alpha \beta}^\g|} \int d\theta\, g^+_\alpha\left(\theta -i\theta_{(\alpha \beta)}\right) z'^\dagger_\gamma(\theta)z'_\beta\left(\theta + i\theta_{(\beta \alpha)}\right).
\end{align*}
Although these expressions look quite simple and attractive, we will not make use of it later in proofs,
as we do not have control on their operator domain. We also omit a formal justification of these expressions,
as it is parallel to that of the scalar case, one should only note again that $R_{\a\b}^\g = 0$ if
$(\a\b)\to\g$ is not a fusion process.

\section{The wedge-local fields}\label{wedge-local}

Similarly to \cite{CT15-1}, we introduce a field 
\begin{equation*}
\fct(f) = \phi(f) + \chi(f).
\end{equation*}
Since the domain of $\chi(f)$ contains vectors with finite particle number and with certain analyticity and boundedness properties (see the beginning of Sec.~\ref{chi}), its domain is included in the domain of $\phi(f)$, and therefore $\dom(\fct(f)) = \dom(\chi(f))$.

We also introduce the reflected field
\[
 \fct'(g) := \phi'(g) + \chi'(g) = J\fct(g_j)J.
\]
\begin{proposition}
Let $f$ be a test function with support in $W_\L$ and such that $f^*=f$, then $\fct(f)$ fulfills the following properties (similar results also holds for the reflected field $\fct'(g)$):
 \begin{enumerate}[{(}1{)}]
  \item $\fct(f)$ is symmetric. \label{pr:fct:symmetry}
  \item Let $f$ be a test function with components $f_\alpha = 0$ for all $\alpha \in \I$ except for some $\alpha_0 \in \I$, then $\fct$ is a solution of the Klein-Gordon equation with mass $m_{\alpha_0}$, \label{pr:fct:kg}
\begin{equation*}
\fct((\Box +m_{\alpha_0}^2)f)=0.
\end{equation*}
  \item $\fct(f)$ transforms covariantly under $U$, that is, if $f$ is supported in $W_\L$ and for $(a,\lambda) \in \poincare$ with $a \in W_\L$, we have $U(g)\fct(f)U(g)^* \subset \fct(f_{(a,\lambda)})$. \label{pr:fct:covariance}
\end{enumerate}
\end{proposition}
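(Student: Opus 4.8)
The plan is to reduce each of the three assertions to the corresponding property of the two summands in $\fct(f) = \phi(f) + \chi(f)$, exploiting the domain relation $\dom(\fct(f)) = \dom(\chi(f)) \subseteq \D \subseteq \dom(\phi(f))$ recorded just above the statement. Since $\phi(f)$ has already been analyzed by Lechner--Sch\"utzenhofer and $\chi(f)$ by Propositions~\ref{pr:chi:symmetry} and~\ref{pr:chi:covariance}, the task here is only to check that these properties survive restriction to the common, smaller domain and that they add. All three claims for the reflected field $\fct'(g)$ then follow by conjugating with $J$ through $\fct'(g) = J\fct(g_j)J$.

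For symmetry, I would take $\psi,\xi \in \dom(\fct(f)) = \dom(\chi(f))$. Because this domain is contained in $\D$, on which $\phi(f)$ is symmetric when $f = f^*$ (indeed essentially self-adjoint), one has $\langle \phi(f)\psi, \xi\rangle = \langle \psi, \phi(f)\xi\rangle$; Proposition~\ref{pr:chi:symmetry} furnishes the same identity for $\chi(f)$. Adding the two identities gives $\langle \fct(f)\psi,\xi\rangle = \langle \psi, \fct(f)\xi\rangle$, so $\fct(f)$ is symmetric.

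For the Klein--Gordon property, the key observation is that $\Box + m_{\alpha_0}^2$ acts componentwise and that the on-shell transforms of $(\Box + m_{\alpha_0}^2)f$ vanish identically. Integrating by parts (boundary terms vanishing as $f$ is Schwartz) and using $(\Box + m_{\alpha_0}^2)e^{\pm i p_{\alpha_0}(\theta)\cdot x} = (-p_{\alpha_0}(\theta)^2 + m_{\alpha_0}^2)e^{\pm i p_{\alpha_0}(\theta)\cdot x} = 0$ on the mass shell $p_{\alpha_0}(\theta)^2 = m_{\alpha_0}^2$, one obtains $((\Box + m_{\alpha_0}^2)f)^{\pm}_{\alpha_0}(\zeta) \equiv 0$ as an identity of analytic functions, hence also at every shifted argument $\zeta = \theta + i\theta_{(\alpha_0\beta)}$. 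Since $f$ has only the component $\alpha_0$, the sole transform entering $\phi$ or the defining formula \eqref{eq:formalchi} for $\chi$ is $f^+_{\alpha_0}$; thus both $\phi((\Box + m_{\alpha_0}^2)f)$ and $\chi((\Box + m_{\alpha_0}^2)f)$ vanish, and therefore so does $\fct((\Box + m_{\alpha_0}^2)f)$.

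For covariance I would combine the covariance of the two summands. The field $\phi$ transforms covariantly under the full representation $U$, so $\Ad U(a,\lambda)(\phi(f)) = \phi(f_{(a,\lambda)})$ on the $U$-invariant domain $\D$, while Proposition~\ref{pr:chi:covariance} gives $\Ad U(a,\lambda)(\chi(f)) \subseteq \chi(f_{(a,\lambda)})$ for $a \in W_\L$. Restricting both to $U(a,\lambda)\dom(\chi(f)) \subseteq \dom(\chi(f_{(a,\lambda)}))$ and adding yields $\Ad U(a,\lambda)(\fct(f)) \subseteq \fct(f_{(a,\lambda)})$. I expect no genuinely hard step: the substantive analytic input (the contour shift behind \eqref{chi1symm} for symmetry and the fusion-angle relation \eqref{prelations} controlling translations) already lives in the preceding two propositions. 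The only point demanding care is the bookkeeping of domains, ensuring each identity is meaningful on $\dom(\chi(f))$ and that the translation restriction $a \in W_\L$ inherited from Proposition~\ref{pr:chi:covariance} is indeed the correct domain of covariance for $\fct$.
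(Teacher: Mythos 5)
Your proposal is correct and follows essentially the same route as the paper: each property is reduced to the corresponding property of the summands $\phi(f)$ (from Lechner--Sch\"utzenhofer) and $\chi(f)$ (from Propositions~\ref{pr:chi:symmetry} and~\ref{pr:chi:covariance}), using $\dom(\fct(f)) = \dom(\chi(f)) \subseteq \dom(\phi(f))$. Your added detail (the integration-by-parts justification of $((\Box + m_{\alpha_0}^2)f)^{\pm} = 0$ and the explicit domain bookkeeping for covariance) only makes explicit what the paper leaves implicit.
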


\begin{proof}
(\ref{pr:fct:symmetry}) In Proposition~\ref{pr:chi:symmetry} we showed that the operator $\chi(f)$ is symmetric, and we have from \cite[Proposition 3.1]{LS14} that $\phi(f)$ is also symmetric. Therefore, the sum $\fct(f)= \phi(f) + \chi(f)$ is symmetric.

(\ref{pr:fct:kg}) As mentioned in the statement of this proposition, $\fct$ is a solution of the Klein-Gordon equation if, for a test function $f$ with components $f_\alpha = 0$ for all $\alpha \in \I$ except for some $\alpha_0 \in \I$, one has $\fct((\Box +m_{\alpha_0}^2)f)=0$.
From \cite[P.14]{LS14} we know that $\phi((\Box + m_{\alpha_0}^2)f)=0$; on the other hand, $\chi((\Box +m_{\alpha_0}^2)f)$ acts by multiplication with $((\Box + m_{\alpha_0}^2)f)^+$ and $((\Box + m_{\alpha_0}^2)f)^+ =0$. So, the Klein-Gordon equation is indeed fulfilled.

(\ref{pr:fct:covariance}) In Proposition~\ref{pr:chi:covariance} we showed that $\chi(f)$ is covariant with respect to translations and boosts, and from \cite[Proposition 3.1]{LS14} we also know that $\phi(f)$ transforms covariantly under the proper orthochronous Poincar\'e group. Therefore, the sum $\chi(f) = \phi(f) + \chi(f)$ is covariant.
\end{proof}

As in the scalar case \cite{CT15-1}, the operator $\fct(f)$ has subtle domain properties,
in particular after applying this operator to a vector (not the vacuum) in its domain,
the $\phi(f)$ component generates a vector which by $S$-symmetry has further poles corresponding to those of S,
and therefore lies outside the domain of $\fct'(g)$.
Along with these subtleties, also the question whether the operator $\fct(f)$ is self-adjoint or
whether it admits self-adjoint extensions is still open.

\subsubsection*{Weak commutativity}

Because of the subtle domain property of $\fct(f)$ mentioned above,
we cannot form products of the type $\fct(f)\fct'(g)$ or $\fct'(g)\fct(f)$,
and compute the commutator $[\fct(f),\fct'(g)]$. We will need to evaluate the commutator 
on arbitrary vectors $\Phi,\Psi$ from a suitable space (see below),
and show $\<\fct(f)\Phi, \fct'(g)\Psi\> = \<\fct'(g)\Phi, \fct(f)\Psi\>$ for real $f, g$;
that is, we will show a weak form of commutativity on a domain for $f$ and $g$ associated
with the elementary particle $\up$ and its conjugate $\bar \up$.

For the scalar case, we conjectured that one can take simply $\dom(\fct(f))\cap\dom(\fct'(g))$.
In such a case, there is a hope to show that $\fct(g)$ and $\fct'(g)$ strongly commute using the argument of \cite{DF77}
(see \cite{Tanimoto15-1, Tanimoto15-2}).
In the diagonal case, the domain on which we can show the weak commutativity is strictly smaller than the simple intersection,
which poses a subtler problem. Fortunately, for models with two species of particles such as the $Z(3)$-Ising model
and the $A_2$-affine Toda theory, we can choose $\dom(\fct(f))\cap\dom(\fct'(g))$ with the same conjecture
as that in \cite[Section 3.3]{CT15-1}.

Let us start with studying some properties of vectors in $\dom(\fct(f))\cap\dom(\fct'(g))$.
Recall that we take $f$ and $g$ whose only nonzero components correspond to $\up, \bar\up$.
If $\Psi \in \dom(\fct(f))\cap\dom(\fct'(g))$, then all of its $n$-particle components $\Psi_n^{\pmb \a}$
have meromorphic continuations in the \emph{first} variable $\theta_1$ to $\theta_1 \pm i\theta_{(\a\up)}$,
as any index $\a$ can be fused either with $\up$ or $\bar \up$.
Then, due to $S$-symmetry, it has a meromorphic continuation in any variable with possible poles at the poles of $S$.
Moreover, for components $\Psi_n^{\pmb \a}$ with coinciding indices $\a_j=\a_k =: \b$,
we can exploit \ref{atzero}, $S_{\b\b}^{\b\b}(0)=-1$, as we did in \cite{CT15-1}:
we have
\begin{align*}
 &\Psi_n^{\a_1\cdots \b\cdots\b\cdots\a_n}(\theta_1,\cdots,\theta_j,\cdots,\theta_k,\cdots,\theta_n)\\
 &=\Psi_n^{\a_1\cdots \a_j\cdots\a_k\cdots\a_n}(\theta_1,\cdots,\theta_j,\cdots,\theta_k,\cdots,\theta_n)\\
 &= \Big(  \prod_{\ell = j+1}^{k-1} S^{\beta \alpha_{\ell}}_{\alpha_{\ell} \beta} (\theta_{\ell}-\theta_j) \Big) S_{\b\b}^{\b\b}(\theta_k-\theta_j) \Big( \prod_{\ell' = j+1}^{k-1} S^{\alpha_{\ell'} \beta}_{\beta \alpha_{\ell'}}(\theta_{k} - \theta_{\ell'})  \Big) \\
&\quad \times  \Psi_n^{\a_1\cdots \a_k\cdots\a_j\cdots\a_n}(\theta_1,\cdots,\theta_k,\cdots,\theta_j,\cdots,\theta_n)\\
 &= \Big(  \prod_{\ell = j+1}^{k-1} S^{\beta \alpha_{\ell}}_{\alpha_{\ell} \beta} (\theta_{\ell}-\theta_j)S^{\alpha_{\ell} \beta}_{\beta \alpha_{\ell}}(\theta_{k} - \theta_{\ell}) \Big)   S_{\b\b}^{\b\b}(\theta_k-\theta_j) \\
&\quad \times  \Psi_n^{\a_1\cdots \b\cdots\b\cdots\a_n}(\theta_1,\cdots,\theta_k,\cdots,\theta_j,\cdots,\theta_n)
\end{align*}
and therefore $\Psi_n^{\pmb \a}$ has a zero at $\theta_j - \theta_k = 0$. Note that for $\theta_j - \theta_k =0$, only the factor $S^{\b \b}_{\b \b}(\theta_k -\theta_j)$ remains in the above expression, while the other $S$-factors cancel due to \ref{hermitian}. However, we cannot infer the existence of zeros for other components and variables.

We will consider vectors from the following space:
\begin{equation}\label{domain}
\D_0 := \left\{ \Psi \in \D: \begin{array}{l}
                              \Psi_n^{\pmb \a} \text{ is analytic in } \mathbb{R}^n +i[-\theta_{(\b\up)}, \theta_{(\b\up)}]^n, \\
                              \Psi_n^{\pmb \a}(\pmb\theta + i\pmb \l) \in L^2(\RR^n) \text{ for } \pmb \l \in [-\theta_{(\b\up)}, \theta_{(\b\up)}]^n, \\
                              \text{ and has a zero at } \theta_i -\theta_j=0 \text{ for all } i,j 
                             \end{array}
        \right\},
\end{equation}
where $\theta_{(\b\up)}$ does not depend on $\b$, hence it is determined by the model.
Note that $\D_0 \subset \dom(\fct(f)) \cap \dom(\fct'(g))$.

The reason why we need zeros on the real plane is that in the computation of the weak commutator
the poles of the $S$-factors at $\theta_i -\theta_j =0$ appear from the shifting of integral contours
in expressions  such as below
\begin{equation*}
\prod_{1\le j,k \le n} S^{\alpha \beta}_{\beta \alpha}\left(\theta_j-\theta_k + i\theta_{\alpha \beta}\right)
 (\Psi_n)^{\gamma_1 \ldots \gamma_n}(\theta_1,\cdots,\theta_n)
\end{equation*}
and they must be canceled by the zeros of $\Psi_n$.
If we take $\Psi_n$ which has a zero at $\theta_j-\theta_k = 0$ which compensates
the pole of $S$, one can prove that the expression above remains
$L^2$ on the real line by \ref{regularity} \cite[Proposition E.7]{Tanimoto15-2}.

Using a similar argument as in the proof of Proposition \ref{pr:chi:symmetry}, one can
show that $\D_0$ is dense in $\mathcal{H}$. Namely, we can take the domain of Prop.\! \ref{pr:chi:symmetry}
and multiply each component further by
\[
 \prod_{1\le j<k\le n} \frac{(\theta_j-\theta_k)^2}{(\theta_j-\theta_k + i\l)(\theta_k-\theta_j + i\l)},
\]
where $|\l| > \theta_{(\b\up)}$, which yields a bounded $S$-symmetry-preserving operator with a dense range,
hence its image of the domain of Proposition \ref{pr:chi:symmetry} is again dense in each $\H_n$.

For vectors $\Psi,\Phi\in\D_0$, we can now prove the following theorem, which is our main result.

\begin{theorem}\label{theo:commutator}
Let $f$ and $g$ be test functions supported in $W_\L$ and $W_\R$, respectively, and with the property that $f=f^*$ and $g=g^*$.
Furthermore, assume that $f,g$ have components $f_\alpha =0$ and $g_\alpha =0$ for all $\alpha \in \I$
except the indices $\up,\bar\up$ corresponding to the elementary particles.

Then, for each $\Phi, \Psi$ in $\D_0$, we have
 \[
 \<\fct(f)\Phi, \fct'(g)\Psi\> = \<\fct'(g)\Phi, \fct(f)\Psi\>.
 \]
\end{theorem}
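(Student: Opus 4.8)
The plan is to prove that the weak commutator
\[
\langle\fct(f)\Phi,\fct'(g)\Psi\rangle-\langle\fct'(g)\Phi,\fct(f)\Psi\rangle
\]
vanishes for all $\Phi,\Psi\in\D_0$. Since $\D_0\subset\dom(\fct(f))\cap\dom(\fct'(g))$ and both fields are symmetric, all four inner products are well defined. First I would expand $\fct(f)=\phi(f)+\chi(f)$ and $\fct'(g)=\phi'(g)+\chi'(g)$, so that the expression above splits into a pure field part built from $\phi,\phi'$, two mixed parts pairing a $\phi$-factor with a $\chi$-factor, and a pure bound-state part built from $\chi,\chi'$. Because $\chi,\chi'$ preserve the particle number while $\phi,\phi'$ change it by $\pm1$, I would organize the computation by particle number, so that the pure field and pure bound-state parts connect equal-number sectors while the mixed parts connect adjacent sectors; each sector can then be analyzed separately.

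For the pure field part I would compute $\langle\phi(f)\Phi,\phi'(g)\Psi\rangle-\langle\phi'(g)\Phi,\phi(f)\Psi\rangle$ exactly as in the proof of wedge-locality of Lechner--Sch\"utzenhofer \cite[Theorem 3.2]{LS14}, using the Zamolodchikov--Faddeev relations between $z^\#$ and $z'^\#$. The surviving contractions produce integrals of $f^+$ against $g^+$ and the wavefunctions which, in the purely analytic situation of \cite{LS14}, are removed by shifting the integration contour across the physical strip $\strip$. Here the shift is justified by the analyticity of the components $\Psi_n^{\pmb\a},\Phi_n^{\pmb\a}$ in the polystrip defining $\D_0$ and of $f^+,g^+$ (since $\supp f\subset W_\L$, $\supp g\subset W_\R$), but it now crosses the poles of $S^{\up\b}_{\b\up}$ inside $\strip$. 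By maximal analyticity these poles are all simple for the elementary indices $\up,\bar\up$, so the shift leaves only a finite sum of residue terms at the fusion angles $\zeta=i\theta_{\up\b}$ and the associated $t$-channel poles, with coefficients proportional to $R^\g_{\up\b}=\operatorname{Res}_{\zeta=i\theta_{\up\b}}S^{\up\b}_{\b\up}(\zeta)$.

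It remains to show that these residue terms are cancelled by the bound-state contributions. Using the explicit formulas \eqref{eq:chi1} and \eqref{chip} for the action of $\chi$ and $\chi'$ on $S$-symmetric vectors, the pure bound-state part $\langle\chi(f)\Phi,\chi'(g)\Psi\rangle-\langle\chi'(g)\Phi,\chi(f)\Psi\rangle$ carries the factor $\eta^\g_{\up\b}\overline{\eta^\g_{\up\b}}=2\pi|R^\g_{\up\b}|$ and, after matching the shifted arguments by means of \ref{p:angles} and \ref{p:bootstrap} and the angle relation $\theta_{(\up\a)}+\theta_{(\bar\up\a)}=\pi$, reproduces exactly the residue picked up in the previous step. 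The relative phase is fixed by the positivity \ref{positive-residue}, which gives $|R^\g_{\up\b}|=-iR^\g_{\up\b}$, so that $2\pi i R^\g_{\up\b}=-2\pi|R^\g_{\up\b}|$ and the two contributions cancel. The two mixed parts, each carrying a single factor $\eta^\g_{\up\b}$, then sum to zero by a further contour deformation, using \eqref{RpR} (that is, $R^{\prime\g}_{\alpha\beta}=-R^\g_{\alpha\beta}$) together with crossing \ref{crossing} and parity \ref{parity} to identify $s$- and $t$-channel contributions.

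The main obstacle is the residue bookkeeping in the second and third steps: tracking which poles the contour crosses, pairing each residue with the correct fusion term, and verifying all signs via \ref{crossing}, \ref{parity}, \ref{CPTinvariance} and \eqref{RpR}. Two features of the hypotheses are indispensable. First, the restriction of $f,g$ to the elementary indices $\up,\bar\up$ guarantees, by maximal analyticity, that $S^{\up\b}_{\b\up}$ has only simple poles in $\strip$, so that the residue calculus is valid; the double poles occurring for composite indices would invalidate it. Second, the zeros of the $\D_0$-components at $\theta_i-\theta_j=0$ are needed to keep every shifted integrand in $L^2$: after the contour shift the factors $S^{\up\b}_{\b\up}(\theta_j-\theta_k+i\theta_{\up\b})$ develop poles at $\theta_j=\theta_k$, and these are precisely cancelled by the zeros built into \eqref{domain}, so that the remaining integrals converge by \ref{regularity} \cite[Proposition E.7]{Tanimoto15-2}.
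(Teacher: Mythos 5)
Your overall strategy is the same as the paper's: expand $\fct=\phi+\chi$, $\fct'=\phi'+\chi'$, compute the pure field, mixed, and pure bound-state contributions separately, pick up residues when shifting contours across the physical strip, and cancel them using positivity of the residues, the restriction to the elementary indices (so that only simple poles occur), and the zeros built into $\D_0$. The mixed parts indeed cancel against each other (in the paper, $[\chi(f),z'(J_1g^-)]$ against $[z(J_1f^-),\chi'(g)]$, with the creation parts following by adjoints), and your identification of the two indispensable hypotheses is exactly right.

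There is, however, one structural imprecision in your cancellation pattern that would surface if you carried the plan out. The full bound-state part $\<\chi(f)\Phi,\chi'(g)\Psi\>-\<\chi'(g)\Phi,\chi(f)\Psi\>$ does \emph{not} reproduce the $[\phi(f),\phi'(g)]$ residues in their entirety: only its equal-index pieces $\<\chi'_\nu(g)\Phi,\chi_\nu(f)\Psi\>$ with $\nu=\up,\bar\up$ (the analogues of \eqref{eq:shift1}, \eqref{eq:shift2}) cancel against the residue terms \eqref{comm:phiphip}. The cross pieces such as $\<\chi'_{\bar\up}(g)\Phi,\chi_\up(f)\Psi\>$ and $\<\chi_{\bar\up}(f)\Phi,\chi'_\up(g)\Psi\>$ (\eqref{eq:shift3}, \eqref{eq:shift4}) have no counterpart among the field residues and must cancel among themselves; this separate cancellation is nontrivial and rests on the fusion-angle identities $\theta_{(\b'_k\up)}=\theta_{(\b_k\bar\up)}$ and $\pi-\theta_{(\b_k\up)}-\theta_{(\bar\up\b'_k)}=\theta_{\up\b_k}$, which are proved by writing $\b_k=\up^\ell$ and invoking \ref{p:elempart} and \eqref{relations:elempart} — precisely the place where the fourth elementary-particle property $\theta_{(\up\a)}+\theta_{(\bar\up\a)}=\pi$ that you cite actually enters. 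A smaller point: the relation \eqref{RpR}, $R^{\prime\g}_{\a\b}=-R^\g_{\a\b}$, is used (together with \ref{p:bootstrap} and \ref{CPTinvariance}) to rewrite the $t$-channel residues inside the $[\phi,\phi']$ computation itself, not in the mixed-part cancellation; the mixed parts cancel via contour shifts combined with crossing \ref{crossing}, CPT \ref{CPTinvariance} and the angle relations \ref{p:angles}, \ref{p:CPT}. Neither point invalidates your approach, but both are pieces of the "residue bookkeeping" you would need to get exactly right for the proof to close.
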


\begin{proof}
We should keep in mind that we are assuming that the vectors $\Phi$ and $\Psi$ are already $S$-symmetric.
Furthermore, we recall:
 \begin{align*}
  \fct(f) &= \phi(f) + \chi(f) = z^\dagger(f^+) + \chi(f) + z(J_1f^-), \\
  \fct'(g) &= \phi'(g) + \chi'(g) = z'^\dagger(g^+) + \chi'(g) + z'(J_1g^-).
 \end{align*}
Therefore, the (weak) commutator $[ \fct(f), \fct'(g)]$ expands into several terms that we will compute individually.

\begin{flushleft} 
 {\it The commutator $[\chi(f), z'(J_1g^-)]$}
\end{flushleft}
We can compute this commutator in the strong form, as there arises no problem of domains
(c.f.\! \cite[Theorem 3.4]{CT15-1}).
 
We recall the action of $z'$ for all $\varphi \in\H_1$,
\begin{align*}
\left( z'(J_1\varphi) \Psi\right)_n^{\pmb{\gamma}}(\pmb{\theta}) &= \left(Jz(J_1 \varphi)J \Psi \right)_n^{\pmb{\gamma}}(\pmb{\theta}) \\
        &= \overline{\left( z(J_1\varphi) J\Psi\right)_n^{\bar{\gamma_n} \ldots \bar{\gamma_1}}(\theta_n, \ldots, \theta_1)}\\
	&= \sqrt{n+1}\sum_{\b}\int d\theta'\, \overline{\varphi^{\bar \beta}(\theta')(J\Psi)^{\beta \bar{\gamma_n} \ldots \bar{\gamma_1}}_{n+1}(\theta',\theta_n,\ldots,\theta_1)}\\
	&=\sqrt{n+1}\sum_{\b}\int d\theta'\, \overline{\varphi^{\bar\beta}(\theta')} \Psi_{n+1}^{\pmb{\gamma}\bar\beta}(\pmb{\theta},\theta')\\
&=\sqrt{n+1}\sum_\b\int d\theta'\, \overline{\varphi^{\beta}(\theta')} \Psi_{n+1}^{\pmb{\gamma}\beta}(\pmb{\theta},\theta'),
\end{align*}
where we used the antilinearity of $z$ in the third equality and we renamed $\bar \beta$ into $\beta$ in the last equality.

Hence, using the alternative expression of $\chi(f)$ in Section \ref{alternative} we compute
\begin{align*}
  & (\chi(f)z'(J_1g^-)\Psi_n)^{\gamma_1 \ldots \gamma_{n-1}}(\theta_1,\cdots,\theta_{n-1}) \\
  &=-\sum_{k=1}^{n-1}\sum_{\alpha \kappa_k} i\eta^{\gamma_k}_{\alpha \kappa_k} f^+_\alpha (\theta_k +i\theta_{(\alpha \kappa_k)}) (Jz(J_1g^-)J\Psi_n)^{\gamma_1 \ldots \kappa_k \ldots \gamma_{n-1}}(\theta_1, \ldots, \theta_k -i\theta_{(\kappa_k \alpha)}, \ldots, \theta_{n-1})\\
&\;\; \times \left( \prod_{j=1}^{k-1}S^{\gamma_j \alpha}_{\alpha \gamma_j}(\theta_k -\theta_j +i\theta_{(\alpha \kappa_k)})\right)\\
&=- \sqrt{n} \sum_{k=1}^{n-1} \sum_{\alpha \kappa_k \beta} i\eta^{\gamma_k}_{\alpha \kappa_k} f^+_\alpha (\theta_k +i\theta_{(\alpha \kappa_k)}) \\
&\;\; \times \int d\theta'\, g^-_{\bar\beta}(\theta') (\Psi_{n})^{\gamma_1 \ldots \kappa_k \ldots \gamma_{n-1} \beta}(\theta_1 \ldots \theta_k -i\theta_{(\kappa_k \alpha)} \ldots \theta_{n-1},\theta')\left( \prod_{j=1}^{k-1}S^{\gamma_j \alpha}_{\alpha \gamma_j}(\theta_k -\theta_j +i\theta_{(\alpha \kappa_k)})\right).
\end{align*}
Similarly, the second term in the commutator gives
\begin{align*}
  & (z'(J_1g^-)\chi(f)\Psi_n)^{\gamma_1 \ldots \gamma_{n-1}}(\theta_1,\cdots,\theta_{n-1}) \\
&=\sqrt{n}\sum_\beta \int d\theta'\, g^-_{\bar\beta}(\theta') (\chi(f)\Psi_n)^{\gamma_1 \ldots \gamma_{n-1}\beta}(\theta_1,\ldots ,\theta_{n-1}, \theta')\\
&=-\sqrt{n}\sum_{k=1}^{n-1}\sum_{\alpha \kappa_k \beta} i\eta^{\gamma_k}_{\alpha \kappa_k} f^+_\alpha (\theta_k +i\theta_{(\alpha \kappa_k)}) \\
&\quad \times \int d\theta'\, g^-_{\bar\beta}(\theta')(\Psi_n)^{\gamma_1 \ldots \kappa_k \ldots \gamma_{n-1} \beta}(\theta_1 \ldots \theta_k -i\theta_{(\kappa_k \alpha)} \ldots \theta_{n-1} ,\theta')\left( \prod_{j=1}^{k-1}S^{\gamma_j \alpha}_{\alpha \gamma_j}(\theta_k -\theta_j +i\theta_{(\alpha \kappa_k)})\right)\\
&\quad -\sqrt{n}\sum_{\alpha \kappa \beta} i\eta^{\beta}_{\alpha \kappa} \int d\theta'\, g^-_{\bar\beta}(\theta') f^+_\alpha (\theta' +i\theta_{(\alpha \kappa)})
(\Psi_n)^{\gamma_1 \ldots \gamma_{n-1} \kappa}(\theta_1  \ldots \theta_{n-1},\theta' -i\theta_{(\kappa \alpha)})\\
&\qquad \times \left( \prod_{j=1}^{n-1}S^{\gamma_j \alpha}_{\alpha \gamma_j}(\theta' -\theta_j +i\theta_{(\alpha \kappa)})\right).
 \end{align*}
Combining the two above expressions, we find that $2\times(n-1)$ terms cancel each other and
\begin{align*}
&([\chi(f),z'(J_1g^-)]\Psi_n)^{\gamma_1 \ldots \gamma_{n-1}}(\theta_1,\cdots,\theta_{n-1})\nonumber\\
&\;=\;\sqrt{n} \sum_{\alpha \kappa \beta} i\eta^{\beta}_{\alpha \kappa} \int d\theta'\, g^-_{\bar\beta}(\theta') f^+_\alpha (\theta' +i\theta_{(\alpha \kappa)})
(\Psi_n)^{\gamma_1 \ldots \gamma_{n-1} \kappa}(\theta_1  \ldots \theta_{n-1},\theta' -i\theta_{(\kappa \alpha)})\nonumber\\
&\qquad \times \left( \prod_{j=1}^{n-1}S^{\gamma_j \alpha}_{\alpha \gamma_j}(\theta' -\theta_j +i\theta_{(\alpha \kappa)})\right) \nonumber\\
&\; =\;\sqrt{n} \sum_{\alpha \kappa \beta} i\eta^{\beta}_{\alpha \kappa} \int d\theta'\, g^-_{\bar\beta}(\theta') f^+_\alpha (\theta' +i\theta_{(\alpha \kappa)})
(\Psi_n)^{\kappa \gamma_1 \ldots \gamma_{n-1}}(\theta' -i\theta_{(\kappa \alpha)},\theta_1  \ldots \theta_{n-1})\nonumber\\
&\qquad \times  \left( \prod_{k=1}^{n-1}S^{\gamma_k \kappa}_{\kappa \gamma_k}(\theta' -\theta_k -i\theta_{(\kappa \alpha)})\right)\left( \prod_{j=1}^{n-1}S^{\gamma_j \alpha}_{\alpha \gamma_j}(\theta' -\theta_j +i\theta_{(\alpha \kappa)})\right) \nonumber\\
&\; =\; \sqrt{n} \sum_{\alpha \kappa \beta} i\eta^{\beta}_{\alpha \kappa} \int d\theta'\, g^-_{\bar\beta}(\theta') f^+_\alpha (\theta' +i\theta_{(\alpha \kappa)})
(\Psi_n)^{\kappa \gamma_1 \ldots \gamma_{n-1}}(\theta' -i\theta_{(\kappa \alpha)},\theta_1  \ldots \theta_{n-1})\nonumber\\
&\qquad \times  \left( \prod_{j=1}^{n-1}S^{\gamma_j \beta}_{\beta \gamma_j}(\theta' -\theta_j )\right), \\
\end{align*}
where in the second equality we used $S$-symmetry of $\Psi_n$ and in the third equality the Bootstrap equation.
Now we fix the indices of $f$ and $g$ to those $\up,\bar\up$ of elementary particles.
In the expression above, $\eta^\beta_{\alpha\kappa}$ is non-zero if and only if
$\alpha = \up, \beta = \bar\up$ or $\alpha = \bar\up, \beta = \up$ (because $(\up\k)\to\up$ etc.\! is impossible
by \ref{p:table}).
If $\eta^{\up}_{\bar\up\kappa} \neq 0$, then there is a fusion process $(\up\up)\to\kappa$
and by assumption $\kappa$ is the unique index for which
$S^{\kappa \up}_{\up \kappa}(\zeta)$ has a simple pole in $\RR+i[0,\theta_{(\kappa\up)}]$,
where $\theta_{(\kappa\up)}$ is independent of $\kappa$. Now, by the properties of $\dom(\fct(f))\cap\dom(\fct'(g)))$
we can continue as follows:
\begin{align}\label{chizp}
 &\; =\; \sqrt{n} \underset{\kappa\in\I}{\sum_{\nu = \up,\bar\up}} i\eta^{\nu}_{\bar\nu \kappa} \int d\theta'\, g^-_{\bar\nu}(\theta') f^+_{\bar\nu} (\theta' +i\theta_{(\bar\nu \kappa)})
 (\Psi_n)^{\kappa \gamma_1 \ldots \gamma_{n-1}}(\theta' -i\theta_{(\kappa \bar\nu)},\theta_1  \ldots \theta_{n-1})\nonumber\\
 &\qquad \times  \left( \prod_{j=1}^{n-1}S^{\gamma_j \nu}_{\nu \gamma_j}(\theta' -\theta_j )\right). \nonumber \\
&\; =\; \sqrt{n} \underset{\kappa\in\I}{\sum_{\nu = \up,\bar\up}} i\eta^{\nu}_{\bar\nu \kappa}
\int d\theta'\, g^-_{\bar\nu}(\theta' +i\theta_{(\kappa \bar\nu)}) f^+_{\bar\nu} (\theta' +i\theta_{\bar\nu \kappa})
(\Psi_n)^{\kappa \gamma_1 \ldots \gamma_{n-1}}(\theta',\theta_1  \ldots \theta_{n-1})\nonumber\\
&\qquad \times  \left( \prod_{j=1}^{n-1}S^{\gamma_j \nu}_{\nu \gamma_j}(\theta' -\theta_j +i\theta_{(\kappa \bar\nu)} )\right) \nonumber\\
&\; =\; \sqrt{n} \underset{\kappa\in\I}{\sum_{\nu = \up,\bar\up}} i\eta^{\nu}_{\bar\nu \kappa} \int d\theta'\, g^+_{\bar\nu}(\theta' -i\pi +i\theta_{(\kappa \bar\nu)}) f^+_{\bar\nu} (\theta' +i\theta_{\bar\nu \kappa})
(\Psi_n)^{\kappa \gamma_1 \ldots \gamma_{n-1}}(\theta',\theta_1  \ldots \theta_{n-1})\nonumber\\
&\qquad \times  \left( \prod_{j=1}^{n-1}S^{\gamma_j \nu}_{\nu \gamma_j}(\theta' -\theta_j +i\theta_{(\kappa \bar\nu)} )\right),
\end{align}
where in the second equality we applied the shift $\theta' \rightarrow \theta' +i\theta_{(\kappa \bar\nu)}$
which is legitimate because, when $S^{\gamma_j \nu}_{\nu \gamma_j}$ has no pole there is no problem and
when $S^{\gamma_j \nu}_{\nu \gamma_j}$ has a pole
at $\theta' -\theta_j +i\theta_{(\kappa \bar\nu)}$ then it forces $\gamma_j = \k$ and
the component of $\Psi \in \dom(\fct(f))\cap \dom(\fct'(g))$ has a zero at $\theta' - \theta_j = 0$
(as we explained before this Theorem \ref{theo:commutator} and because $\D_0 \subset \dom(\fct(f))\cap \dom(\fct'(g))$)
which compensates the simple pole of $S^{\gamma_j \nu}_{\nu \gamma_j}$. In the last equality we used the property that $g^-_{\bar\nu}(\zeta)=g^+_{\bar\nu}(\zeta \pm i\pi)$.
More precisely, we are applying the Cauchy theorem to vector-valued functions, which can be
justified by \cite[Lemma B.2]{CT15-1}, \ref{regularity} and \cite[Proposition E.7]{Tanimoto15-2}.

\begin{flushleft} 
 {\it The commutator $[z(J_1f^-), \chi'(g)]$}
 \end{flushleft}
The first term of this commutator gives
\begin{align*}
 & (z(J_1f^-) \chi'(g)\Psi_n)^{\gamma_1 \ldots \gamma_{n-1}}(\theta_1,\cdots,\theta_{n-1}) \\
 &\;=\; \sqrt n \sum_\kappa \int d\theta'\, f^-_{\bar\kappa}(\theta')(\chi'(g) \Psi_n)^{\kappa \gamma_1 \ldots \gamma_{n-1}}(\theta', \theta_1 \ldots \theta_{n-1})\\
  &\;=\; -\sqrt n \sum_{k=1}^{n-1}\sum_{\alpha \beta_k \kappa} i\eta^{\gamma_k}_{\alpha \beta_k}g^+_\alpha\left(\theta_k - i\theta_{(\alpha \beta_k)}\right)
\int d\theta'\, f^-_{\bar\kappa}(\theta')\prod_{k+1\le j \le n-1} S^{\alpha \gamma_j}_{\gamma_j \alpha}\left(\theta_j-\theta_k + i\theta_{(\alpha \beta_k)}\right) \\
&\qquad \times ( \Psi_n)^{\kappa \gamma_1 \ldots \beta_k \ldots \gamma_{n-1}}\left(\theta',\theta_1,\cdots,\theta_k +i\theta_{(\beta_k \alpha)},\cdots, \theta_{n-1}\right) \\
  &\;-\; \sqrt n \sum_{\alpha \beta \kappa}i\eta^{\kappa}_{\alpha \beta}\int d\theta'\, f^-_{\bar\kappa}(\theta') g^+_\alpha\left(\theta' - i\theta_{(\alpha \beta)}\right)\prod_{1\le j \le n-1} S^{\alpha \gamma_j}_{\gamma_j \alpha}\left(\theta_j-\theta'+i \theta_{(\alpha \beta)}\right)\\
&\qquad \times  (\Psi_n)^{\beta \gamma_1 \ldots \gamma_{n-1}}\left(\theta'+i\theta_{(\beta \alpha)},\theta_1,\cdots,\theta_{n-1}\right),
 \end{align*}
where we used \eqref{chip} after renaming of the summation index $k$ into $n-k+1$. 

Similarly, we have
 \begin{align*}
  &( \chi'(g)z(J_1f^-)\Psi_n)^{\gamma_1 \ldots \gamma_{n-1}}(\theta_1,\cdots,\theta_{n-1}) \\
  &\;=\; -\sqrt n \sum_{k=1}^{n-1} \sum_{\alpha \beta_k} i\eta^{\gamma_k}_{\alpha \beta_k}g^+_\alpha\left(\theta_k - i\theta_{(\alpha \beta_k)}\right)
\prod_{k+1\le j \le n-1} S^{\alpha \gamma_j}_{\gamma_j \alpha}\left(\theta_j-\theta_k + i\theta_{(\alpha \beta_k)}\right)  \\
&\qquad \times (z(J_1 f^-) \Psi_n)^{\gamma_1 \ldots \beta_k \ldots \gamma_{n-1}}\left(\theta_1,\cdots,\theta_k +i\theta_{(\beta_k \alpha)},\cdots, \theta_{n-1}\right)\\
  &\;=\; -\sqrt n \sum_{k=1}^{n-1}\sum_{\alpha \beta_k \kappa} i\eta^{\gamma_k}_{\alpha \beta_k}g^+_\alpha\left(\theta_k - i\theta_{(\alpha \beta_k)}\right)
 \prod_{k+1\le j \le n-1} S^{\alpha \gamma_j}_{\gamma_j \alpha}\left(\theta_j-\theta_k + i\theta_{(\alpha \beta_k)}\right)\\
&\qquad \times \int d\theta'\, f^-_{\bar\kappa}(\theta')(\Psi_n)^{\kappa \gamma_1 \ldots \beta_k \ldots \gamma_{n-1}}\left(\theta',\theta_1,\cdots,\theta_k +i\theta_{(\beta_k \alpha)},\cdots, \theta_{n-1}\right).
 \end{align*}
Now we combine the two above expressions.
Using the assumption that $\a = \up,\bar\up$ (which are the nonzero components of $g$),
$\bar\kappa = \up,\bar\up$ (which are the nonzero components of $f$),
and noting that $\eta_{\k\b}^\k = 0$ by \ref{p:table} (because $\eta_{\a\b}^\g = i\sqrt{2\pi |R_{\a\b}^\g|}$),
we again only have to consider the cases where $\a = \up = \bar \kappa$ and $\a = \bar\up = \bar\kappa$
and obtain
\begin{align*}
&([z(J_1f^-), \chi'(g)]\Psi_n)^{\gamma_1 \ldots \gamma_{n-1}}(\theta_1,\cdots,\theta_{n-1}) \\
&\;=\;-\sqrt{n} \underset{\beta \in \I}{\sum_{\nu = \up,\bar\up}} i\eta^{\nu{}}_{\bar\nu \beta} \int d\theta'\, g^+_{\bar\nu}(\theta' -i\theta_{(\bar\nu \beta)}) f^-_{\bar\nu{}}(\theta')
(\Psi_n)^{\beta \gamma_1 \ldots \gamma_{n-1} }(\theta' +i\theta_{(\beta \bar\nu)},\theta_1  \ldots \theta_{n-1}) \\
&\qquad \times  \prod_{j=1}^{n-1}S^{\bar\nu \gamma_j }_{\gamma_j \bar\nu}(\theta_j -\theta' +i\theta_{(\bar\nu \beta)}) \\
&\;=\;-\sqrt{n} \underset{\beta \in \I}{\sum_{\nu = \up,\bar\up}} i\eta^{\nu{}}_{\bar\nu \beta} \int d\theta'\, g^+_{\bar\nu}(\theta' -i\theta_{\bar\nu \beta}) f^-_{\bar\nu{}} (\theta'-i\theta_{(\beta \bar\nu)})
(\Psi_n)^{\beta \gamma_1 \ldots \gamma_{n-1} }(\theta',\theta_1  \ldots \theta_{n-1}) \\
&\qquad \times  \prod_{j=1}^{n-1}S^{\bar\nu \gamma_j }_{\gamma_j \bar\nu}(\theta_j -\theta' +i\theta_{\bar\nu \beta}) \\
&\;=\;-\sqrt{n} \underset{\beta \in \I}{\sum_{\nu = \up,\bar\up}} i\eta^{\nu{}}_{\bar\nu \beta} \int d\theta'\, g^+_{\bar\nu}(\theta' -i\theta_{\bar\nu \beta}) f^+_{\bar\nu{}} (\theta'+i\pi -i\theta_{(\beta \bar\nu)})
(\Psi_n)^{\beta \gamma_1 \ldots \gamma_{n-1} }(\theta',\theta_1  \ldots \theta_{n-1}) \\
&\qquad \times  \prod_{j=1}^{n-1}S^{\bar\nu \gamma_j }_{\gamma_j \bar\nu}(\theta_j -\theta' +i\theta_{\bar\nu \beta}) \\
\end{align*}
where in the second equality we applied the shift $\theta' \rightarrow \theta' -i\theta_{(\beta \bar\nu)}$,
in the third equality we used the property that $f^+_{\bar\nu}(\zeta +i\pi) = f^-_{\bar\nu}(\zeta)$.
In order to carry out the shift in the second equality, we used again the properties of
$\Psi \in (\D_0 \subset) \dom(\fct(f))\cap\dom(\fct'(g))$
explained before Theorem \ref{theo:commutator} and the $L^2$-valued Cauchy theorem \cite[Lemma B.2]{CT15-1}
(note that
$S^{\bar\nu \gamma_j }_{\gamma_j \bar\nu}(\theta_j -\theta' +i\theta_{\bar\nu \beta})
= S^{\gamma_j \nu}_{\nu \gamma_j}(\theta' -\theta_j +i\pi -i\theta_{\bar\nu \beta})
= S^{\gamma_j \nu}_{\nu \gamma_j}(\theta' -\theta_j +i\theta_{(\beta\bar\nu)})$
as we will see in the next paragraph and we can apply the same argument as
in the computation of the commutator $[\chi(f), z'(J_1g^-)]$: the existence of a pole
forces $\beta = \gamma_j$, and it gets canceled by the zero of $\Psi$. If there is no pole,
there is no problem in the shift).

Now we compare this to Eq.~\eqref{chizp}. By \ref{crossing} and \ref{CPTinvariance}
we have $S^{\bar\nu \gamma_j }_{\gamma_j \bar\nu}(\theta_j -\theta' +i\theta_{\bar\nu \beta})
= S^{\gamma_j \nu}_{\nu \gamma_j}(\theta' -\theta_j +i\pi -i\theta_{\bar\nu \beta})$.
As a consequence of \ref{p:angles} and \ref{p:CPT} we also have
$\theta_{(\beta \bar\nu)} = \pi -\theta_{\bar\nu \beta}$ and
$\theta_{\bar\nu \beta} = \theta_{\nu \bar\beta} = \pi -\theta_{(\beta \bar\nu)}$,
so that the $S, f^+, g^+$ factors coincide.

Hence, the commutator $[z(J_1f^-), \chi'(g)]$ coincides up to a sign with Eq.~\eqref{chizp}, and therefore they cancel.
 
\begin{flushleft} 
{\it The commutators $[z^\dagger(f^+), \chi'(g)]$ and $[\chi(f), z'^\dagger(g^+)]$}
\end{flushleft}
These commutators  are the adjoints of the previous commutators,
therefore, they cancel as a weak commutator as a consequence of the above computations.

\begin{flushleft} 
{\it The commutator $[\phi(f), \phi'(g)]$}
\end{flushleft}

This commutator has been computed in \cite{LS14}. In the present case where $S$ is diagonal, the result in \cite{LS14} reduces to the following expression:
\begin{align}\label{LScomm}
&([\phi'(g), \phi(f)]\Psi_n)^{\pmb{\gamma}}(\theta_1,\cdots,\theta_n) \nonumber\\
&\;=\; \sum_{\nu}\int d\theta'\, \left(  g^-_{\bar\nu}(\theta') \left(\prod_{l=1}^n S^{\gamma_l \nu}_{\nu \gamma_l}(\theta' -\theta_l)\right) f^+_\nu(\theta')
- g^+_\nu(\theta') \left(\prod_{l=1}^n  \overline{S^{\gamma_l \nu}_{\nu \gamma_l}(\theta' -\theta_l)}\right) f^-_{\bar\nu}(\theta')\right) \nonumber\\
&\quad\quad\quad\quad\times(\Psi_n)^{\pmb{\gamma}}(\theta_1, \ldots, \theta_n).
\end{align}
We note that the first term in the integrand is equal to the second term except for a shift of $+i\pi$ in $\theta'$.
Compared to \cite[Theorem 3.2]{LS14}, since $S$ has now poles in the physical strip,
we obtain residues when shifting the integration contour. 

Recall that we consider test functions $f,g$ whose only non-zero components correspond
to $\up$ and $\bar\up$.
In this case, the factor $S_{\nu\a}^{\a\nu}$ appearing in the expression of the commutator have at most two simple poles at
$\zeta = i\theta_{\alpha \nu}$ (if $(\a\nu)\to\b$ is a fusion)
and $\zeta=i\theta'_{\nu\bar\a}$ (if $(\nu\bar\a)\to\g$ is a fusion), where $\nu = \up,\bar\up$,
as specified in \ref{polestructure}.
Therefore, by recalling that $R_{\a\b}^\g = 0$ if $(\a\b)\to\g$ is \textit{not} a fusion, we have
\begin{align}\label{comm:phiphip}
&\frac{1}{2\pi i} ([\phi'(g),\phi(f)]\Psi_n)^{\pmb{\gamma}}(\theta_1, \ldots, \theta_n)\nonumber\\
&\; =\; \underset{\k,\k'\in\I}{\sum_{\nu = \up,\bar \up}} \left( \sum_{j=1}^n R_{\nu \gamma_j}^\k g^-_{\bar\nu}(\theta_j +i\theta_{\nu \gamma_j}) f^+_\nu (\theta_j +i\theta_{\nu \gamma_j}) \left( \prod_{\substack{k=1\\ k \neq j}}^n S^{\gamma_k \nu}_{\nu \gamma_k}(\theta_j +i\theta_{\nu \gamma_j}-\theta_k) \right)\right.\nonumber\\
&\; \quad\quad\quad\quad+\; \left.\sum_{j=1}^n R_{\bar\gamma_j\nu}^{\prime\k'} g^-_{\bar\nu}(\theta_j +i\theta'_{\bar\gamma_j\nu}) f^+_\nu (\theta_j +i\theta'_{\bar\gamma_j\nu}) \left( \prod_{\substack{k=1\\ k \neq j}}^n S^{\gamma_k \nu}_{\nu \gamma_k}(\theta_j +i\theta'_{\bar\gamma_j\nu}-\theta_k) \right)\right) \nonumber \\
&\quad\quad\quad\times (\Psi_n)^{\gamma_1 \ldots \gamma_n}(\theta_1, \ldots,\theta_n)\nonumber\\
&\; =\; \underset{\k,\k'\in\I}{\sum_{\nu = \up,\bar \up}} \left( \sum_{j=1}^n R_{\nu \gamma_j}^\k g^+_{\bar\nu}(\theta_j +i\theta_{\nu \gamma_j}-i\pi) f^+_\nu (\theta_j +i\theta_{\nu \gamma_j}) \left( \prod_{\substack{k=1\\ k \neq j}}^n S^{\gamma_k \nu}_{\nu \gamma_k}(\theta_j +i\theta_{\nu \gamma_j}-\theta_k) \right)\right.\nonumber\\
&\; \quad\quad\quad\quad-\; \left.\sum_{j=1}^n  R_{\gamma_j\bar\nu}^{\bar\k'} g^+_{\bar\nu}(\theta_j +i\theta'_{\gamma_j\bar\nu}-i\pi) f^+_\nu (\theta_j +i\theta'_{\gamma_j\bar\nu}) \left( \prod_{\substack{k=1\\ k \neq j}}^n S^{\gamma_k \nu}_{\nu \gamma_k}(\theta_j +i\theta'_{\gamma_j\bar\nu}-\theta_k) \right)\right) \nonumber \\
&\quad\quad\quad \times (\Psi_n)^{\gamma_1 \ldots \gamma_n}(\theta_1, \ldots,\theta_n), \nonumber\\
&\; =\; \underset{\k\in\I}{\sum_{\nu = \up,\bar \up}} \left( \sum_{j=1}^n R_{\nu \gamma_j}^\k g^+_{\bar\nu}(\theta_j +i\theta_{\nu \gamma_j}-i\pi) f^+_\nu (\theta_j +i\theta_{\nu \gamma_j}) \left( \prod_{\substack{k=1\\ k \neq j}}^n S^{\gamma_k \nu}_{\nu \gamma_k}(\theta_j +i\theta_{\nu \gamma_j}-\theta_k) \right)\right.\nonumber\\
&\; \quad\quad\quad\quad-\; \left.\sum_{j=1}^n  R_{\gamma_j\nu}^\k g^+_{\nu}(\theta_j +i\theta'_{\gamma_j\nu}-i\pi) f^+_{\bar\nu} (\theta_j +i\theta'_{\gamma_j\nu}) \left( \prod_{\substack{k=1\\ k \neq j}}^n S^{\gamma_k \bar\nu}_{\bar\nu \gamma_k}(\theta_j +i\theta'_{\gamma_j\nu}-\theta_k) \right)\right) \nonumber \\
&\quad\quad\quad \times (\Psi_n)^{\gamma_1 \ldots \gamma_n}(\theta_1, \ldots,\theta_n),
\end{align}
where in the second equality we used the property that $g^-_{\bar\nu}(\zeta)=g^+_{\bar\nu}(\zeta \pm i\pi)$,
\ref{p:bootstrap}, CPT-symmetry on $R$'s and $\theta$'s and
in the last equality we just changed the dummy indices.

Note that the product of $S$ factors in Eq.~\eqref{LScomm} might raise some concern about the nature of the poles of the integrand and whether the sum of the residues over the simple poles in the first equality of Eq.~\eqref{comm:phiphip} is justified. We will argue as follows: If any pair of $\theta_l$'s does not coincide, then the integrand has two simple poles and the first equality in \eqref{comm:phiphip} follows. The complement of this set of $\theta_l$'s has Lebesgue measure zero, and therefore the equality still holds.

We also remark that the integral expression in \eqref{comm:phiphip} seems unbounded due to the poles of the $S$-factors, but since $\phi(f),\phi'(g)$ are bounded on vectors with finite particle number,
one can expect that these divergences cancel.
Indeed, for each values of $j,k$  the $S$-factors $S^{\gamma_k \nu}_{\nu \gamma_k}(\theta_j +i\theta_{\nu \gamma_j}-\theta_k)$
and $S^{\gamma_k \bar\nu}_{\bar\nu \gamma_k}(\theta_j +i\theta'_{\nu \gamma_j}-\theta_k)$ in the products above have a pole at
$\theta_j -\theta_k =0$ for $\gamma_j = \gamma_k$. Correspondingly, the terms with the values of $j,k$ exchanged have
also a pole at $\theta_j -\theta_k =0$, but with the negative of the previous residues.
The factors $f^+, g^+$ with exchanged variables $\theta_j, \theta_k$
coincide when $\theta_j = \theta_k$ and $\gamma_j = \gamma_k$. This implies that such terms cancel.

\begin{flushleft} 
{\it The commutator $[\chi(f), \chi'(g)]$}
\end{flushleft}

We compute this commutator between vectors $\Psi, \Phi$ with only $n$-particle components.
Recall the expressions of $\chi(f)$ and $\chi'(g)$ at the beginning of Sec.~\ref{chi},
where they are written as the sum of $n$ operators acting on different variables,
therefore, there are $n^2$ terms in each of
the scalar products $\<\chi'(g)\Phi, \chi(f)\Psi\>$ and $\<\chi(f)\Phi, \chi'(g)\Psi\>$.
Of these, the $n(n-1)$ terms in which the above-mentioned operators act on different variables
give exactly the same contribution (as in \cite{CT15-1}), which we denote by $C$, therefore,
they cancel in the commutator and hence are irrelevant.
The relevant part is
\begin{align*}
&\<\chi'(g)\Phi, \chi(f)\Psi\> - C \\
=& \sum_{\pmb{\gamma} \alpha \alpha'} \sum_{k=1}^n \sum_{\beta_k \beta'_{k}} \sqrt{2\pi |R_{\alpha \beta_k}^{\gamma_k}|}  \int d\theta_1 \ldots d\theta_n \left( \prod_{j=1}^{k-1}S^{\gamma_j \alpha}_{\alpha \gamma_j}(\theta_k -\theta_j +i\theta_{(\alpha \beta_k)}) \right)\\
&\; \times \; f^+_{\alpha}(\theta_k +i\theta_{(\alpha \beta_k)}) (\Psi_n)^{\gamma_1 \ldots \beta_k \ldots \gamma_n} (\theta_1, \ldots, \theta_k -i\theta_{(\beta_k \alpha)}, \ldots, \theta_n)\\
&\; \times\; \sqrt{2\pi |R_{\alpha' \beta'_k}^{\gamma_k}|} \overline{ \left( \prod_{p= k+1}^n S^{\alpha' \gamma_p}_{\gamma_p \alpha'}(\theta_p -\theta_k +i\theta_{(\alpha' \beta'_k)}) \right)}\\
&\; \times \; \overline{g^+_{\alpha'}(\theta_k -i\theta_{(\alpha' \beta'_k)})} \overline{(\Phi_n)^{\gamma_1 \ldots \beta'_k \ldots \gamma_n}(\theta_1, \ldots, \theta_k +i\theta_{(\beta'_k \alpha')}, \ldots, \theta_n)}\\
=& \sum_{\pmb{\gamma} \alpha \alpha'} \sum_{k=1}^n \sum_{\beta_k \beta'_{k}} \sqrt{2\pi |R_{\alpha \beta_k}^{\gamma_k}|} \int d\theta_1 \ldots d\theta_n \left( \prod_{j=1}^{k-1}S^{\gamma_j \alpha}_{\alpha \gamma_j}(\theta_k -\theta_j +i\theta_{\alpha \beta_k}) \right)\\
&\; \times \; f^+_{\alpha}(\theta_k +i\theta_{\alpha \beta_k}) (\Psi_n)^{\gamma_1 \ldots \beta_k \ldots \gamma_n} (\theta_1, \ldots, \theta_k, \ldots, \theta_n)\\
&\; \times\; \sqrt{2\pi |R_{\alpha' \beta'_k}^{\gamma_k}|} \left( \prod_{p= k+1}^n S_{\alpha' \gamma_p}^{\gamma_p \alpha'}(\theta_k -\theta_p  +i\theta_{(\beta_k \alpha)}+i\theta_{(\alpha' \beta'_k)}) \right)\\
&\; \times \; g^+_{\bar{\alpha'}}(\theta_k +i\theta_{(\alpha' \beta'_k)} +i\theta_{(\beta_k \alpha)}-i\pi) \overline{(\Phi_n)^{\gamma_1 \ldots \beta'_k \ldots \gamma_n}(\theta_1, \ldots, \theta_k +i\theta_{(\beta'_k \alpha')}-i\theta_{(\beta_k \alpha)}, \ldots, \theta_n)},
\end{align*}
where in the second equality we implemented the shift $\theta_k \rightarrow \theta_k +i\theta_{(\beta_k \alpha)}$,
we used parity symmetry \ref{parity}, unitarity \ref{unitarity}, hermitian analyticity \ref{hermitian}
and the property that $\overline{g^+_{\alpha'}(\zeta)} =g^-_{\bar{\alpha'}}(\bar\zeta) = g^+_{\bar{\alpha'}}(\zeta -i\pi)$. 
To perform the shift in $\theta_k$ we used the analyticity and decay properties of $f^+, g^+$ at infinity in the strip,
\cite[Lemma B.2]{CT15-1} and the property of $\Psi, \Phi \in \D_0$ explained before Theorem \ref{theo:commutator}.
This last property tells that $\Psi, \Phi$ have zeros on the real hyperplane,
so that the product of $\Psi, \Phi$ and the $S$ factors is analytic and bounded in $\theta_k$,
and that the final result, after the shift, remains $L^2$ (as explained before this theorem).

Similarly, using the definitions of $\chi(f)$ and $\chi'(g)$ at the beginning of Section \ref{chi}
we can compute the other term $\<\chi(f) \Phi, \chi'(g)\Psi\>$ in the commutator $[\chi(f),\chi'(g)]$ and obtain:
\begin{align*}
&\<\chi(f)\Phi, \chi'(g)\Psi\> - C\\
=& \sum_{\pmb{\gamma}, \alpha, \alpha'} \sum_{k=1}^n \sum_{\beta_k \beta'_k} \sqrt{2\pi |R_{\alpha \beta_k}^{\gamma_k}|} \int d\theta_1 \ldots d\theta_n \overline{\left( \prod_{j=1}^{k-1}S^{\gamma_j \alpha}_{\alpha \gamma_j}(\theta_k -\theta_j +i\theta_{(\alpha \beta_k)}) \right)}\\
&\; \times \; \overline{f^+_{\alpha}(\theta_k +i\theta_{(\alpha \beta_k)})} \overline{ (\Phi_n)^{\gamma_1 \ldots \beta_k \ldots \gamma_n} (\theta_1, \ldots, \theta_k -i\theta_{(\beta_k \alpha)}, \ldots, \theta_n)}\\
&\; \times\; \sqrt{2\pi |R_{\alpha' \beta'_k}^{\gamma_k}|} \left( \prod_{p= k+1}^n S^{\alpha' \gamma_p}_{\gamma_p \alpha'}(\theta_p -\theta_k +i\theta_{(\alpha' \beta'_k)}) \right)\\
&\; \times \; g^+_{\alpha'}(\theta_k -i\theta_{(\alpha' \beta'_k)})(\Psi_n)^{\gamma_1 \ldots \beta'_k \ldots \gamma_n}(\theta_1, \ldots, \theta_k +i\theta_{(\beta'_k \alpha')}, \ldots, \theta_n)\\
=&\sum_{\pmb{\gamma}, \alpha, \alpha'} \sum_{k=1}^n \sum_{\beta_k \beta'_k} \sqrt{2\pi |R_{\alpha \beta_k}^{\gamma_k}|} \int d\theta_1 \ldots d\theta_n \left( \prod_{j=1}^{k-1}S^{\gamma_j \bar\alpha}_{\bar\alpha \gamma_j}(\theta_k -\theta_j -i\theta_{(\beta'_k \alpha')} -i\theta_{(\alpha \beta_k)} +i\pi) \right)\\
&\; \times \; f^+_{\bar\alpha}(\theta_k -i\theta_{(\beta'_k \alpha')} -i\theta_{(\alpha \beta_k)} +i\pi) \overline{ (\Phi_n)^{\gamma_1 \ldots \beta_k \ldots \gamma_n} (\theta_1, \ldots, \theta_k +i\theta_{(\beta'_k \alpha')}-i\theta_{(\beta_k \alpha)}, \ldots, \theta_n)}\\
&\; \times\; \sqrt{2\pi |R_{\alpha' \beta'_k}^{\gamma_k}|} \left( \prod_{p= k+1}^n S_{\bar{\alpha'} \gamma_p}^{\gamma_p \bar{\alpha'}}(\theta_k -\theta_p -i\theta_{\alpha' \beta'_k} +i\pi) \right)\\
&\; \times \; g^+_{\alpha'}(\theta_k -i\theta_{\alpha' \beta'_k})(\Psi_n)^{\gamma_1 \ldots \beta'_k \ldots \gamma_n}(\theta_1, \ldots, \theta_k, \ldots, \theta_n),
\end{align*}
where in the second equality we implemented the shift $\theta_k \rightarrow \theta_k -i\theta_{(\beta'_k \alpha')}$,
we used parity symmetry \ref{parity}, unitarity \ref{unitarity}, hermitian analyticity \ref{hermitian}, crossing symmetry \ref{crossing}
and the property that $\overline{f^+_{\alpha}(\zeta)} =f^-_{\bar{\alpha}}(\bar\zeta) = f^+_{\bar{\alpha}}(\zeta +i\pi)$.
As before, the shift in $\theta_k$ and the fact that the result is $L^2$ is guaranteed by analyticity and
decay properties of $f^+, g^-$ at infinity in the strip, \cite[Lemma B.2]{CT15-1} and the zeros of the vectors
$\Psi, \Phi \in \D_0$.

We now exploit the properties of elementary particles $\up, \bar\up$.
We first consider the components $\chi_\up$ and $\chi'_\up$.
Since only those indices with processes $(\up\b_k)\to\g_k$ and $(\up\b_k')\to\g_k$
are taken into consideration in the sum, it follows that $\b_k = \b_k'$ (see the assumption on elementary particles
in Section \ref{S-matrix}).
Then the two terms of the commutator above simplify, up to irrelevant terms $C_{\up\up}$ etc.\! as before, as follows:
\begin{align}
&\<\chi'_\up(g)\Phi, \chi_\up(f)\Psi\> - C_{\up\up} \nonumber\\
&= -\sum_{\pmb{\gamma}} \sum_{k=1}^n \sum_{\beta_k} 2\pi i R_{\up \beta_k}^{\gamma_k} \int d\theta_1 \ldots d\theta_n \left( \prod_{j \neq k}S^{\gamma_j \up}_{\up \gamma_j}(\theta_k -\theta_j +i\theta_{\up \beta_k } ) \right) \nonumber\\
&\quad\quad \times \; f^+_{\up}(\theta_k +i\theta_{\up \beta_k} ) \overline{ (\Phi_n)^{\gamma_1 \ldots \beta_k \ldots \gamma_n} (\theta_1, \ldots, \theta_k, \ldots, \theta_n)} g^+_{\bar\up}(\theta_k +i\theta_{\up \beta_k} -i\pi) \nonumber\\
&\quad\quad \times (\Psi_n)^{\gamma_1 \ldots \beta_k \ldots \gamma_n}(\theta_1, \ldots, \theta_k, \ldots, \theta_n), \label{eq:shift1} \\
&\<\chi(f)_\up\Phi, \chi'_\up(g)\Psi\> - C_{\up\up}' \nonumber \\
&= -\sum_{\pmb{\gamma}} \sum_{k=1}^n \sum_{\beta_k} 2\pi i R_{\up \beta_k}^{\gamma_k} \int d\theta_1 \ldots d\theta_n \left( \prod_{j \neq k}S^{\gamma_j \bar\up}_{\bar\up \gamma_j}(\theta_k -\theta_j+i\pi -i\theta_{\up \beta_k}) \right) \nonumber\\
&\quad\quad \times \; f^+_{\bar\up}(\theta_k -i\theta_{\up \beta_k} +i\pi) \overline{ (\Phi_n)^{\gamma_1 \ldots \beta_k \ldots \gamma_n} (\theta_1, \ldots, \theta_k, \ldots, \theta_n)}
g^+_{\up}(\theta_k -i\theta_{\up \beta_k}) \nonumber\\
&\quad\quad \times (\Psi_n)^{\gamma_1 \ldots \beta_k \ldots \gamma_n}(\theta_1, \ldots, \theta_k, \ldots, \theta_n), \label{eq:shift2}
\end{align}
where we used the fact that $|R_{\up \beta}^\gamma| = -iR_{\up \beta}^\gamma$, ``positivity'' of residues.
Analogous expressions can be obtained for $\<\chi'_{\bar\up}(g)\Phi, \chi_{\bar\up}(f)\Psi\>$
and $\<\chi(f)_{\bar\up}\Phi, \chi'_{\bar\up}(g)\Psi\>$, namely, $\up$ should be simply replaced by $\bar\up$.

We compare them with the commutator \eqref{comm:phiphip}.
By $\theta'_{\k \gamma_j} = \pi -\theta_{\k\gamma_j} = \pi- \theta_{\gamma_j\k}$
and exchanging the dummy indices $\b_k\leftrightarrow\g_k$, $j\leftrightarrow k$,
these two commutators cancel each other.

Next, we consider the remaining components.
We first look at the following combination:
\begin{align}
&\<\chi_{\bar\up}'(g)\Phi, \chi_\up(f)\Psi\> - C_{\bar\up\up}\nonumber \\
&=\sum_{\pmb{\gamma}} \sum_{k=1}^n \sum_{\beta_k, \beta'_k} \sqrt{2\pi |R_{\up \beta_k}^{\gamma_k}|}\sqrt{2\pi |R_{\bar\up \beta'_k}^{\gamma_k}|} \nonumber\\
&\quad \times \int d{\pmb \theta}
\left( \prod_{j=1}^{ k-1}S^{\gamma_j \up}_{\up \gamma_j}(\theta_k -\theta_j +i\theta_{\up \beta_k } ) \right)\left( \prod_{p=k+1}^{ n}S^{\gamma_p \bar\up}_{\bar\up \gamma_p}(\theta_k -\theta_p +i\theta_{( \beta_k \up) } +i\theta_{(\bar\up \beta'_k)}) \right) \nonumber\\
&\quad\quad \times \; f^+_{\up}(\theta_k +i\theta_{\up \beta_k} ) g^+_{\up}(\theta_k +i\theta_{(\bar\up \beta'_k)} +i\theta_{(\beta_k \up)} -i\pi) \nonumber\\
&\quad\quad  \times  \overline{ (\Phi_n)^{\gamma_1 \ldots \beta'_k \ldots \gamma_n} (\theta_1, \ldots, \theta_k +i\theta_{(\beta'_k \bar\up)}-i\theta_{(\beta_k \up)}, \ldots, \theta_n)}(\Psi_n)^{\gamma_1 \ldots \beta_k \ldots \gamma_n}(\theta_1, \ldots, \theta_k, \ldots, \theta_n), \label{eq:shift3}\\
&\<\chi_{\bar\up}(f)\Phi, \chi_\up'(g)\Psi\> - C_{\bar\up\up}'\nonumber \\
&=\sum_{\pmb{\gamma}} \sum_{k=1}^n \sum_{\beta_k, \beta'_k}\sqrt{2\pi |R_{\bar\up \beta_k}^{\gamma_k}|} \sqrt{2\pi |R_{\up \beta'_k}^{\gamma_k}|} \nonumber\\
&\quad \times \int d{\pmb \theta}
\left( \prod_{j=1}^{ k-1}S^{\gamma_j \up}_{\up \gamma_j}(\theta_k -\theta_j+i\pi -i\theta_{(\beta'_k \up)}-i\theta_{(\bar\up \beta_k)}) \right) \left( \prod_{p=k+1}^{n}S^{\gamma_p \bar\up}_{\bar\up \gamma_p}(\theta_k -\theta_p+i\pi -i\theta_{\up \beta'_k}) \right) \nonumber\\
&\quad\quad \times \;f^+_{\up}(\theta_k -i\theta_{(\beta'_k \up)}-i\theta_{(\bar\up \beta_k)} +i\pi) g^+_{\up}(\theta_k -i\theta_{\up \beta'_k}) \nonumber\\
&\quad\quad \times  \overline{ (\Phi_n)^{\gamma_1 \ldots \beta_k \ldots \gamma_n} (\theta_1, \ldots, \theta_k +i\theta_{(\beta'_k \up)} -i\theta_{(\beta_k \bar\up)}, \ldots, \theta_n)}(\Psi_n)^{\gamma_1 \ldots \beta'_k \ldots \gamma_n}(\theta_1, \ldots, \theta_k, \ldots, \theta_n). \label{eq:shift4}
\end{align}
By exchanging the dummy indices $\b_k\leftrightarrow\b_k'$ and using the properties of fusion angles for elementary particles (Section \ref{S-matrix}), we will show below that these two terms coincide, which means that they cancel each other in the commutator $[\chi'(g),\chi(f)]$. 

We claim that there hold the following equalities between fusion angles:
\begin{equation*}
\theta_{(\b'_k \up)} = \theta_{(\b_k \bar\up)}, \quad \pi - \theta_{(\b_k \up)} -\theta_{(\bar \up \b'_k)} = \theta_{ \up \b_k }.
\end{equation*}
The first equality follows by the third property of the elementary particles in Sec.~\ref{S-matrix}.  
We prove the second equality as follows. 
Since $R^\g_{\a\b} \neq 0$ only if $(\a\b)\to \g$ is a fusion process, we only have to take into consideration
those indices with $(\b_k \up)\to\g_k$ and $(\bar \up\b'_k)\to\g_k$, which forces $(\gamma_k \upsilon) \to \beta'_k$.
Using the fourth property of the elementary particles in Sec.~\ref{S-matrix}, the second equality becomes
\begin{equation*}
-\theta_{(\b_k \up)} + \theta_{(\up \b'_k)} = \theta_{\up \b_k}.
\end{equation*}
Now, by \ref{p:elempart} we can write  $\beta_k = \upsilon^{\ell}$ for some $\ell$,  and using the first two relations in Eq.~\eqref{relations:elempart}, we find
\begin{equation*}
-\theta_0 + (\ell +2)\theta_0 = (1+ \ell)\theta_0,
\end{equation*}
which concludes the proof of the second equality.

Similarly, we have
\begin{align*}
&\<\chi_{\up}(f)\Phi, \chi_{\bar\up}'(g)\Psi\> - C_{\bar\up\up}'\nonumber \\
&=\sum_{\pmb{\gamma}} \sum_{k=1}^n \sum_{\beta_k, \beta'_k}\sqrt{2\pi |R_{\up \beta_k}^{\gamma_k}|} \sqrt{2\pi |R_{\bar\up \beta'_k}^{\gamma_k}|} \nonumber\\
&\quad \times \int d{\pmb \theta} 
\left( \prod_{j=1}^{ k-1}S^{\gamma_j \bar\up}_{\bar\up \gamma_j}(\theta_k -\theta_j+i\pi -i\theta_{(\beta'_k\bar\up)}-i\theta_{(\up \beta_k)}) \right) \left( \prod_{p=k+1}^{n}S^{\gamma_p \up}_{\up \gamma_p}(\theta_k -\theta_p+i\pi -i\theta_{\bar\up \beta'_k}) \right)\nonumber\\
&\quad\quad \times \;f^+_{\bar\up}(\theta_k -i\theta_{(\beta'_k\bar\up)}-i\theta_{(\up \beta_k)} +i\pi) g^+_{\bar\up}(\theta_k -i\theta_{\bar\up \beta'_k}) \nonumber\\
&\quad\quad \times  \overline{ (\Phi_n)^{\gamma_1 \ldots \beta_k \ldots \gamma_n} (\theta_1, \ldots, \theta_k +i\theta_{(\beta'_k \bar\up)} -i\theta_{(\beta_k \up)}, \ldots, \theta_n)}(\Psi_n)^{\gamma_1 \ldots \beta'_k \ldots \gamma_n}(\theta_1, \ldots, \theta_k, \ldots, \theta_n),\nonumber\\
&\<\chi'_\up(g)\Phi, \chi_{\bar\up}(f)\Psi\> - C_{\bar\up\up} \nonumber \\
&=\sum_{\pmb{\gamma}} \sum_{k=1}^n \sum_{\beta_k, \beta'_k} \sqrt{2\pi |R_{\bar\up \beta_k}^{\gamma_k}|}\sqrt{2\pi |R_{\up \beta'_k}^{\gamma_k}|} \nonumber\\
&\quad \times \int d{\pmb \theta} 
\left( \prod_{j=1}^{ k-1}S^{\gamma_j \bar\up}_{\bar\up \gamma_j}(\theta_k -\theta_j +i\theta_{\bar\up \beta_k } ) \right)\left( \prod_{p=k+1}^{ n}S^{\gamma_p \up}_{\up \gamma_p}(\theta_k -\theta_p +i\theta_{( \beta_k \bar\up) } +i\theta_{(\up \beta'_k)}) \right) \nonumber\\
&\quad\quad \times \; f^+_{\bar\up}(\theta_k +i\theta_{\bar\up \beta_k} ) g^+_{\bar\up}(\theta_k +i\theta_{(\up \beta'_k)} +i\theta_{(\beta_k \bar\up)} -i\pi) \nonumber\\
&\quad\quad  \times  \overline{ (\Phi_n)^{\gamma_1 \ldots \beta'_k \ldots \gamma_n} (\theta_1, \ldots, \theta_k +i\theta_{(\beta'_k \up)}-i\theta_{(\beta_k \bar\up)}, \ldots, \theta_n)}(\Psi_n)^{\gamma_1 \ldots \beta_k \ldots \gamma_n}(\theta_1, \ldots, \theta_k, \ldots, \theta_n),
\end{align*}
and these terms cancel each other by the same properties of fusion angles for elementary particles.

\end{proof}
We would like to remark that,
the restriction to the elementary particles
is crucial for the vanishing of the commutator $[\fct(f), \fct'(g)]$.
Indeed, in examples where we drop such a restriction one can see that the commutator $[\phi(f), \phi'(g)]$ contains contributions from
double or higher poles, which do not vanish.

In the literature on the form factor program, the commutation relations between local fields
have been claimed for $Z(N)$-Ising model with some specific vectors \cite[Appendix D]{BFK06},
but only a short comment on the general case is given.

\subsubsection*{A larger domain of weak commutativity}
The domain $\D_0$ is properly included in $\dom(\fct(f))\cap\dom(\fct'(g))$.
We point out here that for models with two species of particles, for example
the $Z(3)$-Ising model and the $A_2$-affine Toda field theory, this restriction is unnecessary.

\begin{theorem}\label{th:commutator2}
We consider that class of examples with only two species of particles as in Section \ref{examples}.
Let $f$ and $g$ be test functions supported in $W_\L$ and $W_\R$, respectively, and with the property that $f=f^*$ and $g=g^*$.
Furthermore, assume that $f,g$ have components $f_\alpha =0$ and $g_\alpha =0$ for all $\alpha \in \I$
except the indices $\up,\bar\up$ corresponding to the elementary particles.

Then, for each $\Phi, \Psi$ in $\dom(\fct(f))\cap\dom(\fct'(g))$, we have
 \[
 \<\fct(f)\Phi, \fct'(g)\Psi\> = \<\fct'(g)\Phi, \fct(f)\Psi\>.
 \]
\end{theorem}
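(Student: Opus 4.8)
The plan is to re-run the commutator computation in the proof of Theorem~\ref{theo:commutator} essentially unchanged, and to isolate the \emph{single} place where the stronger domain $\D_0$ was actually used, namely the justification of the contour shifts in $\theta_k$ occurring in each piece of the weak commutator $[\fct(f),\fct'(g)]$. First I would observe that the analyticity and $L^2$ conditions built into the definition \eqref{domain} of $\D_0$ are automatic on the intersection domain: because $f,g$ are supported on $\up,\bar\up$ and the angle $\theta_{(\b\up)}$ is independent of $\b$, membership in $\dom(\chi(f))$ (resp.\ $\dom(\chi'(g))$) already forces every component $\Psi_n^{\pmb\a}$ to admit an $L^2$ analytic continuation to $-i\theta_{(\b\up)}$ (resp.\ $+i\theta_{(\b\up)}$) in each variable. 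Hence the genuinely extra content of $\D_0$ is only the presence of zeros on the diagonals $\theta_i=\theta_j$, and the task reduces to producing those zeros for free out of properties shared by \emph{every} vector of $\dom(\fct(f))\cap\dom(\fct'(g))$.

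The structural input is the rigidity of the two-species fusion table: the only fusions are $(\up\up)\to\bar\up$ and $(\bar\up\bar\up)\to\up$, while $(\up\bar\up)$ does not fuse. Consequently the index $\beta_k$ fused by $\chi_\up$ (resp.\ $\chi_{\bar\up}$) is forced to be $\up$ (resp.\ $\bar\up$), and the created index $\gamma_k$ is $\bar\up$ (resp.\ $\up$). From this I would deduce at once that the mixed contributions \eqref{eq:shift3} and \eqref{eq:shift4} to $[\chi(f),\chi'(g)]$, which demand an index $\gamma_k$ reachable from both $(\up\,\cdot\,)$ and $(\bar\up\,\cdot\,)$, have no admissible summand and vanish identically; thus only the diagonal pieces \eqref{eq:shift1}, \eqref{eq:shift2} and their $\up\leftrightarrow\bar\up$ counterparts remain, and they cancel against the residue terms \eqref{comm:phiphip} of $[\phi(f),\phi'(g)]$ exactly as in the proof of Theorem~\ref{theo:commutator}.

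It then remains to legitimise the shifts in $\theta_k$ producing the surviving terms. Here I would use that, for two species, every pole of the $\theta_k$-integrand lying on a real diagonal $\theta_j=\theta_k$ occurs only at a configuration in which two indices coincide---either the fused index $\beta_k=\up$ equals a spectator index $\gamma_j$, or two created/spectator indices $\gamma_j,\gamma_k$ agree (recall $\gamma_k=\bar\up$ is fixed). At any such coincidence the relevant $S$-symmetric vector---$\Psi_n$, $\Phi_n$, or the honest $\H_n$-vectors $\chi_\up(f)\Psi$, $\chi'_\up(g)\Phi$---carries a zero at $\theta_j=\theta_k$, by the argument recalled just before Theorem~\ref{theo:commutator} (using $S$-symmetry together with $S^{\nu\nu}_{\nu\nu}(0)=-1$ from \ref{atzero}), which is a property of \emph{every} vector in $\dom(\fct(f))\cap\dom(\fct'(g))$ rather than a defining feature of $\D_0$. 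This zero cancels the simple pole, so the integrand stays analytic across the shift and, by \ref{regularity} and \cite[Proposition E.7]{Tanimoto15-2}, remains $L^2$ afterwards; the $[\chi(f),z'(J_1g^-)]$, $[z(J_1f^-),\chi'(g)]$ and adjoint pieces are handled identically, the relevant coincidence being the $\gamma_j=\k$ already produced in the general proof.

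The main obstacle is precisely the verification underlying this last step: that with two species \emph{every} real-diagonal pole of the integrand is tied to a coincidence of indices, so that the coincidence zeros available on the full intersection domain genuinely suffice and the hand-imposed zeros of $\D_0$ become redundant. This is exactly what breaks down for three or more species, where $\chi_\up$ can fuse with several distinct $\beta_k$, the imaginary shifts $\theta_{(\up\beta_k)}$ take several values by \eqref{relations:elempart}, and a spectator pole can reach a real diagonal at a configuration with all indices distinct---forcing one back to the zeros of $\D_0$. Matching the pole positions of $S^{\up\beta}_{\beta\up}$ (maximal analyticity: only simple poles, at most two in the strip, as in \ref{polestructure}) against these shifts is the technical heart of the argument.
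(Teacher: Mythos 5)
Your proposal is correct and takes essentially the same route as the paper: it isolates the contour shifts behind \eqref{eq:shift1}--\eqref{eq:shift4} as the only place where $\D_0$ entered, and then uses the rigidity of the two-species fusion table to reduce every problematic pole to a coincidence of indices, where the zeros supplied by $S$-symmetry and \ref{atzero} are available on all of $\dom(\fct(f))\cap\dom(\fct'(g))$. Your two refinements---that the mixed terms \eqref{eq:shift3} and \eqref{eq:shift4} have empty index sums when there are only two species, and that lower-edge poles are absorbed by the coincidence zeros of the $S$-symmetric vectors $\chi_\up(f)\Psi$, $\chi'_\up(g)\Phi$---merely sharpen what the paper dispatches with ``an analogous argument works''.
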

\begin{proof}
 Most of the proof of Theorem \ref{theo:commutator} works for $\dom(\fct(f))\cap\dom(\fct'(g))$
 and the only points where we used the properties of $\D_0$ were 
 the shifts of integral contours of the terms
 \eqref{eq:shift1}, \eqref{eq:shift2}, \eqref{eq:shift3} and \eqref{eq:shift4}.

 As for \eqref{eq:shift1}, we have $\up = 1$ and the problem with the shift occurs
 only if $R^{\g_k}_{\up\b_k} \neq 0$, hence $\b_k = 1$ and the $S^{\g_j 1}_{1\g_j}$-factors have poles
 at $\theta_{\up\beta_k} = \theta_{11} = \frac{2\pi}3$,
 which happens only if $\g_j = 1$ ($\g_j = 3$ does not exist). But in this case, $\Psi$ has zeros which compensate the poles
 and the shifts are legitimate as we saw before Theorem \ref{theo:commutator}.

 An analogous argument works for \eqref{eq:shift2}, \eqref{eq:shift3} and \eqref{eq:shift4}.
\end{proof}

On the other hand, we show in Appendix \ref{failure} that
our argument for the weak commutativity on $\dom(\fct(f))\cap\dom(\fct'(g))$ fails for the $Z(4)$-Ising model
or any other models with more than two species of particles.

\subsubsection*{Reeh-Schlieder property}
In the models we consider, there are many ``composite'' particles, but
we can construct weakly commuting polarization-free generators only for those particles
which correspond to elementary particles $\up,\bar\up$.
Therefore, it is important to know whether these operators generate the whole Hilbert space.
We do expect this, yet we can give a complete argument only for models with two species of particles.
These include the $Z(3)$-Ising model and the $A_2$-affine Toda field theory.
Let us give a proof for the particular cases and sketch how far we can go in general.

In the operator-algebraic approach, we are interested in the following question.
Let us suppose that for each $g$ such that $g_\a = 0$ for $\a \neq \up,\bar\up$,
there is a self-adjoint extension of $\fct'(g)$, which we denote by the same symbol,
such that $\fct'$ is covariant with respect to $U$.
Suppose also that, for each $f$ such that $f_\a = 0, \a \neq \up,\bar\up$,
$\fct(f)$ has a nice self-adjoint extension, such that
$\fct'(g)$ and $\fct(f)$ \textit{strongly commute} (which we have not proved yet).
We consider the von Neumann algebra
\[
 \M = \{e^{i\fct'(g)}: g_\a = 0 \text{ for } \a \neq \up, \bar\up \text{ and }\supp g \subset W_\R,\},
\]
and we wish to show that $\overline{\M\Omega} = \H$.
Actually, as $\M$ is an algebra of bounded operators containing the identity operator $\1$,
we can freely use the fact that $\overline{\M\Omega} = \M\overline{\M\Omega}$.
Furthermore, by the assumed covariance it holds that $\Ad U(a,0)(\M) \subset \M$
and $U(a,0)\M\Omega \subset \M\Omega$ for $a\in W_\R$.

By the standard Reeh-Schlieder argument, it suffices to show that $\{U(a,0)\M\Omega: a \in \RR^2\}$ is
total in $\H$. Indeed, let us take $\Psi$ from the orthogonal complement of $\M\Omega$.
Then, for any element $x\in\M$, consider the function
$f_{\Psi,x}(a) = \<U(a,0)x\Omega, \Psi\>$ in $\RR^2$.
By the spectral condition for $U$, it has an analytic continuation to $\RR^2 + iV_+$, which is
equal to $0$ for $a \in W_\R$. Now by the edge of the wedge theorem, this function
can be further continued to a complex open region including $\RR^2$, which is $0$ on the two-dimensional real open set
$W_\R$, therefore, must be $0$. This implies that $\Psi$ is orthogonal to any vector
$U(a,0)x\Omega$ for arbitrary $a \in \RR^2, x\in\M$, therefore, must be zero by assumption.

Note that this allows us to consider smearing $U(f)x\Omega := \int d^2a\,f(a)U(a,0)x\Omega$
with a test function $f$. $U$ is strongly continuous,
hence $U(f)x\Omega \in \overline{\lspan\{U(a,0)\M\Omega: a \in \RR^2\}}$.
As $U(f)$ is a bounded operator, we only have to show that
\[
\overline{\lspan\{\left(U(f) + U(a,0)\right)\overline{\M\Omega}: a \in \RR^2, \supp f \subset \RR^2\}} = \H.
\]

Take first the one-particle space. For $\nu = \up,\bar\up$ we have $\fct'(g)\Omega \in \H_1$
and this is in the above closure because $i\fct'(g)\Omega = \frac{d}{dt}e^{it\fct'(g)}\Omega$ and
$\Omega$ is in the domain of $\fct'(g)$ \cite[Theorem VIII.7]{RSII}.
The reality condition on $g$ is that $g_\a(a) = \overline{g_{\overline \a}(a)}$.
For a real scalar-valued function $h$, the pair of functions
$g_{1,\up}(a) = h(a), g_{1,\bar\up}(a) = h(a)$ satisfies the
reality condition, as well as another pair $g_{2,\up(a)} = ih(a), g_{2,\bar\up}(a) = -ih(a)$. By taking the complex linear hull, $h^+ \in \H_{1,\up}$
and $h^+ \in \H_{1,\bar\up}$ are contained in $\overline{\M\Omega}$.
By the one-particle Reeh-Schlieder property (e.g. \cite[Theorem 3.2.1]{Longo08}),
it follows that $(\CC\Omega \oplus \H_{1,\up} \oplus \H_{1,\bar\up}) \subset \overline{\M\Omega}$.

We exhibit how other species of particles can be obtained.
Note that $\H_{1,\up} \cap \dom(\fct'(g))$, which includes $\H_{1,\up} \cap \dom(\chi'(g))$,
is a dense subspace. For any such vector $\xi \in \H_{1,\up} \cap \dom(\chi'(g))$,
$\fct'(g)\xi = \frac{d}{dt}e^{it\fct'(g)}\xi \in (\CC\Omega \oplus \H_{1,\k} \oplus \H_2)$
where $(\up\up)\to\k$.
Note that the joint spectrum of the vector $\fct'(g)\xi$ with respect to $U$
has disjoint three components: the point $(0,0)$, the mass hyperboloid of the single particle $\k$,
and the two-particle spectrum. Let $f$ be a test function whose Fourier transform
has a support in a neighborhood of the mass hyperboloid of the particle $\k$.
For such $f$, the spectrum of $U(f)\fct'(g)\xi$ is concentrated on the mass hyperboloid,
namely, we obtain the one-particle state of the particle $\k$.
As the range of $\chi'_1(g)$ is dense, we obtain $\H_{1,\k} \subset \overline{\M\Omega}$.
Next, note that $(\k\bar\up)\to\up$, and we already have $\H_{1,\up} \subset \overline{\M\Omega}$.
By repeating this way, we obtain the whole one-particle space $\H_1$,
as we assumed that all single-particle states are composite, namely,
can be obtained by fusing $\up$ repeatedly.

The rest is shown by induction. We can complete this passage only for
models with two species of particles.
Suppose that $\H_n \subset \overline{\M\Omega}$.
As above, one can find vectors $\Psi \in \H_n \cap \dom(\fct'(g))$.
Then by applying $\fct'(g)$ with various $g$ as above,
we see $P_{n+1}(\H_n\otimes \H_{1,\up}) \subset \overline{\M\Omega}$.
Together with $\H_{1,\bar\up},$ this completes the induction
for models with two species of particles.
Therefore, if we manage to show that $\fct(f)$ and $\fct'(g)$ commute strongly
(on an appropriate domain), we can obtain a Borchers triple $(\M,U,\Omega)$.

Let us explain what is missing for the general case.
Again, one can find vectors $\Psi \in \dom(\fct'(g)) \cap P_{n+1}(\H_n\otimes \H_{1,\up}) \subset \overline{\M\Omega}$
and separate the $P_{n+1}(\H_n\otimes\H_{1,\k})$-component of $\fct'(g)\Psi$
by smearing with test functions, but the energy-momentum spectrum of this vector
is restricted below a certain mass shell and we do not know how to fill the rest.
Formally, $\phi'(g)$ and $\chi'(g)$ have different momentum transfer, therefore,
it should be possible to separate them by smearing, but if the smearing function
has a non-compact support, there is no guarantee that there is a dense domain.
Therefore, a more refined analysis is needed.

\subsubsection*{(Non-)temperateness of the polarization-free generators}
It is not difficult to show that $\fct(f)$ is not temperate by showing that for any vector $\Psi$
in $\dom(\fct(f))$, $\dom(\fct(f))\Psi$ cannot be polynomially bounded
as in \cite[end of Section 3.3]{CT15-1}.
One can also formally derive the expression for a polarization-free generator $\fct(f)$
by assuming various domain properties of
form factors as in \cite[Section 4.2]{CT15-1} and the existence of a Haag-Kastler net \cite{BBS01}.
It does not in principle exclude the existence of a temperate PFG, but such an operator
must have either a much subtler domain that would invalidate most of the formal computations
or a complicated expansion in terms of $z^\dagger, z$ so that the advantage of the wedge-local
approach would be ruined (a simple expression for PFGs is also necessary for
the proofs of Bisognano-Wichmann property, c.f\! \cite{BL04} and for the modular nuclearity, c.f.\! \cite{Lechner08, Alazzawi14}).
Therefore, we expect that our $\fct(f), \fct'(g)$ should be the right polarization-free generators
for the Haag-Kastler nets for integrable models with bound states.

\section{Conclusions and outlook}\label{sec:conclusions}

Wedge-local observables play an important role in the operator-algebraic construction of a QFT.
In this work we have extended the results in \cite{CT15-1} to models with a factorizing S-matrix with
poles in the physical strip and with a richer particle spectrum.
In particular, we constructed weakly commuting candidate operators for wedge-local observables for the $Z(N)$-Ising model and the
$A_{N-1}$-Toda field theories, which are examples of
a model with $N-1$ species of particles and of several bound states (depending on $N$)
where the S-matrix is of ``diagonal'' form.
Here, candidates for wedge-local observables are obtained as a multi-particle generalization of the operators introduced in \cite{CT15-1}
by the addition of the ``bound state'' operator to the fields of Lechner-Sch\"utzenhofer \cite{LS14}
and by restricting to components corresponding only to particles of species $1$ and $N-1$. 

Moreover, by assuming the existence of nice self-adjoint extensions,
we can show the Reeh-Schlieder property for models with two species of particles.
We also saw that the domain of weak commutativity can be larger for this case.
They include the $Z(3)$-Ising model and the $A_2$-affine Toda field theory.
For models with genuine bound states, we need a more sophisticated method.

A major open problem is to prove that such nice self-adjoint extensions exist and to show the strong commutativity
of $\fct(f)$ and $\fct'(g)$. Some partial results in this direction are currently available only for
scalar S-matrices with bound states (e.g. the Bullough-Dodd model) in \cite{Tanimoto15-1}.
This is a natural first step towards the construction of the corresponding wedge-algebras and to prove
the existence of strictly local observables through intersection of a shifted right and left wedge.
For the scalar S-matrices, the existence of self-adjoint extension is immediate for some choice of $f$.
The situation will be different for general S-matrices, as the proof at hand for scalar S-matrices relies on
certain positivity argument for $\chi(f)$.
Besides, for S-matrices without poles, the models can be realized as the deformations of the free field \cite{Lechner12}.
This point of view seems difficult to maintain for S-matrices with poles, at least by the same techniques,
since our candidates for wedge-local observables are unbounded on each subspace with the fixed particle number,
while the deformed fields of Lechner are bounded there.
As a further step, a proof of modular nuclearity must be also substantially modified from that of the scalar S-matrices,
as it exploits again the positivity of $\chi(f)$ \cite{CT15-3}.

Another interesting problem would be the extension of our construction to integrable models
with ``non-diagonal'' S-matrices, e.g., the Thirring model. We expect that weak wedge-commutativity holds at least for certain components of the PFGs, for example when restricted to
a truncated version of this model where we allow the so-called ``soliton'', ``anti-soliton'' and only one ``breather''. Also, it holds in the case
where there are only ``breathers'' (since this subpart of the S-matrix is in fact ``diagonal''), referred to as the sine-Gordon model.
We will present the last named result elsewhere \cite{CT-sine}.
More general affine Toda field theories and their quantum group symmetry should be also investigated.

\subsubsection*{Acknowledgement}
We thank Bert Schroer for interesting comments, and Karim Shedid for pointing out a typo.
D.C.\! thanks Henning Bostelmann for helpful discussions.
Y.T.\! is supported by Grant-in-Aid for JSPS fellows 25-205.

\appendix
\section{Failure of weak commutativity on the intersection domain}\label{failure}
Here we show that Theorem \ref{theo:commutator} does not hold for $\dom(\fct(g))\cap\dom(\fct'(g))$
if we consider the $Z(4)$-Ising model.

Let us take $\Psi, \Phi \in \H_2$ with only nonzero components $\Psi^{1,3}$ and $\Psi^{3,1}$ (we use the comma to separate indices so not to confuse them with two digit numbers).
Recall that $S$-symmetry means $\Psi^{1,3}(\theta_1,\theta_2) = S_{3,1}^{1,3}(\theta_2-\theta_1)\Psi^{3,1}(\theta_2,\theta_1)$.
The condition $\Psi\in\dom(\fct(f))\cap\dom(\fct'(g))$ only says that
their components should have certain analytic continuations, therefore,
one can choose one of the components, say $\Psi^{1,3}$, quite arbitrarily.
Especially, it generically does not have a zero at $\theta_2 - \theta_1 = 0$.

We take symmetric multi-component test functions $f,g$ with nonzero components with $\up,\bar\up$.
It is straightforward to calculate:
\begin{align*}
 2(\chi_1(f_1)\Psi)^{2,3}(\pmb \theta) &= \sqrt{2\pi |R_{1,1}^2|}f_1^+(\theta_1 + i\theta_{(1,1)})\Psi^{1,3}(\theta_1-i\theta_{(1,1)},\theta_2), \\
 2(\chi_1(f_1)\Psi)^{3,2}(\pmb \theta) &= \sqrt{2\pi |R_{1,1}^2|}S_{1,3}^{3,1}(\theta_2-\theta_1+i\theta_{(1,1)})f_1^+(\theta_2 + i\theta_{(1,1)})\Psi^{3,1}(\theta_1, \theta_2-i\theta_{(1,1)}), \\
 2(\chi_3(f_3)\Psi)^{2,1}(\pmb \theta) &= \sqrt{2\pi |R_{3,3}^2|}f_3^+(\theta_1 + i\theta_{(3,3)})\Psi^{3,1}(\theta_1-i\theta_{(3,3)},\theta_2), \\
 2(\chi_3(f_3)\Psi)^{1,2}(\pmb \theta) &= \sqrt{2\pi |R_{3,3}^2|}S_{1,3}^{3,1}(\theta_2-\theta_1+i\theta_{(3,3)})f_3^+(\theta_2 + i\theta_{(3,3)})\Psi^{1,3}(\theta_1, \theta_2-i\theta_{(3,3)}), \\
 2(\chi'_1(g_1)\Phi)^{3,2}(\pmb \theta) &= \sqrt{2\pi |R_{1,1}^2|}g_1^+(\theta_2 - i\theta_{(1,1)})\Phi^{3,1}(\theta_1, \theta_2+i\theta_{(1,1)}), \\
 2(\chi'_1(g_1)\Phi)^{2,3}(\pmb \theta) &= \sqrt{2\pi |R_{1,1}^2|}S_{3,1}^{1,3}(\theta_2-\theta_1+i\theta_{(1,1)})g_1^+(\theta_1 - i\theta_{(1,1)})\Phi^{1,3}(\theta_1+i\theta_{(1,1)},\theta_2), \\
 2(\chi'_3(g_3)\Phi)^{1,2}(\pmb \theta) &= \sqrt{2\pi |R_{3,3}^2|}g_3^+(\theta_2 - i\theta_{(3,3)})\Phi^{1,3}(\theta_1,\theta_2+i\theta_{(3,3)}), \\
 2(\chi'_3(g_3)\Phi)^{2,1}(\pmb \theta) &= \sqrt{2\pi |R_{3,3}^2|}S_{3,1}^{1,3}(\theta_2-\theta_1+i\theta_{(3,3)})g_3^+(\theta_1 - i\theta_{(3,3)})\Phi^{3,1}(\theta_1+i\theta_{(3,3)},\theta_2). \\
\end{align*}
By recalling that $R_{1,1}^2 = R_{3,3}^2 =: R$ and $\theta_{(1,1)} = \theta_{(3,3)} =: \l$, we obtain
\begin{align*}
 &\frac1{\pi R}\<\chi'(g)\Phi,\chi(f)\Psi\> \\
 &= \int d\theta_1d\theta_2\;\overline{g_1^+(\theta_2 - i\l)\Phi^{3,1}(\theta_1, \theta_2+i\l)} \\
 &\quad\quad\quad\quad \times S_{1,3}^{3,1}(\theta_2-\theta_1+i\l)f_1^+(\theta_2 + i\l)\Psi^{3,1}(\theta_1, \theta_2-i\l) \\
 &\quad+ \int d\theta_1d\theta_2\;\overline{g_3^+(\theta_2 - i\l)\Phi^{1,3}(\theta_1, \theta_2+i\l)} \\
 &\quad\quad\quad\quad \times S_{1,3}^{3,1}(\theta_2-\theta_1+i\l)f_3^+(\theta_2 + i\l)\Psi^{1,3}(\theta_1, \theta_2-i\l) \\
 &= \int d\theta_1d\theta_2\;S_{1,3}^{3,1}(\theta_2-\theta_1+i\l) \\
 &\quad\quad\quad\quad \times \left(\begin{array}{c}
                                    \overline{g_1^+(\theta_2 - i\l)\Phi^{3,1}(\theta_1, \theta_2+i\l)}f_1^+(\theta_2 + i\l)\Psi^{3,1}(\theta_1, \theta_2-i\l) \\
                                    +\overline{g_3^+(\theta_2 - i\l)\Phi^{1,3}(\theta_1, \theta_2+i\l)}f_3^+(\theta_2 + i\l)\Psi^{1,3}(\theta_1, \theta_2-i\l)
                                    \end{array}
                               \right).
\end{align*}
From $S$-symmetry, we can only infer that $\Psi^{1,3}(\theta,\theta) = \Psi^{3,1}(\theta,\theta)$
as $S^{1,3}_{3,1}(0) = 1$, and therefore, the integrand above has generically a pole at $\theta_2-\theta_1+2\l i = 0$
coming from the $S^{1,3}_{3,1}$-factor, which invalidates the application of the Cauchy theorem.
Namely, there are these additional terms (the residues) in the weak commutator
$\<\chi'(g)\Phi,\chi(f)\Psi\> - \<\chi(f)\Phi,\chi'(g)\Psi\>$, which do not vanish for generic
$f,g,\Psi$.

This, however, does not exclude the possibility that $\fct(f)$ and $\fct'(g)$ strongly commute
on a better domain.

{\small
\newcommand{\etalchar}[1]{$^{#1}$}
\def\cprime{$'$} \def\polhk#1{\setbox0=\hbox{#1}{\ooalign{\hidewidth
  \lower1.5ex\hbox{`}\hidewidth\crcr\unhbox0}}}

}


\begin{thebibliography}{BCK{\etalchar{+}}92}

\bibitem[Ala14]{Alazzawi14}
Sabina Alazzawi.
\newblock Deformations of quantum field theories and the construction of
  interacting models.
\newblock 2014.
\newblock Ph.D. thesis, Universit\"at Wien, available at \\
  \url{http://arxiv.org/abs/1503.00897}.

\bibitem[BBS01]{BBS01}
Hans-J{\"u}rgen Borchers, Detlev Buchholz, and Bert Schroer.
\newblock Polarization-free generators and the {$S$}-matrix.
\newblock {\em Comm. Math. Phys.}, 219(1):125--140, 2001.
\newblock available at \url{http://arxiv.org/abs/hep-th/0003243}.

\bibitem[BCK{\etalchar{+}}92]{BCKKS92}
H.W. Braden, H.S. Cho, J.D. Kim, I.G. Koh, and R.~Sasaki.
\newblock Singularity analysis in {$A_n$} affine toda theories.
\newblock {\em Prog. Theor. Phys.}, 88(6):1205--1212, 1992.
\newblock available at \url{http://arxiv.org/abs/hep-th/9207025}.

\bibitem[BFK06]{BFK06}
Hratchya Babujian, Angela Foerster, and Michael Karowski.
\newblock Exact form factors in integrable quantum field theories: the scaling
  {$Z(N)$}-{I}sing model.
\newblock {\em Nuclear Phys. B}, 736(3):169--198, 2006.
\newblock available at \url{http://arxiv.org/abs/hep-th/0510062}.

\bibitem[BK03]{BK03}
H.~Babujian and M.~Karowski.
\newblock Exact form factors for the scaling {$Z_N$}-{I}sing and the affine
  {$A_{N-1}$}-{T}oda quantum field theories.
\newblock {\em Phys. Lett. B}, 575(1-2):144--150, 2003.
\newblock available at \url{http://arxiv.org/abs/hep-th/0309018}.

\bibitem[BL04]{BL04}
Detlev Buchholz and Gandalf Lechner.
\newblock Modular nuclearity and localization.
\newblock {\em Ann. Henri Poincar{\'e}}, 5(6):1065--1080, 2004.
\newblock available at \url{http://arxiv.org/abs/math-ph/0402072}.

\bibitem[BT13]{BT13}
Marcel Bischoff and Yoh Tanimoto.
\newblock Construction of {W}edge-{L}ocal {N}ets of {O}bservables through
  {L}ongo-{W}itten {E}ndomorphisms. {II}.
\newblock {\em Comm. Math. Phys.}, 317(3):667--695, 2013.
\newblock available at \url{http://arxiv.org/abs/1111.1671}.

\bibitem[BT15]{BT15}
Marcel Bischoff and Yoh Tanimoto.
\newblock Integrable {QFT} and {L}ongo-{W}itten endomorphisms.
\newblock {\em Ann. Henri Poincar\'e}, 16(2):569--608, 2015.
\newblock available at \url{http://arxiv.org/abs/1305.2171}.

\bibitem[CDD56]{CDD56}
L.~Castillejo, R.~H. Dalitz, and F.~J. Dyson.
\newblock Low's scattering equation for the charged and neutral scalar
  theories.
\newblock {\em Phys. Rev.}, 101:453--458, 1956.
\newblock published version at \url{http://dx.doi.org/10.1103/PhysRev.101.453}.

\bibitem[CT15]{CT15-1}
Daniela Cadamuro and Yoh Tanimoto.
\newblock Wedge-{L}ocal {F}ields in {I}ntegrable {M}odels with {B}ound
  {S}tates.
\newblock {\em Comm. Math. Phys.}, 340(2):661--697, 2015.
\newblock available at \url{http://arxiv.org/abs/1502.01313}.

\bibitem[CT16a]{CT15-3}
Daniela Cadamuro and Yoh Tanimoto.
\newblock Modular nuclearity and the inverse scattering problem for integrable
  QFT with bound states.
\newblock in preparation.

\bibitem[CT16b]{CT-sine}
Daniela Cadamuro and Yoh Tanimoto.
\newblock Wedge-local obervables in the sine-{G}ordon model.
\newblock in preparation.

\bibitem[DF77]{DF77}
W.~Driessler and J.~Fr{\"o}hlich.
\newblock The reconstruction of local observable algebras from the euclidean
  green's functions of relativistic quantum field theory.
\newblock In {\em Annales de L'Institut Henri Poincare Section Physique
  Theorique}, volume~27, pages 221--236, 1977.
\newblock available at \url{https://eudml.org/doc/75959}.

\bibitem[DT11]{DT11}
Wojciech Dybalski and Yoh Tanimoto.
\newblock Asymptotic completeness in a class of massless relativistic quantum
  field theories.
\newblock {\em Comm. Math. Phys.}, 305(2):427--440, 2011.
\newblock available at \url{http://arxiv.org/abs/1006.5430}.

\bibitem[Kor00]{Korff:2000}
Christian Korff.
\newblock {Lie algebraic structures in integrable models, affine Toda field
  theory}.
\newblock 2000.
\newblock Ph.D. Thesis, Freie Universit{\"a}t Berlin. Available at \\{\tt
  http://arxiv.org/abs/hep-th/0008200}.

\bibitem[KS79]{KS79}
R.~K\"oberle and J.A. Swieca.
\newblock Factorizable $Z(N)$ models.
\newblock {\em Physics Letters B}, 86(2):209 -- 210, 1979.
\newblock published version at
  \url{http://dx.doi.org/10.1016/0370-2693(79)90822-0}, \\ available also at
  \url{http://users.physik.fu-berlin.de/~kamecke/ps/swieca4.pdf}.

\bibitem[Lec03]{Lechner03}
Gandalf Lechner.
\newblock Polarization-free quantum fields and interaction.
\newblock {\em Lett. Math. Phys.}, 64(2):137--154, 2003.
\newblock available at \url{http://arxiv.org/abs/hep-th/0303062}.

\bibitem[Lec08]{Lechner08}
Gandalf Lechner.
\newblock Construction of quantum field theories with factorizing
  {$S$}-matrices.
\newblock {\em Comm. Math. Phys.}, 277(3):821--860, 2008.
\newblock available at \url{http://arxiv.org/abs/math-ph/0601022}.

\bibitem[Lon08]{Longo08}
Roberto Longo.
\newblock Real {H}ilbert subspaces, modular theory, {${\rm SL}(2,{\bf R})$} and
  {CFT}.
\newblock In {\em Von {N}eumann algebras in {S}ibiu: {C}onference
  {P}roceedings}, pages 33--91. Theta, Bucharest, 2008.
\newblock available at
  \url{http://www.mat.uniroma2.it/longo/Lecture_Notes_files/LN-Part1.pdf}.

\bibitem[Lec12]{Lechner12}
Gandalf Lechner.
\newblock Deformations of quantum field theories and integrable models.
\newblock {\em Comm. Math. Phys.}, 312(1):265--302, 2012.
\newblock available at \url{http://arxiv.org/abs/1104.1948}.


\bibitem[LS14]{LS14}
Gandalf Lechner and Christian Sch{\"u}tzenhofer.
\newblock Towards an operator-algebraic construction of integrable global gauge
  theories.
\newblock {\em Ann. Henri Poincar\'e}, 15(4):645--678, 2014.
\newblock available at \url{http://arxiv.org/abs/1208.2366}.

\bibitem[LST13]{LST13}
Gandalf Lechner, Jan Schlemmer, and Yoh Tanimoto.
\newblock On the equivalence of two deformation schemes in quantum field
  theory.
\newblock {\em Lett. Math. Phys.}, 103(4):421--437, 2013.
\newblock available at \url{http://arxiv.org/abs/1209.2547}.

\bibitem[Que99]{Quella99}
Thomas Quella.
\newblock Formfactors and locality in integrable models of quantum field theory
  in 1+1 dimensions (in {G}erman).
\newblock 1999.
\newblock Diploma thesis, Freie Universit\"at Berlin. Available at 
  \url{http://www.thp.uni-koeln.de/~tquella/1999QuellaDiploma.pdf}.

\bibitem[RS75]{RSII}
Michael Reed and Barry Simon.
\newblock {\em Methods of modern mathematical physics. {I}. {F}ourier
{F}unctional analysis}.
\newblock Academic Press [Harcourt Brace Jovanovich Publishers], New York,
 second edition, 1980.

\bibitem[Sch97]{Schroer97}
Bert Schroer.
\newblock Modular localization and the bootstrap-formfactor program.
\newblock {\em Nuclear Phys. B}, 499(3):547--568, 1997.
\newblock available at \url{http://arxiv.org/abs/hep-th/9702145}.

\bibitem[Tan12]{Tanimoto12-2}
Yoh Tanimoto.
\newblock Construction of {W}edge-{L}ocal {N}ets of {O}bservables {T}hrough
  {L}ongo-{W}itten {E}ndomorphisms.
\newblock {\em Comm. Math. Phys.}, 314(2):443--469, 2012.
\newblock available at \url{http://arxiv.org/abs/1107.2629}.

\bibitem[Tan14]{Tanimoto14-1}
Yoh Tanimoto.
\newblock Construction of two-dimensional quantum field models through
  {L}ongo-{W}itten endomorphisms.
\newblock {\em Forum Math. Sigma}, 2:e7, 31, 2014.
\newblock available at \url{http://arxiv.org/abs/1301.6090}.

\bibitem[Tan15]{Tanimoto15-1}
Yoh Tanimoto.
\newblock Self-adjointness of bound state operators in integrable quantum field
  theory.
\newblock 2015.
\newblock available at \url{http://arxiv.org/abs/1508.06402}.

\bibitem[Tan16]{Tanimoto15-2}
Yoh Tanimoto.
\newblock Bound state operators and wedge-locality in integrable quantum field theories.
\newblock 2016.
\newblock availble at \url{http://arxiv.org/abs/1602.04696}.


\end{thebibliography}
\end{document}